%% file: sc_main.tex
\documentclass[10pt,twocolumn,superscriptaddress,aps,pra]{revtex4-2}

\usepackage{amsmath,amsfonts,amssymb}
\DeclareMathOperator{\Tr}{Tr}
\usepackage{mathtools}
\usepackage{setspace}
\usepackage{stmaryrd}

\usepackage{enumitem}
\setlist{nosep}
\newlist{enuma}{enumerate}{10}
\setlist[enuma]{label*=\arabic*.}
\setlistdepth{10} 

\usepackage{braket}
\usepackage{array}
\usepackage{physics}

\usepackage{amsthm} 

\usepackage[thinc]{esdiff} 

\usepackage{dsfont} 
\usepackage{bm}

\usepackage{graphicx}
\usepackage{xcolor}
\usepackage[normalem]{ulem} 

\usepackage[colorlinks, linkcolor=brown, citecolor=blue, urlcolor=magenta]{hyperref}
\usepackage{url}

\usepackage{verbatim}
\usepackage{alltt}
\usepackage{moreverb}

\usepackage{subfiles} 

 \usepackage{mdframed} 
 
\usepackage{float}

\allowdisplaybreaks

\usepackage[absolute,overlay]{textpos} 

\def\cmntsoff{}

\input{estimation_defs}

\newcommand{\lb}{{\rm lb}}
\newcommand{\ub}{{\rm ub}}
\newcommand{\EF}{\text{estimation factor}}
\newcommand{\thresh}{{\rm{th}}}
\newcommand{\obs}{{\rm{obs}}}
\newcommand{\nll}{{\rm{null}}}
\newcommand{\KL}{{\rm KL}}
\newcommand{\pA}{{\rm A}}
\newcommand{\pB}{{\rm B}}
\newcommand{\pC}{{\rm C}}

\begin{document}
\title{An efficient method for spot-checking quantum properties with sequential trials}
\author{Yanbao Zhang}\thanks{yzh@ornl.gov}
\affiliation{Quantum Information Science Section, Computational Sciences and Engineering Division, Oak Ridge National Laboratory, Oak Ridge, Tennessee 37831, USA}
\author{Akshay Seshadri} 
\affiliation{Department of Physics, University of Colorado, Boulder, Colorado 80309, USA}
\affiliation{National Institute of Standards and Technology, Boulder, Colorado 80305, USA}
\author{Emanuel Knill}
\affiliation{National Institute of Standards and Technology, Boulder, Colorado 80305, USA}
\affiliation{Center for Theory of Quantum Matter, University of Colorado, Boulder, Colorado 80309, USA}

\begin{abstract}
In practical situations, the reliability of quantum resources can be compromised due to complex generation processes or adversarial manipulations during transmission. Consequently, the trials generated sequentially in an experiment may exhibit non-independent  and non-identically distributed (non-i.i.d.) behavior. This non-i.i.d. behavior can introduce security concerns and result in faulty estimates when performing information tasks such as quantum key distribution, self-testing, verifiable quantum computation, and resource allocation in quantum networks. To certify the performance of such tasks, one can make a random decision in each trial, either  spot-checking some desired property or utilizing the quantum resource for the given task. However, a general method for certification with a sequence of non-i.i.d. spot-checking trials is  still missing. Here, we develop such a method.  This method not only works efficiently with a finite number of trials but also yields asymptotically tight certificates of performance.  Our analysis shows that even as the total number of trials approaches infinity, only a constant number of trials needs to be spot-checked on average to certify the average performance of the remaining trials at a specified confidence level. 
  \end{abstract}

\maketitle

\begin{textblock}{13.3}(1.4,15)\noindent\fontsize{7}{7}\selectfont\textcolor{black!30}{This manuscript has been co-authored by UT-Battelle, LLC, under contract DE-AC05-00OR22725 with the US Department of Energy (DOE). The US government retains and the publisher, by accepting the article for publication, acknowledges that the US government retains a nonexclusive, paid-up, irrevocable, worldwide license to publish or reproduce the published form of this manuscript, or allow others to do so, for US government purposes. DOE will provide public access to these results of federally sponsored research in accordance with the DOE Public Access Plan (http://energy.gov/downloads/doe-public-access-plan).}\end{textblock}

Given a finite number of experimental trials, one would like to
certify a specific property of a quantum resource with high
confidence. This is crucial not only for characterizing the quantum
resource itself but also for assessing its potential performance in a
subsequent quantum information processing task.  The property of
interest is often quantifiable by the expectation value of an
observable $O$ given the underlying quantum state $\rho$.  This
captures many cases such as certifying state
fidelity~\cite{fidelity_rev}, entanglement~\cite{ent_rev},
steerability~\cite{ste_rev}, or Bell nonlocality~\cite{bell_rev}.
Given a sufficiently informative measurement configuration, one can
then determine a function $X$ of measurement outcomes
such that the mean $\Exp(X)$ satisfies $\Exp(X)=\Tr(\rho O)$.
With such an $X$, the certification problem can
be formulated as obtaining a high-confidence lower or upper bound on
$\Exp(X)$ from a sequence of trials, each involving the generation and
measurement of the state $\rho$.  Ideally, the trials generate
independent and identically distributed (i.i.d.) random variables,
each with the distribution of $X$ determined by the state $\rho$ and
the measurement configuration.  However, in real life, both the
quantum states generated and measurements performed can autonomously
drift over time or be subject to manipulations by untrusted parties,
resulting in a sequence of non-i.i.d. random variables.  To counter
this practical problem, one can employ concentration inequalities
tailored for non-i.i.d. sequential random variables, such as the
Azuma-Hoeffding inequality~\cite{hoeffding1963, azuma1967}.  These
concentration inequalities have been used for certifying Bell
nonlocality~\cite{Gill2003a, Gill2003b, Zhang2011, Zhang2013,
  Bierhorst2014, Bierhorst2015, Elkousst2016} and device-independent
quantum randomness~\cite{Pironio2010, Pironio2013}, for example. They
can also help the security analysis of quantum key
distribution~\cite{Boileau2005, Tamaki2009, Tamaki2014, Tamaki2018,
  Mizutani2019,Pereira2019} and self-testing~\cite{Tan2017,
  Bancal2021, Gocanin2022} with finite data.

In many applications, it is desirable to measure only a subset of the
states generated in a set of trials for the purpose of certification,
either because the measurement is expensive or because the leftover
states are to be used in a protocol without having been destroyed or
perturbed by the certification measurement.  An example of the
  former is certifying randomness from random circuit
  sampling~\cite{liu2025certified}. Examples of the latter 
include quantum key distribution~\cite{qkd_rev2020}, quantum
randomness expansion~\cite{qrand_rev2017},
self-testing~\cite{supic:2020}, verifiable quantum
computation~\cite{ver_qc_rev2017, mahadev:2018}, and resource
allocation in quantum networks~\cite{qnet_rev2023}.  Strategies for
measuring random subsets of the generated states are referred to as
spot-checking strategies.

To relax the i.i.d. assumption in spot-checking, previous works for
quantum cryptography~\cite{Bouman2010, Tomamichel2012, Tomamichel2017}
and quantum state certification~\cite{Takeuchi2019} resort to
concentration inequalities for sampling without replacement, for
example, the Serfling inequality~\cite{Serfling1974, Bardenet2015}.
These inequalities work well for parallel trials. For parallel trials,
the experiment is equivalent to one where the global state across all
trials is generated in the first step, and then the trial-wise
measurements for each spot-checked trial are made in parallel in the
second step.  Since the trials in current quantum experiments are
generated in sequence rather than in parallel, concentration
inequalities for sampling without replacement cannot be directly
applied without introducing additional assumptions. For example,
Refs.~\cite{Bouman2010, Tomamichel2012, Tomamichel2017} assumed that
the experimental device used is memoryless.  The recent
work~\cite{Gocanin2022} on self-testing considers spot-checking with a
sequence of trials, where one randomly decides, before each trial,
whether or not to spot-check and measure the trial's state.  The
method developed there allows for non-identical trials but still
assumes that these trials are independent.

In this work, our focus is on spot-checking with sequential trials.
We assume that in each trial there is an observable random
variable whose value or past-conditional expectation relates to a
parameter or property to be certified.  The random variable may be
associated with the outcomes of a measurement that is performed only
when the trial is spot-checked.  With the assumptions discussed above,
concentration inequalities for sampling without replacement can only
be used to estimate the average of unobserved random
variables~\cite{Tomamichel2012, Tomamichel2017}, while the method of
Ref.~\cite{Gocanin2022} is applicable only for estimating the average
of the past-conditional expectations of unobserved random variables.
We introduce the method of estimation factors to obtain
high-confidence bounds for both estimation problems without the above
assumptions, given the observed values of the non-i.i.d. random
variables that are spot-checked. The estimation-factor method has two
additional advantages. First, it can significantly reduce the total
number of trials required to certify a fixed confidence bound at a
given confidence level.  Second, the method permits early
stopping. For example, it is possible to stop as soon as the number of
trials that are unchecked (that is, not spot-checked) reaches a
predetermined number. This is suitable for use in a block protocol
that requires a fixed-length input. Methods such as the one in
Ref.~\cite{Gocanin2022} require a fixed total number of trials,
resulting in a random number of unchecked trials.

\noindent{\textit{Estimation-factor method.}}
We consider an experiment consisting of $n$ sequential trials indexed
by $i\in\{1,\ldots,n\}$. The $i$'th trial makes available a real
random variable $X_{i}$ that is observed only if the trial is spot-checked. We
assume that $X_{i}$ is bounded on at least one side. This ensures that
conditional expectations of $X_{i}$ have well-defined values in
$[-\infty,+\infty]$. The choice to spot-check is determined by a
second, 0/1-valued random variable $Y_{i}$, where the $i$'th trial is
spot-checked if $Y_{i}=0$. Let $\bm{X}$ and $\bm{Y}$ denote the
sequences of random variables $(X_i)_{i=1}^n$ and $(Y_i)_{i=1}^n$.
Similarly, $\bm{X}_{<i}$ and $\bm{Y}_{<i}$ denote the sequences
$(X_j)_{j=1}^{i-1}$ and $(Y_j)_{j=1}^{i-1}$ preceding the $i$'th
trial. We refer to the combination of $\bm{X}_{<i}$, $\bm{Y}_{<i}$,
and any other information obtained before each trial $i$ as the past
of the trial, denoted by $\Past_i$. Both $X_i$ and $Y_i$ can depend on
their past, but we require that $X_{i}$ and $Y_{i}$ are conditionally
independent of each other given the past. This ensures that the
past-conditional expectation of $X_{i}$ is independent of whether or
not the $i$'th trial is spot-checked.  We refer to this requirement as
the free-choice assumption.  This assumption also conveys that before
each trial and given the past, one has the freedom to determine
whether or not to spot-check the trial.

We use $\Exp$ and $\Prob$ to denote expectations and probabilities,
respectively.  To simplify the treatment of conditional probabilities,
in this work we focus on the situation where the probability
distributions involved are discrete. In quantum applications of
interest, either the random variable $X_i$ itself or its
past-conditional expectation, denoted by
$\Theta_{i}=\Exp (X_i | \Past_i )$, reflects the quality of the
quantum state generated in the $i$'th trial.  For example, in security
or error analysis~\cite{Bouman2010, Tomamichel2012, Tomamichel2017},
$X_i$ can be a binary random variable with values $1$ or $0$,
indicating whether or not an error occurs. Another example is state
certification~\cite{supic:2020, Yu:2022}, where the focus is on the
fidelity of the state generated in the $i$'th trial with respect to a
target state, given the past. This is readily relatable to a
past-conditional expectation $\Theta_{i}$.  

Our aim is to construct a confidence bound for the sum of the past-conditional 
expectations $\Theta_{i}$ for unchecked trials $i$, which is expressed as
\begin{equation} \label{main-eq:weighted_sum_means}
S_n=Y_1\Theta_1+Y_2\Theta_2+\ldots+Y_n\Theta_n.
\end{equation}
Another quantity of interest is the sum of the unobserved random variables $X_{i}$, 
expressed as
\begin{equation} \label{main-eq:weighted_sum_rvs}
S'_n=Y_1X_1+Y_2X_2+\ldots+Y_nX_n.
\end{equation}
As it turns out, the confidence bounds for $S_{n}$ and $S'_{n}$
obtained using the estimation-factor method introduced here are
identical at the same confidence level.

Since upper confidence bounds for a quantity can be obtained from
lower confidence bounds for the negative of the quantity, we focus on
constructing lower confidence bounds.  When a confidence interval is
desired, upper and lower confidence bounds can be combined using the
union bound. To obtain lower confidence bounds, we assume that
\(X_{i}\geq b\) for some known constant \(b\). While we treat \(b\) as
constant here, the estimation-factor method can be straightforwardly
adapted to scenarios where \(b\) depends on the trial index.

To construct a lower confidence bound on $S_{n}$, we choose 
a power $\beta>0$ and bound the monotonically decreasing function 
$\exp(-\beta S_n)$ of $S_n$ from above.  To achieve this for a 
sequence of  non-i.i.d. trials, we determine a sequence of non-negative functions
$T_i(X_i,Y_i)$, which can depend on the past of the $i$'th trial, such that 
\begin{equation} \label{main-eq:ef_chain}
 \Exp \qty(  e^{-\beta S_n} \prod_{i=1}^n T_i(X_i,Y_i) )   \leq 1. 
\end{equation}
Intuitively, the constraint ensures that it is unlikely for both
$e^{-\beta S_n}$ and $\prod_{i=1}^n T_i(X_i,Y_i)$ to be large
simultaneously.  To formalize this intuition, we apply Markov's
inequality $\Prob(Z>1/\eps )\leq \eps \Exp(Z)$ for $\eps\in (0,1)$ to
the non-negative random variable
$Z= e^{-\beta S_n}\prod_{i=1}^n T_i(X_i,Y_i)$. By manipulating the
inequality inside $\Prob(.)$, we obtain
$\Prob \big(S_n < S_{\text{lb},n}\big)= \Prob(Z>1/\eps ) \leq \eps$,
where
 \begin{equation} \label{main-eq:conf_bound}
  S_{\lb,n} = \sum_{i=1}^n \ln \big(T_i(X_i,Y_i)\big)/\beta+\ln(\eps)/\beta.
\end{equation}
Therefore, $S_{\text{lb},n}$ serves as a lower confidence bound on
$S_n$ with confidence level $(1-\eps)$.  To ensure that 
Eq.~\eqref{main-eq:ef_chain} is satisfied, 
it suffices to ensure that for every \(i\) and every probability 
distribution satisfying the assumptions specified above, the
inequality
\begin{align} \label{main-eq:ef_def}
\Exp\big(T_{i}(X_i, Y_i)e^{-\beta Y_{i}\Theta_{i}}|\Past_i \big) &\leq 1
\end{align}
holds; see App.~\ref{sect:EF_method} for details.
We refer to a non-negative function $T_i(X_i, Y_i)$ satisfying Eq.~\eqref{main-eq:ef_def} 
as an estimation factor of power $\beta$ for estimating $Y_i\Theta_i$.  
Accordingly, the defining inequality in Eq.~\eqref{main-eq:ef_def} is termed the
estimation-factor inequality.  We can follow the same steps to define 
estimation factors for estimating $S'_{n}$, replacing $e^{-\beta Y_{i}\Theta_{i}}$ with
  $e^{-\beta Y_{i} X_{i}}$ in Eq.~\eqref{main-eq:ef_def},  and obtain the 
  same expression as Eq.~\eqref{main-eq:conf_bound} for the lower 
  confidence bound on $S'_n$ with confidence level $(1-\eps)$.

Once a lower confidence bound $S_{\lb,n}$ for $S_n$ is obtained, it can be converted into a 
lower confidence bound for the average of $\Theta_i$  over unchecked trials, 
which is often the actual quantity of interest in applications. Let $C_n = \sum_{i=1}^{n}Y_{i}$ 
be the number of unchecked trials.  One is inclined to write $S_{n}/C_{n}$ for the average and 
use $S_{\lb,n}/C_{n}$ as the corresponding lower confidence bound, but it is necessary to 
account for the possibility that $C_{n}=0$, in which case all trials are spot-checked. 
For this purpose, we define the average of interest as $\bar{\Theta}=b$ when $C_{n}=0$
and $\bar{\Theta} = S_{n}/C_{n}$ otherwise. A lower confidence bound on $\bar{\Theta}$ with confidence 
level $(1-\eps)$ is then $\bar{\Theta}_{\lb}=b$ when $C_{n}=0$ and $\bar \Theta_{\lb}=S_{\lb,n}/C_n$ otherwise. 
To verify that $\bar{\Theta}_{\lb}$ is a valid lower confidence bound, observe that
 \begin{align}\label{main-eq:conf_bound_conversion}
    \Prob(\bar{\Theta} < \bar{\Theta}_{\lb})
    &= \Prob(C_{n}\neq 0, S_{n}< S_{\lb, n})\nonumber\\
    &\leq\Prob(S_{n}< S_{\lb,n}) 
    \leq \epsilon.
 \end{align}

\noindent{\textit{Constructing estimation factors.}}
We construct estimation factors for the basic spot-checking strategy
where the distribution of the spot-checking variable \(Y_{i}\) in
each trial $i$ is fixed and independent of the past, with a known
probability $\omega_i \in (0, 1)$ that the trial is spot-checked,
  namely that \(Y_{i}=0\).  In App.~\ref{sect:EFs_with_bias}, we
consider the general situation where the past-conditional distribution
of \(Y_{i}\) is constrained but not precisely known. The
estimation-factor inequality in Eq.~\eqref{main-eq:ef_def} applies to
each individual trial, so we consider a generic trial with random
variables \(X\) and \(Y\), where we omit explicit mention of trial
indices and the past.  Since $X$ is explicitly observed only when
$Y=0$, the estimation factor $T(X,Y)$ must not depend on $X$ when
$Y\neq 0$. As a result, $T(X,Y)$ can be expressed as
\begin{equation}
T(X,Y)=
\begin{cases}
T'(X), &  \text{if\ }  Y=0, \\
 t,  & \text{if\ } Y=1,
\label{main-eq:ef_form}
\end{cases}
\end{equation}
where both $T'(X)$ and $t$ are required to be non-negative. For a generic trial, 
the free-choice assumption states that $X$ and $Y$ are independent, where 
$\Prob(Y=0)=\omega$. Therefore, the estimation-factor inequality in 
Eq.~\eqref{main-eq:ef_def} becomes
\begin{equation} \label{main-eq:ef_simp}
  (1-\omega)te^{-\beta  \Exp(X)} +\omega  \Exp\big( T'(X) \big) \leq 1.
\end{equation}
This inequality must be satisfied by all distributions of \(X\) on
\([b,\infty)\), including the deterministic ones for which
\(\Prob(X=x)=1\) for some \(x\geq b\).  Substituting these deterministic distributions into
Eq.~\eqref{main-eq:ef_simp}, and since \(x\geq b\) is arbitrary, we
find that the function \(T'\) must satisfy
\(T'(x)\leq F'_{\beta, t}(x) \coloneqq \qty(1-(1-\omega)t e^{-\beta
  x})/\omega\) for \(x\geq b\).  The non-negativity of estimation
factors requires \(t\in[0, e^{\beta b}/(1-\omega)]\).  Since the
function $F'_{\beta, t}$ is concave,  Jensen's inequality implies  
\(\omega \Exp(T'(X)) \leq \omega \Exp(F'_{\beta, t}(X)) \leq \omega
F'_{\beta, t}(\Exp(X))= 1- (1-\omega)te^{-\beta \Exp(X)}\).
Therefore, all $T(X, Y)$ of the form in Eq.~\eqref{main-eq:ef_form}
with $T' \leq F'_{\beta, t}$ satisfy the estimation-factor inequality
in Eq.~\eqref{main-eq:ef_simp}.  Because increasing the values of
\(T'\) increases the lower confidence bounds obtained, we may set
\(T'= F'_{\beta, t}\), where $\beta>0$ and
$t\in [0, e^{\beta b}/(1-\omega)]$, for better bounds.  The resulting
estimation factors are denoted by $F_{\beta, t}(X,Y)$ and are called
extremal.

If we wish to estimate \(S'_{n}\) instead of \(S_{n}\), the estimation-factor
inequality for a generic trial becomes
\begin{equation} \label{main-eq:ef_simp_pr}
  (1-\omega)t\Exp(e^{-\beta  X}) +\omega  \Exp\big( T'(X) \big) \leq 1.
\end{equation}
By considering deterministic distributions at \(x\ge b\) as
before, we find that to satisfy Eq.~\eqref{main-eq:ef_simp_pr}, it is also
necessary that the function \(T'\leq F'_{\beta,t}\).  Provided that the function \(F'_{\beta,t}\) is
non-negative on \(x\in [b,\infty)\), by the concavity of \(F'_{\beta,t}\), we can
once again set \(T'=F'_{\beta,t}\) to satisfy Eq.~\eqref{main-eq:ef_simp_pr}. Consequently,
the estimation factors $F_{\beta, t}(X,Y)$ are extremal for estimating 
both $S_n$ and $S'_n$, and these estimation factors yield the same 
lower confidence bounds for both estimation problems.

To determine estimation factors in an experiment, we can proceed trial
by trial.  The power \(\beta\) must be chosen independently of the
data to which the estimation factors will be applied, preferably
before the experiment is performed.  The choice may be based on
independently acquired calibration data, or modeling and
simulations. Before each trial $i$ and based on its past, we choose
\(t_{i}\) to determine the estimation factor
$T_{i}(X_{i},Y_{i})=F_{\beta, t_i}(X_{i},Y_{i})$ to be used in the
$i$'th trial.  To optimize \(\beta\) and \(t_{i}\), we can maximize
the predicted expectation of the lower confidence bound using all
available information.
Whether or not the predictions are accurate, the estimation 
factors and lower confidence bounds obtained remain valid.

It is often desirable to optimize estimation-factor choices for the
case where the experiment performs as planned, which typically means
that the \(X_{i}\) are close to being i.i.d. and each trial is
spot-checked with a constant probability \(\omega\). For this purpose,
it is not necessary to perform fine-grained optimization at each
trial. A simple strategy is to collect a set of \(n_{\text{c}}\)
calibration trials, with each trial being spot-checked, before the
planned experiment.  To determine the estimation factor to be used 
for analyzing each post-calibration trial, we can maximize
the expected lower confidence bound assuming that the \(X_{i}\) for
these trials are sampled  i.i.d. from the empirical frequencies of 
the calibration data.  For this, we use the
extremal estimation factors \(F_{\beta, t}\) and optimize the expected
lower confidence bound over both \(\beta\) and \(t\); see
  App.~\ref{sect:numerical_opt}.  This strategy can be further
simplified by computing \(\beta\) and \(t\) directly from
calibration-based estimates of the mean and variance of \(X\), or even
with fixed choices for \(\beta\) and \(t\) without relying on
calibration, depending on how tight the lower confidence bounds need
to be. See App.~\ref{sect:analytical_opt} and the end of
App.~\ref{sect:tightness} for details.  When using estimates of the
mean and variance, we find that with regularization to account for
rare events and boundary cases, moderate values of \(n_{\text{c}}\)
suffice. In App.~\ref{sect:calibratevar} we determine bounds on the
needed values of \(n_{\text{c}}\).

The flexibility to choose the spot-checking probability \(\omega_{i}\)
and the estimation factor \(T_{i}\) before each trial $i$ can be
exploited for early stopping. Early stopping involves using current
and past information to decide whether to ignore or terminate future
trials.  This decision can be effectively implemented by choosing
\(\omega_{j}=1\) and \(T_{j}=1\), because for \(\omega_{j}=1\), in
Eq.~\eqref{main-eq:ef_def} we always have \(Y_{j}=0\), and the
quantity \(e^{-\beta Y_{j}\Theta_{j}}\) is identically \(1\).  Since
these estimation factors contribute nothing to the resulting
confidence bound, and these future trials do not contribute to either
\(S_{n}\) or \(S_{n}'\), it is not necessary to physically execute
these trials. One application of early stopping is to ensure that,
with overwhelming probability, the number of unchecked trials after
the experiment is some positive number $m$ fixed beforehand.  For this
purpose, given the spot-checking probability \(\omega\) in each
executed trial, it suffices to set \(n\) sufficiently larger than
\(m/(1-\omega)\) before the experiment starts.  See
App.~\ref{sect:early_stop} for details.

Beyond early stopping, several other properties of the estimation-factor 
method can also be established. In App.~\ref{sect:constant_omega_n}, 
we show that for a  fixed confidence level and target gap between 
a quantity averaged over unchecked trials and its corresponding lower 
confidence bound, a constant expected number of
spot-checked trials suffices even as the total number of trials
approaches infinity. We also prove the asymptotic tightness and
finite-data efficiency of the confidence bounds obtained with
estimation factors in App.~\ref{sect:tightness} and
App.~\ref{sect:efficiency}, respectively.  

\begin{figure}[htb!]
  \begin{center}
  \includegraphics[scale=0.45]{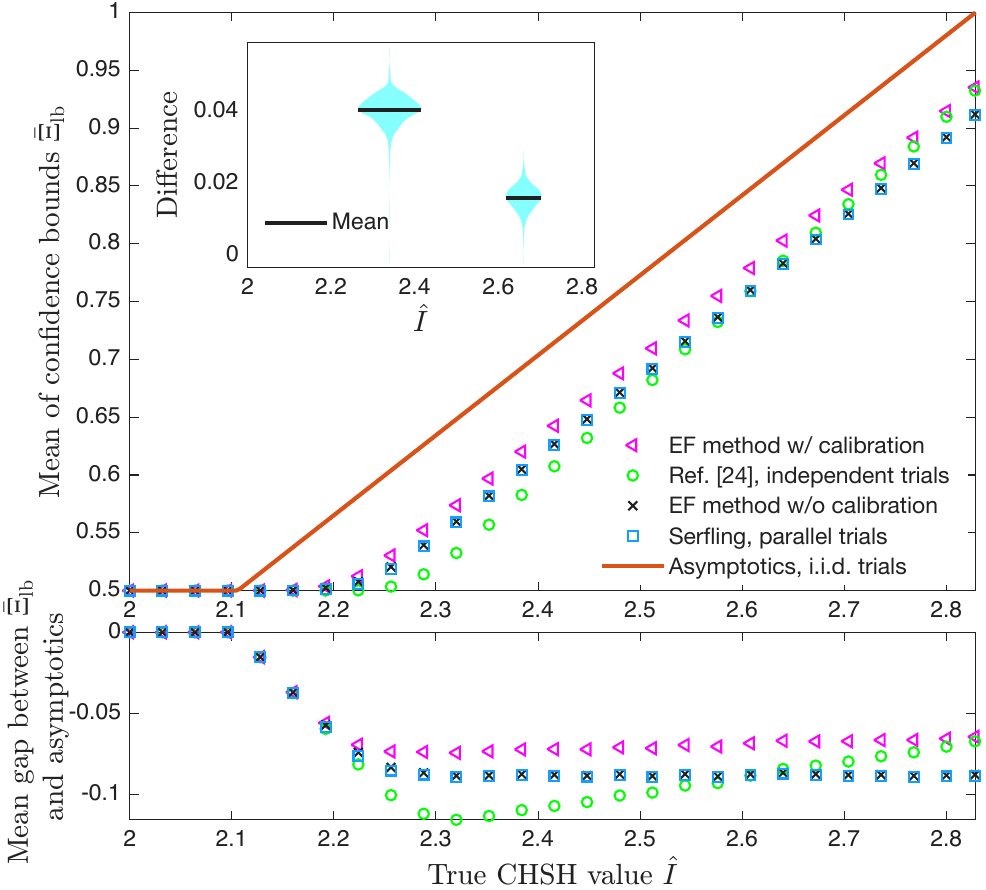}
  \end{center}
  \vspace{-0.4cm}
    \caption{Lower confidence bounds $\bar \Xi_{\lb}$ on the average extractability $\bar \Xi$ of the Bell state from 
    the past-conditional states generated in the unchecked trials as a function of the true CHSH value $\hat I$ 
    underlying the spot-checked trials. We fix the total number of trials at $n=10^5$ and set the confidence level 
    at $(1-\epsilon)=99\SI{\%}$.  Since a separable state can achieve the trivial extractability $0.5$, we 
    truncate all confidence bounds at this value. The upper panel shows the mean of $\bar \Xi_{\lb}$ across 
    $1000$ independent datasets for each method considered. For reference, the asymptotic bound from 
    Ref.~\cite{kaniewski:2016} is also shown. The inset violin plot shows the distribution of the differences between 
    the confidence bounds obtained by the estimation-factor (EF) method with calibration and the method of 
    Ref.~\cite{Gocanin2022} for each dataset when $\hat I=2.34$ or $2.66$. 
    The lower panel illustrates the mean gap between $\bar \Xi_{\lb}$ and the asymptotic bound.  
    Note that in this figure, as well as in Fig.~\ref{main-fig:finite_efficiency} below, we also illustrate the 
    performance of the Serfling inequality~\cite{Serfling1974} solely for comparison.
    For explanations of the method of Ref.~\cite{Gocanin2022}, the application of the Serfling inequality 
    and how this figure was obtained, see App.~\ref{sect:comparison}.
}
  \label{main-fig:tight_bound}
\end{figure}

\noindent{\textit{Application example.}}
The estimation-factor method can be applied to situations where
multiple parties wish to assure that they share a target entangled
state with good average fidelity by destructively spot-checking a
faction of states repeatedly shared among them. 
 Here, we consider the specific case where two parties aim to share Bell pairs.
Self-testing~\cite{supic:2020} can establish a device-independent
lower bound on the fidelity of a Bell pair extractable from some
subsystem at each party.  This fidelity is known as the extractability
$\Xi$ of the pair.  The extractability lower bound can be related
linearly~\cite{supic:2020, kaniewski:2016} to an expectation \(\hat I\) 
of a binary random variable, where \(\hat I\) is known as the CHSH value 
and indicates a violation of the CHSH inequality~\cite{chsh:1969}
when \(\hat I >2\).  Accordingly, 
the extractability lower bound conditional on the past of a
trial $i$ can also be expressed as an expectation of a
binary random variable \(X_{i}\); see App.~\ref{sect:comps}. 

We aim to estimate the past-conditional extractability 
averaged over a finite number of trials, 
denoted by $\bar \Xi$. Previous statistical methods for this purpose 
either do not work with spot-checking~\cite{Bancal2021} or require 
the assumption that the states generated in sequential trials are 
independent~\cite{Gocanin2022}. In contrast, the estimation-factor method
enables estimating the average extractability $\bar \Xi$ over unchecked non-i.i.d. trials.

We compare the estimation-factor method with the method 
of Ref.~\cite{Gocanin2022} and with the Serfling 
inequality~\cite{Serfling1974,Tomamichel2017}. The methods 
compared with require additional assumptions or address a different problem as noted in 
the introduction; see App.~\ref{sect:comparison} for further details.
To illustrate the performance of the methods considered, 
we simulate $1000$ independent datasets, each with 
$n=10^5$ i.i.d. trials with a spot-checking probability 
of $\omega=0.1$ and a true CHSH value $\hat I \in [2, 2\sqrt{2}]$. 
From each dataset, we determine a lower confidence bound 
$\bar \Xi_{\lb}$  using each method under its respective
assumptions.  We then plot the mean of the confidence bounds 
across datasets for each method, as shown in the upper panel of
Fig.~\ref{main-fig:tight_bound}. 
We also plot the asymptotic lower
bound \(\check\Xi\)  on the average extractability $\bar \Xi$ derived in
Ref.~\cite{kaniewski:2016} for an infinite number of i.i.d. trials.
We use the same estimation factor for all trials in each 
dataset. We determine this estimation factor prior to the experiment, 
either by optimization using a set of $100$ i.i.d. calibration 
trials sampled according to the true $\hat I$, 
or by using a suboptimal choice without calibration, as described 
in App.~\ref{sect:comps}.  The results in Fig.~\ref{main-fig:tight_bound}
indicate that estimation factors yield tighter confidence bounds, 
especially for practically relevant values of $\hat I$.

Alternatively, one can consider the minimum number of
trials required by each method to ensure that the expected lower
 confidence bound on the average extractability $\bar \Xi$ exceeds 
 a target threshold.  This minimum number of trials is estimated 
 assuming i.i.d. trials with a spot-checking probability of
 $\omega=0.1$,  based on the true underlying distributions.
  As shown in Fig.~\ref{main-fig:finite_efficiency}, a significant 
  improvement is achieved with estimation factors.  Notably, unlike
  estimation factors, the method of Ref.~\cite{Gocanin2022} cannot
  reach the asymptotic lower bound of \(\check\Xi=0.9111\) given the 
  true CHSH value  \(\hat{I}=2.7\) considered in 
  Fig.~\ref{main-fig:finite_efficiency}; specifically, this
  method diverges before the expected lower confidence bound
  increases to \(\check \Xi-\delta_{\text{th}}\) with \(\delta_{\text{th}}=0.0098\).  
  Note that the value $\hat{I}=2.7$ is chosen in accordance
  with the recent detection-loophole-free Bell tests reported in
  Refs.~\cite{nadlinger:2022, zhang:2022}.

\begin{figure}[htb!]
  \begin{center}
    \includegraphics[scale=0.45]{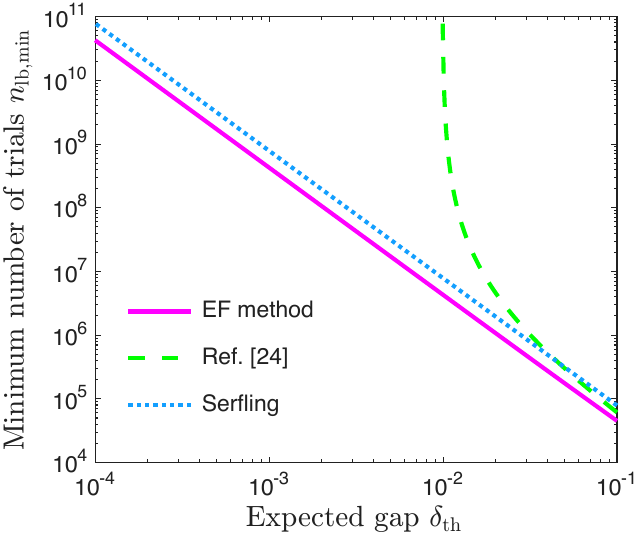}
  \end{center}
  \vspace{-0.5cm}
    \caption{Minimum number of trials $n_{\text{lb}, \min}$ required to 
  ensure that the expected lower confidence bound on $\bar \Xi$ exceeds
  \(\check \Xi-\delta_{\text{th}}\), where \(\check \Xi=0.9111\) is the 
  asymptotic lower bound on $\bar \Xi$ corresponding to a true 
  CHSH value \(\hat{I}=2.7\) 
  and \(\delta_{\text{th}}\) is the varying expected gap. We set the
      confidence level to \((1-\epsilon)=99\SI{\%}\) and assumed i.i.d. trials
      with spot-checking probability $\omega=0.1$. 
      See App.~\ref{sect:comps} for details. }    
     \label{main-fig:finite_efficiency}
\end{figure}

\noindent{\textit{Discussion.}}
The estimation-factor method is also applicable to 
 quantum key distribution; see App.~\ref{sect:appls} for details. 
 In some quantum information tasks, the performance in a trial $i$ is not directly determined by the random variable $X_i$
or its past-conditional expectation $\Theta_{i}$, but rather by some function of $X_i$ or $\Theta_{i}$. Without loss of generality, 
we consider the case where the quantity of interest in each trial $i$ is  $g(\Theta_{i})$.  In such cases, the estimation-factor method remains
applicable provided that the function $g$ satisfies certain properties.  
For example, if $g$ is a convex function, we have
\begin{align}\label{main-eq:cvx_est}
  S_{g,n}&=Y_1 g(\Theta_1)+Y_2 g(\Theta_2)+\ldots+Y_n g(\Theta_n) \notag \\
         &\geq C_n g(\bar \Theta), 
\end{align}
where $C_n=\sum_{i=1}^n Y_i$ and $\bar \Theta=S_n/C_n$.
Furthermore, if $g$ is a monotonically non-decreasing (or non-increasing) function, we can leverage
a lower (or an upper) confidence bound on $\bar \Theta$ to obtain  
a lower confidence bound on $S_{g,n}$, and thus a lower confidence bound on the average of $g(\Theta_{i})$ over unchecked trials. 

In quantum applications of spot-checking, the states produced in
unchecked trials might be measured or used in a way that is
incompatible with the measurement that produces \(X_{i}\) in
spot-checked trials. Because \(\Exp(X_{i}|\Past_{i})\)  is
well-defined whether or not \(X_{i}\) is realized, the confidence
bounds on \(S_{n}\) are still meaningful. One interpretation of the
quantity \(S'_{n}\) can be obtained by virtually timeshifting all
incompatible measurements until after the end of the spot-checking experiment, for
which purpose we imagine a virtual perfect quantum memory. In this
case,  the relevant joint distributions of the \(X_{i}\) are definable
even if not realized, and \(S'_{n}\) would be the sum of the
unobserved \(X_{i}\) if they had been realized. A confidence bound 
on \(S'_{n}\) thus still provides useful information.

\noindent{\textit{Conclusion.}} We introduced the estimation-factor method for
certifying the performance of a sequence of tasks repeatedly applied
to states from a source by spot-checking only a fraction of those
states.  The method works without an i.i.d. assumption for the
states produced at different times. Good results can be obtained by
spot-checking only a constant number of states, thereby minimizing
interruption to the underlying tasks.  The estimation-factor method
demonstrates exceptional performance, as illustrated by the example
of self-testing state extractability in the spot-checking scenario
without imposing unwanted assumptions.  Several open problems
remain.  One is to extend the method to certify multiple quantum
properties simultaneously in adversarial settings, in the spirit of
Ref.~\cite{Huang:2020} but using only sparse spot-checking and
without the i.i.d. assumption.  Another is to move beyond the
current reliance on (one-sided) bounded random variables and handle
cases where the relevant variables may be unbounded but have finite
variances.


\noindent{\textit{Acknowledgements.}} 
We thank Carl Miller and Go Kato for valuable comments on the paper.
We also thank Carl Miller, Shawn Geller, and Aneesh Ramaswamy for 
assistance with reviewing the paper before submission.  Y.Z. additionally 
thanks Valerio Scarani and Yeong-Cherng Liang for 
historical and stimulating discussions on self-testing.  
This work was performed in part at Oak Ridge National
Laboratory, operated by UT-Battelle for the U.S. Department of Energy
under contract no. DE-AC05-00OR22725. This work also includes
contributions of the National Institute of Standards and Technology (NIST),
which are not subject to U.S. copyright.  Y.Z. acknowledges funding
provided by the U.S. Department of Energy, Office of Science, Advanced
Scientific Computing Research (Field Work Proposal ERKJ381).
A.S. acknowledges support from the Professional Research Experience
Program (PREP) operated jointly by NIST and the University of
Colorado.
\newpage

\bibliographystyle{unsrt}
\bibliography{spotcheck_est}

\clearpage
\newpage
\onecolumngrid

\appendix

\tableofcontents

\section*{Appendix}
\label{sect:appendix}

\subfile{spotcheck_est_apq.tex}

\end{document}


%% file: estimation_defs.tex
\providecommand{\ignore}[1]{}

\newif\ifcmnt
\cmnttrue
\ifdefined\cmntsoff\cmntfalse\fi

\ifcmnt
    \providecommand{\aucmnt}[1]{#1}

    \providecommand{\Pcolor}{\color{gray}}
\else
    \providecommand{\aucmnt}[1]{}

    \providecommand{\Pcolor}{}
\fi

\newcommand{\Pc}[1]{\aucmnt{{\Pcolor \textbf{[}Permanent comment: {\color{gray}#1}\textbf{]}}}}

\newcommand{\SI}[1]{\;\mathrm{#1}}

\newcommand{\Prob}{\mathbb{P}}
\newcommand{\Exp}{\mathbb{E}}
\newcommand{\Var}{\mathrm{Var}}

\newcommand{\past}{{\textit{past}}}
\newcommand{\Past}{{\textit{Past}}}

\newcommand{\knuth}[1]{\left\llbracket #1 \right\rrbracket}

\newcommand{\eps}{\epsilon}

\newcommand{\rls}{\mathbb{R}}

\newcommand{\cF}{\mathcal{F}}

\newcommand{\cL}{\mathcal{L}}
\newcommand{\cM}{\mathcal{M}}

\newcommand{\cO}{\mathcal{O}}

\newcommand{\cT}{\mathcal{T}}

\newcommand{\cV}{\mathcal{V}}

\newtheorem{theorem}{Theorem}[section]
\newtheorem*{theorem*}{Theorem}
\newtheorem{proposition}[theorem]{Proposition}
\newtheorem{proposition*}{Proposition}
\newtheorem{lemma}[theorem]{Lemma}
\newtheorem*{lemma*}{Lemma}

\newcounter{protocol}
\renewcommand{\theprotocol}{\Alph{protocol}} 
\newenvironment{protocol}[1][]{
  \refstepcounter{protocol}%
  \begin{mdframed}[linewidth=0.8pt,roundcorner=5pt]
  \noindent\textbf{Protocol~\theprotocol. #1} \par
}{
  \end{mdframed}
}


%% file: spotcheck_est_apq.tex
\title{An efficient method for spot-checking quantum properties with sequential trials: Appendix}
\author{Yanbao Zhang}\thanks{yzh@ornl.gov}
\affiliation{Quantum Information Science Section,  Computational Sciences and Engineering Division,  Oak Ridge National Laboratory, Oak Ridge, Tennessee 37831, USA}
\author{Akshay Seshadri} 
\affiliation{Department of Physics, University of Colorado, Boulder, Colorado 80309, USA}
\affiliation{National Institute of Standards and Technology, Boulder, Colorado 80305, USA}
\author{Emanuel Knill}
\affiliation{National Institute of Standards and Technology, Boulder, Colorado 80305, USA}
\affiliation{Center for Theory of Quantum Matter, University of Colorado, Boulder, Colorado 80309, USA}

\maketitle

\ifSubfilesClassLoaded{%
  \onecolumngrid
  \appendix
  \tableofcontents
}{}

\section{Notation}

 A random variable (RV) is a measurable function from an underlying
measurable space to a measurable space. In this work, RVs need not be
real-valued, and the probability distribution of a given RV may be
different depending on context.  We use the convention that RVs are
denoted by capital letters and their values or instances by the
corresponding lower-case letters.  For simplicity, we assume that the
underlying probability space is discrete, a realistic restriction for
digitized observations in experiments.  This ensures that conditional
probabilities are well-defined, and the probability distributions of
RVs are supported on countable subsets of their range.  For
simplicity, we also assume that singletons are measurable.  We call a
probability distribution on a measurable space discrete if there
exists a countable subset that has probability \(1\).

The experiments considered here consist of sequential trials, where
each trial instantiates one or more RVs, both observed and
unobserved. For our analysis, we fix the total number \(n\) of trials
ahead of time. Trials are indexed by positive integers. The RVs
instantiated at trial \(i\) are denoted by \(X_{i}, Y_{i}, \ldots\).
We use upper-case bold-face letters to denote sequences of RVs. For
example \(\bm{X}=(X_{i})_{i=1}^{n}\). We use the notation
\(\bm{X}_{<i}=(X_{j})_{j=1}^{i-1}\) for the initial segment of
\(\bm{X}\) prior to trial $i$, and similar notation applies to other sequence RVs.  
The ``past'' of the \(i\)'th trial refers to the collection of all RVs that were
instantiated before trial \(i\), whether or not they were observed, and
normally includes \(X_{j},Y_{j},\ldots\) for \(j<i\) as well as any
relevant history prior to the start of the experiment. We use the special
symbol \(\Past_{i}\) to denote the RV that captures the past of trial
\(i\) and \(\past_{i}\) to denote its values.

We typically use the term ``distribution'' to refer to ``probability
distribution'' and denote distributions on measurable spaces by
\(\mu,\nu,\ldots\).  The joint distribution of the RVs over all trials
in a particular situation is usually not explicitly specified. We
define models as sets of such joint distributions. For each scenario
that we consider, we assume a model \(\cM\) and refer to the
distributions in the model as allowed distributions.  The probability
of an event \(\Phi\) is denoted by \(\Prob(\Phi)\), or by
\(\Prob_{\mu}(\Phi)\) if the distribution \(\mu\) needs to be
specified. Conditional probabilities are written as
\(\Prob(\Phi|\Lambda)\) for conditioning on an event \(\Lambda\) and
are defined to be zero if \(\Prob(\Lambda)=0\).  Expectations of an RV
\(Z\) are accordingly denoted by \(\Exp(Z)\) or
\(\Exp_{\mu}(Z)\). Expectations conditional on an event \(\Phi\) are
denoted by \(\Exp(Z|\Phi)\) or \(\Exp_{\mu}(Z|\Phi)\), where we use
the same convention for the case where the conditioning event has
probability zero. For an RV \(Z\), the event \(Z=z\) is abbreviated as
\(z\) when mentioned as an argument to \(\Prob\) or \(\Exp\).  Thus
\(\Prob(z|w)\) is the probability of \(Z=z\) given the event \(W=w\).
We define \(\knuth{\Phi}\) to be the indicator function of the event
\(\Phi\), that is, \(\knuth{\Phi} = 1\) on the event \(\Phi\) and
\(\knuth{\Phi}=0\) on the complement of \(\Phi\).

For RVs \(Z\) and \(W\) where \(Z\) is real-valued, the conditional
expectation of \(Z\) given \(W\), denoted by \(\Exp(Z|W)\), is an RV
whose value is \(\Exp(Z|w)\) when \(W\) takes value \(w\), that is, on
the event \(W=w\). For this purpose, we define \(\Exp(Z|w)=0\) when
\(\Prob(w)=0\). 
\Pc{This is a definition for our purposes that works
  for discrete probability distributions and agrees with the usual one
  a.e.}  Conditional expectations introduce RVs whose values depend on
the underlying distribution. To avoid this distributional dependence,
we exclude such RVs from contributing to \(\Past_{i}\). This ensures
that \(\Past_{i}\), when viewed as a function on the underlying
measurable space, does not depend on the specific probability
distribution under consideration.

\section{Preliminaries}
\label{app:preliminaries}

The scenarios considered in our work involve two RVs in each trial
$i$. The first is the spot-checking RV \(Y_{i}\) with values
\(y_{i}\in\{0,1,\ldots\}\), and the second is the real-valued RV
\(X_{i}\) of interest.  In each trial, \(Y_{i}\) is always observed,
while \(X_i\) is observed and the trial is spot-checked only when
\(Y_{i}=0\).  We define \(U_{i}=X_{i}\) or
\(U_{i}=\Exp(X_{i}|\Past_{i})\), depending on whether we are
interested in the unobserved \(X_{i}\) or their expectations given the
past. We aim to estimate the sum of \(U_{i}\) over the trials where
\(Y_{i}=1\). In contrast to the main text, we allow spot-checking
RVs to take more than two possible values. See
Sect.~\ref{sect:general_spotcheck} for motivation and an application.

When spot-checking quantum
states, unobserved \(X_{i}\) may not be physically realized and may be
unavailable due to a subsequent incompatible measurement of the state
produced in unchecked trial \(i\). Nevertheless, joint distributions
involving the unobserved \(X_{i}\) can be defined provided that it is
possible to virtually timeshift actions involving the unchecked states
until after the experiment, where we virtually invoke perfect quantum
memories for this purpose. For the case
\(U_{i}=\Exp(X_{i}|\Past_{i})\), it is not necessary to define these
joint distributions, as \(\Exp(X_{i}|\Past_{i})\) is well-defined
independent of the actual measurement or process applied to the state
of the \(i\)'th trial. In this case, \(\Past_{i}\) does not include
the uninstantiated values of unobserved \(X_{i}\).

The distributions in the model for a scenario are constrained. Such
constraints include the free-choice assumption as introduced in the
main text and assumptions on the spot-checking probability in each
trial. We require that the spot-checking probability in each trial
lies within the interval \((0,1)\) to ensure non-trivial results. When
the spot-checking probability is independent of the past and
known for all trials, we write \(\omega_i=\Prob(Y_{i}=0)\) for the
probability that we spot-check trial $i$.  
The main additional constraints are on the ranges of
the RVs, where unless mentioned otherwise, \(X_{i}\) is assumed to be
lower-bounded by \(b\). Typically, we can shift \(X_{i}\) so that the
lower bound is \(b=0\).

The problem considered here is to obtain lower confidence bounds on
the sum \(S_{U} =\sum_{i=1}^{n}\knuth{Y_i=1}U_{i}\).  Such lower
bounds are required to be correct with probability near \(1\) as
formalized next: For a given \(\epsilon\in (0,1)\), a random variable
\(S_{\lb}\) is a lower confidence bound on \(S_{U}\) with confidence
level \((1-\epsilon)\) if \(\Prob(S_{\lb} > S_{U}) \leq \epsilon\) for
every allowed distribution.  The parameter \(\epsilon\) is referred to
as the significance level or error bound for the estimate. The
confidence level \((1-\epsilon)\) can be interpreted as a lower bound
on the coverage probability of the confidence set \([S_{\lb},\infty)\)
for \(S_{U}\). In the main text, we used the notation \(S=S_{U}\) if
\(U=\Exp(X|\Past)\) and \(S'=S_{U}\) if \(U=X\).  When necessary, we
explicitly indicate the dependence of \(S\), \(S'\), and \(S_{\lb}\)
on \(n\) by adding the subscript \(n\) to each quantity.  While the
confidence bound \(S_{\lb}\) also depends on \(\epsilon\), we treat
this dependence as implicit throughout the paper. 

Estimation factors as defined in the main text and in the next section
provide lower confidence bounds.  In most cases, we wish to determine
good estimation factors in the sense that the lower confidence bounds
that they yield are as high as possible.  We typically fix the error
bound \(\epsilon\) and then minimize the expected value of
\(\Delta = S_{U}-S_{\lb}\) for the distribution that we expect to see
based on previously obtained information, for example, according to
calibration data or past performance.  The smaller the expectation of
\(\Delta\), the better the performance of the constructed confidence
bound \(S_{\lb}\). We remark that it is in principle possible for the
expectation of a lower confidence bound for \(S_{U}\) to be above that
of \(S_{U}\). However, such pathological behavior does not occur for
the lower confidence bounds obtained with estimation factors; see
Prop.~\ref{thm:soundness}.  The confidence bounds obtained always
satisfy the designed error bounds for every allowed distribution.
However, they generally perform optimally or near-optimally only for
distributions that are close to the one used for optimization. It is
usually the case that the apparatus is intended to produce
distributions for which each trial is independent and identical, where
the trial distributions are either designed or can be calibrated.
Although we may not trust that the apparatus performs ``honestly'' as
expected when it is used in applications, we use the designed or
calibrated trial distributions to optimize the expected confidence
bounds. If the apparatus behaves differently when it matters, the
actual confidence bounds may not be optimal, but we can still
guarantee that they satisfy the designed error bounds.

\section{Estimation-factor theorem and models for spot-checking}
\label{sect:EF_method}

An estimation factor of power \(\beta>0\) for estimating
\(\knuth{Y_i=1}U_{i}\) for model \(\cM\) at trial \(i\) is a
non-negative function \(T_{i}\) of \((\knuth{Y_{i}=0}X_{i}, Y_{i})\)
that may depend on \(\Past_{i}\) and satisfies the estimation-factor
inequality
\begin{align}
  \Exp(T_{i} e^{-\beta \knuth{Y_{i}=1}U_{i}}|\Past_{i})\leq 1
  \label{eq:efineq}
\end{align}
for every allowed distribution in the model \(\cM\). For simplicity of
notation, we identify \(T_{i}(X_{i},Y_{i})\) with
\(T_{i}(\knuth{Y_{i}=0}X_{i},Y_{i})\).  That is, when writing
\(T_{i}(X_{i},Y_{i})\), we implicitly assume that when \(Y_{i}>0\),
\(T_{i}\) does not depend on \(X_{i}\) and
\(T_{i}(X_{i},Y_{i})=T_{i}(0,Y_{i})\).  The terminology ``estimation
factor'' was first introduced in Ref.~\cite{knill:qc2017a}, where the
quantity being estimated can be interpreted as a sum of logarithms of
model-constrained conditional probabilities.  The choice of \(T_{i}\) 
may depend on observed values within \(\Past_{i}\). We can take 
advantage of this dependence to adapt \(T_{i}\) based on 
information gained before trial \(i\).

Estimation factors for the \(n\) trials determine the following lower 
confidence bound \(S_{\lb}\) on \(S_{U}\) at confidence level \((1-\epsilon)\):
\begin{align}
  S_{\lb} = \frac{\sum_{i=1}^{n}\ln(T_{i})+\ln(\epsilon)}{\beta}.
  \label{eq:conf_lb}
\end{align}
This
is a consequence of the estimation-factor theorem for spot-checking:

\begin{theorem}[Estimation-factor theorem] \label{thm:ef_confbound}
  With the notation introduced in the previous paragraphs, for all
  allowed distributions,
  \begin{align}
    \Exp\qty(\knuth{S_{\lb} \geq S_{U}}\Big|\Past_{1}) \leq \epsilon.
  \end{align}
\end{theorem}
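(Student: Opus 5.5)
The plan is to build a nonnegative test supermartingale from the estimation factors and then apply a conditional Markov inequality. Define, for $k=0,1,\ldots,n$,
\begin{align}
  Z_{k} = \prod_{i=1}^{k} T_{i}\, e^{-\beta \knuth{Y_{i}=1}U_{i}},
\end{align}
with $Z_{0}=1$. Each $T_{i}$ is nonnegative and the exponential is positive, so $Z_{k}\geq 0$. Rearranging Eq.~\eqref{eq:conf_lb}, the event $S_{\lb}\geq S_{U}$ is the event
\begin{align}
  \sum_{i}\ln(T_{i})+\ln(\epsilon)\geq \beta\sum_{i}\knuth{Y_{i}=1}U_{i},
\end{align}
which is equivalent to $Z_{n}\geq 1/\epsilon$ because $\beta>0$. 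Here I use the conventions $\ln 0=-\infty$ and, if needed, $U_{i}=+\infty$. If some $T_{i}=0$, then $S_{\lb}=-\infty$ and the event fails unless $S_{U}=-\infty$. That cannot happen since $U_{i}\geq b$. Hence $\knuth{S_{\lb}\geq S_{U}}\leq \epsilon Z_{n}$ pointwise, and it suffices to show $\Exp(Z_{n}|\Past_{1})\leq 1$.

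The main step is the tower-property argument. I will show by backward induction that $\Exp(Z_{n}|\Past_{k})\leq Z_{k-1}$ for $k=n,n-1,\ldots,1$. The key observation is that $Z_{k-1}$ is a function of $\Past_{k}$. This holds because $\Past_{k}$ contains $X_{j},Y_{j}$ for $j<k$ (including unobserved $X_{j}$, per the notation section) together with the earlier past on which the $T_{j}$ are allowed to depend. Also, $U_{j}$ is either $X_{j}$ or $\Exp(X_{j}|\Past_{j})$, and the latter is a function of $\Past_{j}$, which is refined by $\Past_{k}$.

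With this in hand the induction runs as follows. Since the past is nested, $\Past_{k}$ is determined by $\Past_{k+1}$. I write $Z_{k}=Z_{k-1}T_{k}e^{-\beta\knuth{Y_{k}=1}U_{k}}$, pull $Z_{k-1}$ out of the conditional expectation given $\Past_{k}$, and then apply the estimation-factor inequality Eq.~\eqref{eq:efineq}. This gives
\begin{align}
  \Exp(Z_{n}|\Past_{k})=\Exp\big(\Exp(Z_{n}|\Past_{k+1})\big|\Past_{k}\big)\leq \Exp(Z_{k}|\Past_{k})\leq Z_{k-1}.
\end{align}
The first inequality is the induction hypothesis, and the base case $k=n$ is just the last two steps. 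Taking $k=1$ yields $\Exp(Z_{n}|\Past_{1})\leq Z_{0}=1$, so
\begin{align}
  \Exp\big(\knuth{S_{\lb}\geq S_{U}}\big|\Past_{1}\big)\leq \epsilon\,\Exp(Z_{n}|\Past_{1})\leq\epsilon.
\end{align}

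I expect the main obstacle to be the measure-theoretic bookkeeping rather than the inequality itself. Four points need care. First, I must check that the paper's convention (conditional expectations set to $0$ on null events, with a discrete space) still supports the tower property and the "pull out what is known" rule; discreteness should make both elementwise verifiable. Second, nonnegativity lets me use expectations valued in $[0,\infty]$ without integrability issues. Third, I must handle $U_{i}=\Exp(X_{i}|\Past_{i})$ possibly taking the value $+\infty$, which makes the exponential $0$; this is harmless. Fourth, the paper excludes distribution-dependent random variables from $\Past_{i}$, so measurability of $U_{i}$ with respect to $\Past_{i}$ must be argued directly from its definition as a function of $\Past_{i}$, not by appeal to membership in the past.
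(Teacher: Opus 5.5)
Your proof is correct and follows the paper's argument. Both form the nonnegative product $\prod_i T_i e^{-\beta\knuth{Y_i=1}U_i}$, then use the tower property and the fact that earlier factors are determined by $\Past_k$ to reduce to the estimation-factor inequality, and finish with a conditional Markov inequality. The differences are minor: you induct backward on $\Exp(Z_n|\Past_k)\leq Z_{k-1}$ where the paper inducts forward on $\Exp(\cT_k e^{-\beta S_k}|\Past_1)\leq 1$, and you handle the edge cases $T_i=0$ and $U_i=+\infty$ more explicitly than the paper does.
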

In particular, the probability that \(S_{\lb} \geq S_{U}\) 
is at most \(\epsilon\), and
this probability bound holds even when conditioned on any
event that depends only on the past \(\Past_{1}\) of the experiment. 
Since the event  \(S_{\lb} > S_{U}\) implies  \(S_{\lb} \geq S_{U}\),
it follows that \(\Prob(S_{\lb} > S_{U})\leq \epsilon\) for every allowed distribution.
Therefore, \(S_{\lb}\) is a lower confidence bound on \(S_{U}\) with confidence 
level \((1-\epsilon)\).

\begin{proof}
  Let \(S_{k}=\sum_{i=1}^{k}\knuth{Y_i=1} U_{i}\) be the partial sum of the
  \(\knuth{Y_i=1}U_{i}\) and \(\cT_{k}=\prod_{i=1}^{k}T_{i}\) the partial
  product of estimation factors up to trial \(k\). We have
  \(S_{U}=S_{n}\).  We first show that \(\cT_{k}\) satisfies the inequality
  \(\Exp(\cT_{k} e^{-\beta S_{k}}|\Past_{1})\leq 1\) for all \(k\).   We proceed by
  induction. The inequality holds for \(k=1\) by Eq.~\eqref{eq:efineq}. For 
  \(k>1\),
  \begin{align}
    \Exp(\cT_{k} e^{-\beta S_{k}}|\Past_{1})
    &= \Exp(\Exp(\cT_{k-1}e^{-\beta S_{k-1}}T_{k}e^{-\beta \knuth{Y_k=1} U_{k}}|\Past_{k})|\Past_{1})
      \nonumber\\
    &= \Exp(\cT_{k-1}e^{-\beta S_{k-1}}\Exp(T_{k}e^{-\beta \knuth{Y_k=1} U_{k}}|\Past_{k})|\Past_{1})
      \nonumber\\
    &\leq\Exp(\cT_{k-1}e^{-\beta S_{k-1}}|\Past_{1})
      \nonumber\\
    &\leq 1.
  \end{align}
  The first line is due to the tower property of conditional
    expectations.  The second line follows from the first because
  \(\cT_{k-1}e^{-\beta S_{k-1}}\) is determined by
  \(\Past_{k}\).  This is because \(T_{i}\) and \(S_{i}\) for \(i\leq k-1\) 
  are  functions of RVs that are part of
  \(\Past_{k}\). Next, we used the estimation-factor inequality for
  \(T_{k}\) and applied the induction hypothesis.  
  
  When \(k=n\), we have \(\Exp(\cT_{n} e^{-\beta S_{n}}|\Past_{1})\leq 1\).
   We can therefore apply a general form of Markov's inequality to obtain 
  \(\Exp(\knuth{\cT_{n}e^{-\beta S_{n}}\geq 1/\epsilon}|\Past_{1})\leq
  \epsilon\).  For completeness, we prove this form of Markov's
  inequality.  Because
  \(\knuth{\cT_{n}e^{-\beta S_{n}}\geq 1/\epsilon} \leq \epsilon
  \cT_{n}e^{-\beta S_{n}}\), we have
  \begin{align}
    \Exp(\knuth{\cT_{n}e^{-\beta S_{n}}\geq 1/\epsilon}|\Past_{1})
    &\leq \Exp(\epsilon \cT_{n}e^{-\beta S_{n}}|\Past_{1})\nonumber\\
    &= \epsilon\Exp(\cT_{n}e^{-\beta S_{n}}|\Past_{1}) \leq \epsilon.
  \end{align}
  The inequality \(\cT_{n}e^{-\beta S_{n}}\geq 1/\epsilon\) is
  equivalent to the inequality
  \(S_{n}\leq \ln(\cT_{n}\epsilon)/\beta\), where the right-hand side
  is equal to \(S_{\lb}\) as defined in Eq.~\eqref{eq:conf_lb}.  This
  completes the proof of the theorem.
\end{proof}

Joint distributions of \(Y_{i},X_{i}\) are defined as distributions on
\(\cV=\{0,1,\ldots\}\times\rls\), the value space of
\((Y_{i},X_{i})\).  For a given global distribution \(\mu\) that
governs all trials in the experiment, we define \(\cM_{\mu, i}\) as
the set of distributions \(\nu\) on \(\cV\) for which there exists a
value \(\past_{i}\) of \(\Past_{i}\) that occurs with non-zero
probability according to \(\mu\), such that for all measurable subsets
\(\Phi\) of \(\cV\),
\(\Prob_{\nu}(\Phi)=\Prob_{\mu}((Y_{i},X_{i})\in\Phi|\past_{i})\).

In this work, the model \(\cM\) of the experiment is derived from a given
 set \(\cF\) of discrete distributions on \(\cV\). Given \(\cF\), we define
 \(\cM=\cM_{\cF}\) as the set of all discrete global distributions \(\mu\) 
 for which \(\cM_{\mu,i}\subseteq\cF\) for all \(i\). In general,  \(\cF\) 
 could depend on the trial index, but we do not require this here.  
 According to the following lemma, checking the
estimation-factor inequality for all distributions in \(\cM_{\cF}\) reduces to checking
unconditional estimation-factor inequalities for distributions in \(\cF\). For the rest
of this section, we consider a generic trial and omit trial indices for simplicity.

\begin{lemma}\label{lem:ucond_efineq}
  Let \(\cF\) be a set of discrete distributions on
  \(\cV=\{0,1,\ldots\}\times\rls\), \(\cM=\cM_{\cF}\) the derived
  model of the experiment, and \(T\) a non-negative function of
  \((Y,\knuth{Y=0}X)\).  Then \(T\) is an estimation factor of
    power \(\beta\) for estimating \(\knuth{Y=1}X\) for model \(\cM\)
    at each trial iff for all \(\nu\in\cF\),
    \begin{align}
      \Exp_{\nu}\qty(T e^{-\beta \knuth{Y=1}X})\leq 1.
      \label{eq:uncond_efineq}
    \end{align}
    \(T\) is an estimation factor of power \(\beta\) for estimating
    \(\knuth{Y=1}\Exp(X|\Past)\) for model \(\cM\) at each trial iff
    for all \(\nu\in\cF\),
    \begin{align}
      \Exp_{\nu}\qty(T e^{-\beta \knuth{Y=1}\Exp_{\nu}(X)})\leq 1.
      \label{eq:uncond_efineqE}
    \end{align}  
\end{lemma}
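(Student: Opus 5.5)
Both equivalences follow the same pattern. The key observation is that the conditional expectation in Eq.~\eqref{eq:efineq}, evaluated on an event \(\Past=\past\) of positive probability, is exactly an unconditional expectation with respect to the conditional distribution \(\nu\) of \((Y,X)\) given \(\past\). We therefore prove each direction by translating between conditional distributions in \(\cM_{\mu,i}\) and elements of \(\cF\).

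For the ``if'' direction, fix \(\mu\in\cM_{\cF}\), a trial \(i\), and a value \(\past_{i}\) with \(\Prob_\mu(\past_{i})>0\). Let \(\nu\) be the conditional law of \((Y_i,X_i)\) given \(\past_i\). By definition of \(\cM_{\mu,i}\) and of \(\cM_{\cF}\), we have \(\nu\in\cM_{\mu,i}\subseteq\cF\). Since \(T\) may depend on \(\past_i\) only through its fixed functional form at that past, the function \((y,x)\mapsto T(\knuth{y=0}x,y)\) is fixed once we condition on \(\past_i\). Discreteness then gives
\[
\Exp_\mu\bigl(Te^{-\beta\knuth{Y_i=1}X_i}\big|\past_i\bigr)=\Exp_\nu\bigl(Te^{-\beta\knuth{Y=1}X}\bigr)\le 1
\]
by Eq.~\eqref{eq:uncond_efineq}. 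On pasts of probability zero, the conditional expectation is \(0\) by the paper's convention, so the inequality holds there trivially. For the second statement, we additionally note that on the event \(\past_i\), the quantity \(\Exp_\mu(X_i|\Past_i)\) equals the constant \(\Exp_\nu(X)\). The same computation then reduces Eq.~\eqref{eq:efineq} with \(U_i=\Exp(X_i|\Past_i)\) to Eq.~\eqref{eq:uncond_efineqE}.

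For the ``only if'' direction, given \(\nu\in\cF\) we need a global distribution \(\mu\in\cM_{\cF}\) realizing \(\nu\) as a conditional law at some trial with positive-probability past. The natural choice is the i.i.d. product \(\mu=\nu^{\otimes n}\), with a trivial (deterministic) pre-experiment history. Then every conditional distribution of \((Y_i,X_i)\) given the past equals \(\nu\), so \(\cM_{\mu,i}=\{\nu\}\subseteq\cF\), and \(\mu\) is discrete. At trial \(1\), whose past is trivial and has probability one, the estimation-factor inequality for \(\mu\) reads exactly as Eq.~\eqref{eq:uncond_efineq}, respectively Eq.~\eqref{eq:uncond_efineqE}, because \(\Exp_\mu(X_1|\Past_1)=\Exp_\nu(X)\).

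The main point needing care is the construction in the ``only if'' direction. We must check that this product distribution is admissible: it must be discrete (true, since \(\nu\) is), and any free-choice or other constraints must already be encoded in \(\cF\), which is the case because the model is defined solely through \(\cF\). We should also make sure that the estimation factor, possibly depending on \(\past\), is evaluated at the actual trial-\(1\) past, so that the statement ``at each trial'' applies to the same function \(T\). A second subtle point is the well-definedness of \(\Exp_\nu(X)\) in \([-\infty,\infty]\), which uses the one-sided boundedness of \(X\). If \(\Exp_\nu(X)=+\infty\), we interpret \(e^{-\beta\cdot\infty}=0\) consistently on both sides of the equivalence.
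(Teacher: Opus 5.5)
Your proposal is correct and follows essentially the same route as the paper. For the ``if'' direction you identify the conditional expectation on a positive-probability past with an unconditional expectation under the conditional law \(\nu\in\cM_{\mu,i}\subseteq\cF\), with \(\Exp_\nu(X)\) replacing \(\Exp(X|\Past)\) in the second case, and for the ``only if'' direction you realize an arbitrary \(\nu\in\cF\) through the i.i.d. global distribution, exactly as the paper does; your remarks on zero-probability pasts and a possibly infinite \(\Exp_\nu(X)\) are harmless refinements, not a change of approach.
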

If \(T\) satisfies Eq.~\eqref{eq:uncond_efineq} or Eq.~\eqref{eq:uncond_efineqE}
for all \(\nu\in\cF\), we say that \(T\) is an estimation factor of power \(\beta\) 
for estimating \(\knuth{Y=1} U\) with \(U=X\) or \(U=\Exp(X|\Past)\), respectively, 
for the model \(\cF\) of the generic trial.

We also remark that if Eq.~\eqref{eq:uncond_efineq} is satisfied for
all \(\nu\in\cF\), then it is satisfied for all \(\nu\)
expressible as convex combinations of distributions in \(\cF\).  This
is because the product \(T e^{-\beta \knuth{Y=1}X}\), as a function on
the underlying measurable space, does not depend on the associated
distribution \(\nu\), and so the expectation
\(\Exp_{\nu}\qty(T e^{-\beta \knuth{Y=1}X})\) is linear in the
probabilities of events according to \(\nu\).  The convex combination
may involve a countable number of terms.

\begin{proof}
  Let \(U=X\) or \(U=\Exp(X|\Past)\).  The estimation-factor
  inequality in Eq.~\eqref{eq:efineq} is equivalent to the statement
  that for all \(\mu\in\cM\) and all values \(\past\) of \(\Past\)
  with \(\mu(\past)>0\), we have
  \(\Exp_{\mu}(T e^{-\beta \knuth{Y=1}U}|\past)\leq 1\).  For a given
  \(\mu\) and value \(\past\) of \(\Past\), let \(\nu\) be the
  distribution of \((Y,X)\) with respect to \(\mu\) conditional on
  \(\Past=\past\).  Then \(\nu\in\cF\) and
  \(\Exp_{\mu}(T e^{-\beta \knuth{Y=1}U}|\past)=\Exp_{\nu}(T e^{-\beta
    \knuth{Y=1}V})\), where \(V=X\) for \(U=X\) and
    \(V=\Exp_{\nu}(X)\) for \(U=\Exp(X|\Past)\).  Consequently, if
  Eq.~\eqref{eq:uncond_efineq} or Eq.~\eqref{eq:uncond_efineqE}
  holds for all \(\nu\in\cF\), then so does the corresponding 
  estimation-factor inequality in Eq.~\eqref{eq:efineq}.  For the
  converse, it suffices to observe that for every (discrete)
  \(\nu\in\cF\) and value \(\past\) of \(\Past\), there exists a
  distribution \(\mu\in\cM\) such that the distribution of \((Y,X)\)
  given \(\Past=\past\) is \(\nu\).  For example, we can choose the
  distribution \(\mu\) such that each trial is independent and
  identical with \(\cM_{\mu,i}=\nu\) for all \(i\).
\end{proof}

The model \(\cF\) that we consider in this work consists of all
discrete distributions for which \(Y\) and \(X\) are independent, the
probability of spot-checking (that is, of \(Y=0\)) is fixed to be
\(\omega\in (0,1)\), and \(X\) is lower-bounded by \(b\).  We denote
this model by \(\cF_{\omega}\). The independence of \(Y\) and \(X\) is
equivalent to the condition that the distributions in the derived
model of the experiment, \(\cM_{\cF}\), satisfy the free-choice
assumption.  It is convenient to standardize the lower bound to
\(b=0\) by shifting \(X\).  For this purpose, we observe that \(T\) is
an estimation factor for estimating \(U\) iff
\(Te^{-\beta b \knuth{Y=1}}\) is an estimation factor for estimating
\((U-b)\). Therefore, when constructing estimation factors or checking
the estimation-factor inequality, we can, without loss of generality,
use the shifted RVs.  The confidence bounds obtained for the sum of
the shifted RVs and for the sum of the original RVs differ by
\(\sum_{i=1}^{N}b \knuth{Y_{i}=1}\), which is random but directly
observed in the experiment. Moreover, to optimize and compare these
bounds, we usually focus on the difference between \(S_{U}\) and
\(S_{\lb}\), which cancels out the offset from shifting the RVs. In
this case, optimizing and comparing confidence bounds for the original
and shifted RVs are equivalent. For these reasons, unless otherwise
stated, we assume \(b=0\) for the remainder of the appendix.

We end this section by formalizing the observation that for
any set of discrete distributions on \(\cV=\{0,1,\ldots\}\times\rls\) 
as specified in the lemma below, such as \(\cF_{\omega}\),
whether \(T\) is an estimation factor does not depend
on whether \(U=X\) or \(U=\Exp(X|\Past)\). 

\begin{lemma}\label{lem:Uequiv}
  Let \(I\) be an interval of the form \(I=[b,c]\) or
  \(I=[b,\infty)\), and let \(J\) be a subset of \([0,1]\).  Define \(\cF\) as
  the set of all discrete distributions on \(\cV=\{0,1,\ldots\}\times\rls\) for which the
  values of \(X\) are in \(I\), and \(Y\) is independent of \(X\) with
  \(\Prob(Y=0)\in J\).  Let \(\cM=\cM_{\cF}\) be the derived model 
  of the experiment, and let \(T\) be an
  estimation factor of power \(\beta\) for estimating
  \(\knuth{Y=1}U\).  Then, \(T\) is an estimation factor with
  \(U=X\) iff it is an estimation factor with
  \(U=\Exp(X|\Past)\). 
\end{lemma}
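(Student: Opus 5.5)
By Lemma~\ref{lem:ucond_efineq}, it suffices to compare the two unconditional inequalities, Eq.~\eqref{eq:uncond_efineq} and Eq.~\eqref{eq:uncond_efineqE}, over all \(\nu\in\cF\).

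Every \(\nu\in\cF\) has \(Y\) independent of \(X\), with \(\Prob_\nu(Y=0)=\omega\in J\). Let \(p_y=\Prob_\nu(Y=y)\) and let \(\rho\) be the marginal of \(X\), supported in \(I\). Write \(T(X,0)=T'(X)\) and \(T(X,y)=t_y\) for \(y\geq 1\). Using independence, the two expectations are
\begin{align*}
  A(\nu)&=\omega\,\Exp_\rho(T'(X)) + p_1 t_1\,\Exp_\rho(e^{-\beta X}) + \sum_{y\geq 2} p_y t_y ,\\
  B(\nu)&=\omega\,\Exp_\rho(T'(X)) + p_1 t_1\, e^{-\beta \Exp_\rho(X)} + \sum_{y\geq 2} p_y t_y .
\end{align*}
Because \(x\mapsto e^{-\beta x}\) is convex, Jensen gives \(e^{-\beta\Exp_\rho X}\leq \Exp_\rho e^{-\beta X}\), and \(p_1t_1\geq 0\), so \(B(\nu)\leq A(\nu)\) for every \(\nu\). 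Hence the condition \(A\leq 1\) for all \(\nu\in\cF\) implies \(B\leq 1\) for all \(\nu\in\cF\). So an estimation factor for \(U=X\) is also one for \(U=\Exp(X|\Past)\).

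For the converse, assume \(B(\nu)\leq1\) for all \(\nu\in\cF\). The key observation is that \(A\) is affine in the distribution \(\nu\). Given \(\nu\in\cF\) with \(X\)-marginal \(\rho=\sum_x \rho(x)\delta_x\) (countable, since \(\nu\) is discrete), decompose \(\nu=\sum_x\rho(x)\nu_x\). Here \(\nu_x\) has the same \(Y\)-distribution as \(\nu\) and \(X\) deterministically equal to \(x\in I\). Each \(\nu_x\) lies in \(\cF\): \(X\) takes values in \(I\), \(X\) and \(Y\) are trivially independent, and \(\Prob(Y=0)\in J\). For deterministic \(X\), Jensen is an equality, so \(A(\nu_x)=B(\nu_x)\leq 1\). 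Since \(A\) is linear in \(\nu\) (the remark after Lemma~\ref{lem:ucond_efineq}), possibly with countably many terms, \(A(\nu)=\sum_x\rho(x)A(\nu_x)\leq 1\). This also uses that \(T\) does not depend on \(\nu\).

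The main obstacle is making sure this decomposition stays inside \(\cF\). That requires that \(\cF\) contains all deterministic-\(X\) distributions with the same \(Y\)-law, which holds because \(\cF\) is defined as all discrete distributions with values of \(X\) in \(I\) and \(\Prob(Y=0)\in J\). The countable-sum interchange is justified by non-negativity, via Tonelli. The shape of \(I\) as an interval is not needed beyond containing the support points. It would matter only if one also wanted the means to stay in \(I\), and this holds automatically by convexity.
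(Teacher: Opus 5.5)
Your proof is correct and follows the paper's argument. One direction uses Jensen's inequality for the convex \(e^{-\beta x}\), together with \(T(0,1)\geq 0\), to get from \(U=X\) to \(U=\Exp(X|\Past)\); the converse decomposes \(\nu\in\cF\) into distributions with deterministic \(X\) and the same \(Y\)-marginal, where the two inequalities coincide, and uses that \(\Exp_{\nu}(Te^{-\beta\knuth{Y=1}X})\) is linear in \(\nu\). Your direct use of the point masses \(\nu_x\), rather than the paper's auxiliary \(\nu'\) placing \(X\) at \(\Exp_{\nu}(X)\), makes it clear, as you note, that the interval shape of \(I\) is not actually needed.
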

   In this lemma, we allow the spot-checking probability
  \(\Prob(Y=0)\) to be arbitrary within a given subset of
  \([0,1]\). We exploit this option in Sect.~\ref{sect:EFs_with_bias}.

\begin{proof}
  We freely use Lem.~\ref{lem:ucond_efineq} to drop conditioning on
  the past and check estimation-factor inequalities on only
  distributions \(\nu\in\cF\).  Let \(T\) be an estimation factor with
  \(U=\Exp(X|\Past)\) for \(\cM\).  Given \(\nu\in\cF\), construct the
  distribution \(\nu'\) for which \(X\) is deterministic according to
  \(\Prob_{\nu'}(X=\Exp_{\nu}(X))=1\) and for all values \(y\) of
  \(Y\), \(\Prob_{\nu'}(y)=\Prob_{\nu}(y)\).  The distribution
  \(\nu'\) is also in \(\cF\) because \(I\) is an interval. Every
  distribution in \(\cF\) for which \(X\) is deterministic can be
  obtained with this construction for some \(\nu\in\cF\).  Let \(\cF'\) be the set
  of such deterministic distributions.  For every \(\nu'\in\cF'\),
  Eq.~\eqref{eq:uncond_efineq} holds with \(U=\Exp_{\nu'}(X)\).
  Furthermore, the left-hand side of the inequality in
  Eq.~\eqref{eq:uncond_efineq} is unchanged under the replacement of
  \(\Exp_{\nu'}(X)\) by \(X\), since \(X\) is deterministic.  We
  conclude that if \(T\) is an estimation factor with
  \(U=\Exp(X|\Past)\) for \(\cM\) and thus for \(\cF\), then it
  is an estimation factor with \(U=X\) for \(\cF'\).  We have that
  \(\cF\) is contained in the set of all convex combinations of
  distributions in \(\cF'\), where the convex combination may involve
  a countable number of terms.  Since Eq.~\eqref{eq:uncond_efineq} is
  preserved under such convex combinations (see the remark after
  Lem.~\ref{lem:ucond_efineq}), we can conclude that \(T\) is an
  estimation factor with \(U=X\) for \(\cF\). Consequently, by
  Lem.~\ref{lem:ucond_efineq}, \(T\) is also an estimation factor with
  \(U=X\) for \(\cM\).
  
  For the converse, consider an estimation factor \(T\) with \(U=X\).
  From Eq.~\eqref{eq:uncond_efineq}, for all \(\nu\in\cF\)
  \begin{align}
    1&\geq \Exp_{\nu}\qty(T(X,Y) e^{-\beta \knuth{Y=1}X})
       \notag\\
     &=\Exp_{\nu}(\knuth{Y= 0}T(X,0)) + T(0,1) \Exp_{\nu}\qty(\knuth{Y=1}e^{-\beta X})
       + \Exp_{\nu}\qty(\knuth{Y>1} T(0,Y)) \notag \\
     & =\Exp_{\nu}(\knuth{Y= 0}T(X,0)) + \Prob_{\nu}(\knuth{Y=1})T(0,1) \Exp_{\nu}\qty(e^{-\beta X})
       + \Exp_{\nu}\qty(\knuth{Y>1} T(0,Y)).
  \end{align}
  The second line follows from the fact that the estimation factor \(T\) depends on \(X\) only when \(X\) is spot-checked, that is, when \(Y=0\).
  The last line follows from the independence of \(X\) and \(Y\) according to \(\nu\). Because \(T\geq 0\), we have \(T(0,1)\geq 0\). 
  By the convexity of the function \(f(x)=e^{-x}\) and Jensen’s inequality, \(\Exp_{\nu}\qty(e^{-\beta X})\geq e^{-\beta \Exp_{\nu}(X)}\).    
   Consequently, continuing from the last line we obtain
  \begin{align}
    1 &\geq \Exp_{\nu}(\knuth{Y= 0}T(X,0)) + \Prob_{\nu}(\knuth{Y=1})T(0,1) e^{-\beta \Exp_{\nu}(X)}
        + \Exp_{\nu}\qty(\knuth{Y>1} T(0,Y))    \notag \\  
      &= \Exp_{\nu}(\knuth{Y= 0}T(X,0)) + T(0,1) \Exp_{\nu}\qty(\knuth{Y=1} e^{-\beta \Exp_{\nu}(X)})
        + \Exp_{\nu}\qty(\knuth{Y>1} T(0,Y)) \notag\\
      &= \Exp_{\nu}\qty(T(X,Y) e^{-\beta\knuth{Y=1}\Exp_{\nu}(X)}).
  \end{align}
  It follows that \(T\) is an estimation factor with \(U=\Exp(X|\Past)\).
\end{proof}

Given that the sets of distributions on \(\cV\) considered in this
work satisfy the conditions of Lem.~\ref{lem:Uequiv}, we consider only the case
\(U=X\) for the remainder of this appendix.

\section{Constructing estimation factors for spot-checking: Given a reference distribution of \texorpdfstring{\(X\)}{X}}
\label{sect:numerical_opt}

From now on, unless stated otherwise, we assume that in each trial
\(i\), \(X_i\) is lower-bounded by \(0\) and \(Y_i\) is a $0/1$-valued
binary RV.  Thus, \(\knuth{Y_i=1}=Y_i\) and \(\knuth{Y_i=0}=(1-Y_i)\).
Because of Lem~\ref{lem:Uequiv}, without loss of generality, we
consider only the case \(U=X\). 
 
Since in most practical situations the spot-checking probability is
constant across trials, we consider the spot-checking scenario with
\(\Prob(Y_{i}=0)=\omega\) for all \(i\).  We show how to optimize
estimation factors prior to an experiment, given the significance
level \(\epsilon\) and a target (or reference) distribution for the
\(X_{i}\).  We consider two goals.  The first is to minimize the
expected gap \(\Exp(S_{U}-S_{\lb})\) between the estimated quantity
\(S_{U}\) and the confidence bound \(S_{\lb}\) obtained for a
given number \(n\) of trials. The second is to minimize the required
number of trials, \(n\), that ensures the expected gap per unchecked
trial smaller than or equal to a given threshold \(\delta_{\thresh}\).
For these goals, we examine the performance under the assumption that
trials are independent and identically distributed (i.i.d.) with the
target distribution, as most quantum experiments aim for such
trials. At the same time, we emphasize that the constructed estimation
factors remain valid and applicable even when the trials to be
analyzed are not i.i.d., with past-conditional distributions that can
differ from the target distribution. If the past-conditional
distribution of \(X_{i}\) in each experimental trial is close to the
target distribution, the constructed estimation factors are expected
to achieve near-optimal performance.  Thus, one practical approach for
choosing the target distribution and thus the estimation factor to be
used prior to conducting the experiment is to perform calibration runs
or theoretical simulations based on predictive models.

In this work, we are evaluating the performance of estimation factors
by the expectation of the confidence bound obtained. This evaluation
does not take into consideration the distribution of the confidence
bound, so one cannot conclude that typically obtained confidence
bounds are good without additional arguments. We leave this for future
work, but note here that for the assumptions involving i.i.d. samples
from the reference distribution and estimation factors that do not
depend on the trial, the confidence bound is a sum of independent
RVs. This suggests that the confidence bound concentrates around its
expected value with a distribution that has relative width
\(1/\sqrt{n}\). To quantify the concentration requires estimating the
variance of the logarithm of the estimation factor used. If estimation
factors are derived from calibration trials, similar concentration
results can be applied to the calibration trials.

\subsection{Estimation factors for minimizing the expected gap}
\label{sect:numerical_opt_smax}

For the first goal, we need to minimize the expected gap by maximizing
the expectation of the confidence bound \(S_{\lb}\) on
\(S_{U}=\sum_{i=1}^{n}Y_{i}X_{i}\), where the \(X_{i}\) have the
target distribution independent of the past. Because
the trials are i.i.d. and \(S_{\lb}\) is a sum over trials, we can
choose the same estimation factor for each trial independent of the
past.  We then optimize the contribution to \(S_{\lb}\) from each
trial separately and, therefore, consider a generic trial, dropping
the trial indices and conditioning on the past.  Let \(\nu\) be the
joint distribution of \(Y\) and \(X\) in a generic trial, where \(Y\)
and \(X\) are independent with \(\Prob(Y=0)=\omega\) and \(X\)
following the target distribution.  We use the expectation of
\(S_{\lb}/n\) as the objective function in the optimization problem:
\begin{align}
  \text{Maximize over \(\beta, T\):\ }
  & \frac{1}{\beta}\qty(\Exp_{\nu}(\ln(T(X,Y)))+\ln(\epsilon)/n), 
    \nonumber\\
  \text{subject to:\ }
  &
    \beta> 0,\; T\geq 0,\;\text{and for all \(\nu'\in\cF_{\omega}\):\ }
    \Exp_{\nu'}\qty(T(X,Y)e^{-\beta YX})\leq 1.
    \label{prob:beta-t_optimization}
\end{align}
As shown in the main text, it suffices to consider estimation factors of the form
\begin{align}
  T(X,Y) &= (1-Y)\qty(1-(1-\omega)te^{-\beta X}) /\omega+ Yt,
           \label{eq:t_maximal}
\end{align}
where \(t\in[0, 1/(1-\omega)]\) and $\beta>0$. This reduces the dimensionality of
the optimization problem to two, with the feasible region for \(\beta\) and \(t\) forming
a convex set. Substituting the estimation factor of Eq.~\eqref{eq:t_maximal} in
Problem~\eqref{prob:beta-t_optimization}, we obtain for the objective function
\begin{align} \cO(\beta,t)
  &=
    \frac{1}{\beta}\qty(\omega\qty(\Exp_{\nu}\qty(\ln(1-(1-\omega)te^{-\beta X})) -\ln(\omega)) +
    (1-\omega)\ln(t) + \ln(\epsilon)/n).
    \label{eq:obetat}
\end{align}

If desired, estimation factors can be adapted during the run of
sequential trials.  To do this, the power \(\beta\) must be fixed
independently of the sequential trial results to be analyzed so that
the confidence bound \(S_{\lb}\) is valid.  Adaptation therefore
involves optimizing over \(t\) while keeping \(\beta\) fixed.  In view
of this, we solve the optimization
Problem~\eqref{prob:beta-t_optimization} in two steps: First, we
optimize over \(t\) for a fixed \(\beta\), and then we optimize over
\(\beta\). For the first step, we modify the objective function above
to retain only the terms that depend on \(t\).  The optimization
problem now is
\begin{align}
  \textrm{Maximize over \(t\):\ }
  &
    \cO(t) = \omega\Exp_{\nu}\qty(\ln(1-(1-\omega)te^{-\beta X})) +
    (1-\omega)\ln(t),
    \nonumber\\
  \textrm{subject to:\ }
  &
    t\in I\coloneqq  [0, 1/(1-\omega)].
    \label{eq:ot}
\end{align}

The objective function \(\cO(t)\) is an expectation over \(X\) of the function
\begin{align}
  h(t; X)&=\omega\ln(1-(1-\omega)t e^{-\beta X}) + (1-\omega)\ln(t).
\end{align}
Here, we use the notation \(h(t; X)\) to represent a function of \(t\)
that is parameterized by \(X\).  For any fixed \(X=x\geq 0\), the
function \(h(t; x)\) is strictly concave on the interval
\(J \coloneqq \qty(0,e^{\beta x}/(1-\omega))\). This is the interval on
  which \(h(t;x)\) is defined and differentiable, since for \(t\)
  outside this interval, at least one of the arguments of the
  logarithms is non-positive. The function \(h(t;x)\) diverges to
  \(-\infty\) at the boundaries of \(J\).  
Within this interval, \(h(t;x)\) achieves its maximum at
\(t_{1}\coloneqq e^{\beta x}\).  To verify these claims, we compute the
first derivative
\begin{align}
  h'(t;x) &=
    -\frac{\omega(1-\omega)}{e^{\beta x}-(1-\omega)t}
    + \frac{1-\omega}{t}.
    \label{eq:h't}
\end{align}
This is a sum of two strictly decreasing terms on \(J\), so \(h(t;x)\)
is strictly concave. One can see that when \(t=t_{1} \in J\),
\(h'(t;x)=0\), so \(h(t;x)\) achieves its maximum on \(J\) at
\(t_{1}\). In Problem~\eqref{eq:ot}, \(t\) is constrained to the interval
\(I=[0, 1/(1-\omega)]\).  If \(t_{1}\) is greater than
\(1/(1-\omega)\), the upper endpoint of \(I\), the maximum of
\(h(t;x)\) on \(I\) occurs at \(1/(1-\omega)\). In general, the
maximum of \(h(t;x)\) on \(I\) is achieved at
\(t_{0}\coloneqq \min(e^{\beta x}, 1/(1-\omega))\).  Since \(x\geq 0\)
and \(\beta, \omega>0\), we always have \(t_{0}\geq 1\).  The maximum
of \(h(t;x)\) on \(I\) satisfies
\begin{align}
  h(t_0; x) & = \omega\ln(1-(1-\omega)t_0 e^{-\beta x}) + (1-\omega)\ln(t_0 ) \nonumber \\
            & \leq (1-\omega)\ln(t_0 ) \nonumber \\
            & = (1-\omega)\ln(\min(e^{\beta x}, 1/(1-\omega)) ), \nonumber \\
            & \leq -(1-\omega)\ln(1-\omega),
\end{align} 
where the second line follows from the fact that
\((1-\omega)t_0 e^{-\beta x}\geq 0\), which implies
\(\ln(1-(1-\omega)t_0 e^{-\beta x})\leq 0\), and the last line follows
from the monotonicity of the logarithm. In view of the above and
  the fact that \(\cO(t)\) is a convex combination of
  \(h(t;x)\) over a distribution of \(x\), \(\cO(t)\) is strictly
concave with non-negative derivative at \(t=1\), and
therefore reaches its maximum on \(I\) at some \(t\geq
1\). Furthermore, for all \(t\in I \), we have
\begin{align}
  \cO(t)= \Exp_{\nu} \qty(h(t;X)) & \leq \Exp_{\nu} \qty(h(t_0;X)) \nonumber \\
  &\leq -(1-\omega)\ln(1-\omega).  
    \label{eq:otbnd}
\end{align}

The strict concavity of \(\cO(t)\) implies that if the expectation
\(\Exp_{\nu}\qty(\ln(1-(1-\omega)te^{-\beta X}))\) can be evaluated
efficiently, then numerically optimizing over \(t\) is also efficient.
To show this we analyze the behavior of \(\cO(t)\).  The expression
for \(\cO(t)\) in Problem~\eqref{eq:ot} shows that
\(\cO(t)\to -\infty\) as \(t \to 0\).  Let \(t_{\sup}\) be the
supremum of values for \(t\) such that
\(1-(1-\omega)te^{-\beta X} > 0\) almost surely with respect to
\(\nu\).  \(\cO(t)\) may converge to a finite value as \(t\)
approaches \(t_{\sup}\) from below.  Since \(\cO(t)\) is strictly
concave and diverges to \(-\infty\) at \(t=0\), it possesses a unique
global maximum \(t_{\beta} \in (0, t_{\sup}]\).  The maximum is either
at \(t_{\sup}\) or at the root of the first-order optimality condition
\(\cO'(t)=\Exp_{\nu} \qty(h'(t;X))=0\).  By verifying the conditions
for interchanging differentiation and expectation on compact sub-intervals
of \((0,t_{\sup})\), it can be shown that
\(\cO'(t)=\Exp_{\nu} \qty(h'(t;X))\) is well-defined and continuous on
\((0,t_{\sup})\). See, for example, Ref.~\cite{folland1984real}, Thm.~2.27.
In view of Eq.~\eqref{eq:h't}, the first-order
optimality condition can be rearranged as:
\begin{align}
  \Exp_{\nu}\qty(\frac{\omega t}{e^{\beta X}-(1-\omega)t}) &= 1. \label{eq:solve_h't}
\end{align}
Inspecting Eq.~\eqref{eq:solve_h't}, the term inside the expectation
is monotonically increasing in \(t\). Furthermore, since
\(e^{\beta X} \geq 1\), this term is strictly less than \(1\) when
\(t <1\). Consequently, the condition in Eq.~\eqref{eq:solve_h't} can
only be satisfied for \(t_{\beta} \geq 1\). We therefore have
\(t_{\beta}\in[1,t_{\sup}]\).  We now impose the constraint
\(t \leq 1/(1-\omega)\) from Problem~\eqref{eq:ot}. Since \(\cO(t)\)
increases monotonically on \((0, t_{\beta}]\), the optimal solution
\(t'_{\beta}\) to the problem constrained to the feasible interval is
\(t'_{\beta}=\min(t_{\beta}, 1/(1-\omega)) \geq 1\).

To optimize \(\cO(\beta,t)\) over \(\beta\) and \( t\), it suffices to
optimize \(\cO(\beta,t'_{\beta})\) over \(\beta\). To facilitate this
optimization, we first establish upper bounds on \(\cO(\beta,t)\). For
this, we substitute the upper bound on \(\cO(t)\) from
Eq.~\eqref{eq:otbnd} into the expression for \(\cO(\beta,t)\) to
obtain
\begin{align}
  \cO(\beta,t) &= \frac{1}{\beta}\qty(\cO(t) - \omega\ln(\omega) + {\ln(\epsilon)}/{n})\nonumber\\
               &\leq \frac{1}{\beta}\qty(-(1-\omega)\ln(1-\omega) - \omega\ln(\omega) + {\ln(\epsilon)}/{n})\nonumber\\
               &= \frac{H_{e}(\omega)+\ln(\epsilon)/n}{\beta},
                 \label{eq:obtbnd1}
\end{align}
where
\(H_{e}(\omega)=-(1-\omega)\ln(1-\omega) - \omega\ln(\omega)\). 
Another upper bound on \(\cO(\beta,t)\) can be derived by noting that the function
\(f(x)=\ln(1-(1-\omega)t e^{-\beta x})\) is concave in \(x\).  To see this,
compute the first derivative \(f'(x)=(1-\omega)\beta t/(e^{\beta x} - (1-\omega)t)\), which is 
monotonically decreasing in \(x\). By concavity and Jensen's inequality,
\begin{align}
  \cO(\beta,t) &\leq \frac{1}{\beta}\qty(\omega\ln(1-(1-\omega)t e^{-\beta\Exp_{\nu}(X)})-\omega\ln(\omega)+(1-\omega)\ln(t) + \ln(\epsilon)/n), \nonumber \\
               & = \frac{1}{\beta}\qty(h(t; \Exp_{\nu}(X))-\omega\ln(\omega)+ \ln(\epsilon)/n).  
\end{align}
As before, \(h\qty(t; \Exp_{\nu}(X))\) is maximized at
\(t=e^{\beta \Exp_{\nu}(X)}\). Therefore,
\begin{align}
  \cO(\beta,t) &\leq \frac{1}{\beta}\qty((1-\omega)\beta\Exp_{\nu}(X) + \ln(\epsilon)/n)
                 \nonumber\\
               &\leq (1-\omega)\Exp_{\nu}(X).
                 \label{eq:obtbnd2}
\end{align}
This implies that \(\cO(\beta,t)\) is bounded from above by the
expectation of \(YX\). Since \(\cO(\beta,t)\) is the additive
contribution from a generic trial to the expectation of the lower
confidence bound \(S_{\lb}\), the upper bound in
Eq.~\eqref{eq:obtbnd2} can be converted to an upper bound on the
expectation of \(S_{\lb}\). Reassuringly, the expectation of
\(S_{\lb}\) is bounded from above by that of \(S_{U}\), which is a
desirable property, though it is not necessarily satisfied by lower
confidence bounds in general.  This property is formally established in
Prop.~\ref{thm:soundness}.

We can now find the optimal value of \(\beta\) by maximizing
\(\cO(\beta, t'_{\beta})\) over \(\beta\in(0,\infty)\).  Provided that the
 expectation \(\Exp_{\nu}\qty(\ln(1-(1-\omega)te^{-\beta X}))\) and
 the solution \(t'_{\beta}\) can be evaluated efficiently, this maximization
 reduces to a one-dimensional numerical search.  The
  analyses in the subsequent sections show that the optimal value of
  \(\beta\) is small for typical distributions and reasonable values
  of \(n\), so in most cases the search can be confined to the
  interval \((0,1]\).  The following considerations can help to
  confine the search more generally. Substituting \(t=1\) in
  \(h(t;x)\) shows that
  \(h(1;x) = \omega\ln(1-(1-\omega)e^{-\beta x}) \geq
  \omega\ln(\omega)\).  Therefore
  \(\cO(\beta,t_{\beta}) \geq \frac{1}{n\beta}\ln(\epsilon)\), which
  goes to zero from below as \(\beta\rightarrow\infty\).  Since the
  analytic upper bound in Eq.~\eqref{eq:obtbnd1} also goes to zero as
  \(\beta\rightarrow\infty\), so does \(\cO(\beta,t'_{\beta})\).  If
  the numerator of the analytic upper bound in Eq.~\eqref{eq:obtbnd1} 
  is non-positive, both the upper and the lower bounds on  
  \(\cO(\beta,t'_{\beta})\) monotonically approach zero from below 
  as  \(\beta\rightarrow\infty\), which implies that the optimal value of \(\beta\) 
  diverges.  However, because the RV \(X\) is bounded from below by \(0\), 
  non-positive lower confidence bounds are trivial. If, in our search for the
  optimal value of \(\beta\), we find a value \(\beta_{1}\) of
  \(\beta\) for which \(0<o_{1}\coloneqq   \cO(\beta_{1},t'_{\beta_{1}})\),
  we can establish a rigorous cutoff as follows.  First, the existence of \(o_{1} > 0\) 
  implies that the numerator of the analytic upper bound in Eq.~\eqref{eq:obtbnd1},
  that is, \(\qty(H_{e}(\omega)+\ln(\epsilon)/n)\), must be positive.   
   Second, this analytic upper bound guarantees that for all 
   \(\beta > \beta_u\), the objective function \(\cO(\beta, t'_{\beta})<o_{1}\), 
   where \(\beta_u\) is determined by \((H_{e}(\omega)+\ln(\epsilon)/n)/\beta_u = o_{1}\). 
   Consequently, the search for the optimal value of \(\beta\)  
   can be safely confined to the interval \((0, \beta_u]\).

\subsection{Estimation factors for minimizing the number of trials}
\label{sect:numerical_opt_nmin}
 
Next we address the second goal of minimizing the number of trials,  \(n\),
required to ensure that the expected gap is at most
\(n(1-\omega)\delta_{\thresh}\), where \(\delta_{\thresh}>0\) is a fixed threshold.
We refer to \(\delta_{\thresh}\) as
the expected gap per unchecked trial.  For this, we choose 
estimation factors to minimize \(n\) subject to the inequality constraint 
\begin{align}
  0&\geq \Exp\qty(S_{U}-S_{\lb}-\delta_{\thresh}\sum_{i=1}^{n}Y_{i})\nonumber\\
   &=n\qty((1-\omega)\Exp_{\nu}(X) - \cO(\beta,t) - (1-\omega)\delta_{\thresh}).
     \label{eq:gapbnd}
\end{align}
Write \(n G(\beta,t,n)\) for the expression on the right-hand side of this equation. 
  To check whether this inequality can be satisfied for a given \(n\), it
suffices to maximize \(\cO(\beta,t)\) as before to determine the
minimum of \(G(\beta,t,n)\) over all feasible \(\beta\) and \(t\).
This minimum depends on \(n\) and is denoted by \(G(n)\).  The only
term in \(G(\beta,t,n)\) that depends on \(n\) is \(\cO(\beta,t)\),
specifically, the term \(\ln(\epsilon)/(n\beta)\) in the expression
for \(\cO(\beta,t)\) given in Eq.~\eqref{eq:obetat}.  Because
\(\ln(\epsilon)/(n\beta)\) increases with \(n\), the maximum of
\(\cO(\beta,t)\) also increases with \(n\), making \(G(n)\) a
decreasing function of \(n\).  As a result, the minimum \(n\) for
which \(G(n)\leq 0\) and the corresponding \EF{} can be determined quickly. 
Since the construction of $\EF$s for minimizing the number 
of trials builds directly on that for minimizing the expected gap,  
in the following sections we focus on the latter and 
illustrate the performance of the resulting $\EF$s.

\section{Constructing estimation factors for spot-checking: Given a reference mean and variance of \texorpdfstring{\(X\)}{X}}
\label{sect:analytical_opt}

We show that estimation factors can be constructed using only the mean
and variance of the reference distribution.  This avoids having to
 know or learn the complete target distribution and reduces dependence 
 of the estimation factor's performance on other features of the distribution.
This approach typically has minimal impact on performance, as measured
by the expectation of the resulting confidence bound \(S_{\lb}\). To
explain this phenomenon, we derive upper bounds on the expectation of
the gap \(\Delta=S_{U}-S_{\lb}\) between the quantity being estimated
and the resulting confidence bound, given that the target distribution
of \(X\) has mean \(\theta\) and variance \(\sigma^{2}\).  The upper
bound derived is of order \(\sqrt{n}\); see Eq.~\eqref{eq:expgap}
below.  General statistical considerations indicate that the order
  of this bound cannot be improved.  In this sense, we say that the
confidence bound \(S_{\lb}\) is asymptotically tight.

As in the previous section, we consider the case of i.i.d. trials,
where at each trial the pair \((Y,X)\) is drawn from a fixed
distribution \(\nu\). Under the distribution \(\nu\), \(Y\) and \(X\)
are independent, \(\Prob(Y=0)=\omega\), and \(X\) follows the target
distribution with mean \(\theta\) and variance \(\sigma^{2}\). For
this case, the expectation \(\Exp_{\nu}(\Delta)\) is the same
regardless of whether \(U=\Exp(X|\Past)\) or \(U=X\).  To upper-bound
\(\Exp_{\nu}(\Delta)\), we assume that \(\theta\) and \(\sigma^{2}\)
are finite. Since the case \(\sigma^{2}=0\) is trivial, for much of
the analysis below, we also assume \(\sigma^{2}>0\).  We show that the
upper bound on \(\Exp_{\nu}(\Delta)\) obtained is quite insensitive to
discrepancies between the assumed (or estimated) and true values of
the mean and variance.  We remark that the bounds derived in this
section remain valid even if the distribution of \(X\) varies from
trial to trial, as long as the mean of \(X\) is fixed at \(\theta\)
and the variance is bounded from above by \(\sigma^{2}\).  This
observation is relevant to the results in Sect.~\ref{sect:tightness}.

Given \(\beta\) and \(t\), we have
\(\Exp_{\nu}(\Delta)=n((1-\omega)\theta - \cO(\beta,t))\) with
\(\cO(\beta,t)\) defined in Eq.~\eqref{eq:obetat}.  The only term of
\(\Delta\) that depends on the distribution of \(X\) but is not
determined by the mean \(\theta\) is the term
\(\Exp_{\nu}\qty(\ln(1-(1-\omega)t e^{-\beta X}))\) in \(\cO(\beta,t)\).
With foresight, we introduce the parameter \(\theta_{e}\geq 0\) and define
\(t=e^{\beta\theta_{e}}\geq 1\).  The parameter \(\theta_{e}\) serves
as an estimate of the mean \(\theta\). In view of the upper bound on
\(t\) when optimizing \(\cO(\beta,t)\), we also constrain
\(t< 1/(1-\omega)\). While \(t=1/(1-\omega)\) yields a valid
estimation factor, we exclude this boundary point as it induces
singular behavior in the subsequent analysis.  We define the function
\( \cL(\beta,\theta_{e})\) as follows:
\begin{align}
  \cL(\beta,\theta_{e})
  &= \beta(1-\omega)\theta - \beta \cO(\beta,t) + \ln(\epsilon)/n
    \nonumber\\
  &=
    \omega\qty(\ln(\omega) - \Exp_{\nu}\qty(\ln(1-(1-\omega)e^{-\beta (X-\theta_{e})})))
    + \beta(1-\omega)(\theta-\theta_{e}),
    \label{eq:def_lbetatheta}
\end{align}
so that
\begin{align}
  \Exp_{\nu}(\Delta)
  &=\frac{1}{\beta}\qty(n\cL(\beta,\theta_{e})  - \ln(\epsilon)).
    \label{eq:expdelta_lbetatheta}
\end{align}
\begin{lemma}\label{lem:lbetatheta_bounds}
  Define
  \(B(z_{0})=\omega^{2}e^{z_{0}}/\qty(1-(1-\omega)e^{z_{0}})^{2}\), where \(z_{0} \in [0, \ln\qty(1/(1-\omega)))\). Then
  \begin{align}
    0 \leq \cL(\beta,\theta_{e}) \leq
    \frac{\beta^{2}(1-\omega)}{2\omega} B(\beta\theta_{e})\qty(\sigma^{2}+(\theta-\theta_{e})^{2}).
    \label{eq:lem_lbnds}
  \end{align}
\end{lemma}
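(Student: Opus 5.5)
The plan is to rewrite \(\cL(\beta,\theta_{e})\) as the expected second-order Taylor remainder of a single convex function of one variable. Both bounds then follow, the lower one from convexity and the upper one from Taylor's theorem with a bounded second derivative.

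\emph{Change of variables.} Introduce the RV \(Z=\beta(\theta_{e}-X)\). Since \(X\geq 0\), we have \(Z\leq \beta\theta_{e}=:z_{0}\), and \(z_{0}<\ln(1/(1-\omega))\) because \(t=e^{\beta\theta_{e}}<1/(1-\omega)\). The mean is \(\Exp_{\nu}(Z)=\beta(\theta_{e}-\theta)\), and the second moment is \(\Exp_{\nu}(Z^{2})=\beta^{2}\qty(\sigma^{2}+(\theta-\theta_{e})^{2})\). Define
\begin{align}
  g(z)=-\omega\ln\qty(1-(1-\omega)e^{z}),\qquad z<\ln(1/(1-\omega)).
\end{align}
Direct differentiation gives
\begin{align}
  g(0)=-\omega\ln\omega,\qquad g'(z)=\frac{\omega(1-\omega)e^{z}}{1-(1-\omega)e^{z}},\qquad g'(0)=1-\omega,\qquad g''(z)=\frac{\omega(1-\omega)e^{z}}{\qty(1-(1-\omega)e^{z})^{2}}.
\end{align}
Substituting into Eq.~\eqref{eq:def_lbetatheta} and using \(\beta(1-\omega)(\theta-\theta_{e})=-g'(0)\Exp_{\nu}(Z)\), I will verify that
\begin{align}
  \cL(\beta,\theta_{e})=\Exp_{\nu}\qty(g(Z)-g(0)-g'(0)Z).
\end{align}

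\emph{Lower bound.} The expression for \(g''\) is manifestly positive, so \(g\) is convex. A convex function lies above its tangent line at \(0\), so the integrand \(g(Z)-g(0)-g'(0)Z\) is non-negative pointwise. Taking expectations gives \(\cL\geq 0\).

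\emph{Upper bound.} By Taylor's theorem with Lagrange remainder, for each value \(z\) of \(Z\) we have \(g(z)-g(0)-g'(0)z=\tfrac{1}{2}g''(\xi)z^{2}\) for some \(\xi\) between \(0\) and \(z\). The key observation is that \(g''\) is increasing on its domain. Writing \(u=(1-\omega)e^{z}\), we get \(g''=\omega u/(1-u)^{2}\), which is increasing in \(u\in(0,1)\). Now \(\xi\leq\max(0,z)\leq z_{0}\), where we use both \(z\leq z_{0}\) and \(z_{0}\geq 0\). Hence
\begin{align}
  g''(\xi)\leq g''(z_{0})=\frac{1-\omega}{\omega}B(z_{0}).
\end{align}
Taking expectations yields
\begin{align}
  \cL\leq \frac{1-\omega}{2\omega}B(z_{0})\Exp_{\nu}(Z^{2})=\frac{\beta^{2}(1-\omega)}{2\omega}B(\beta\theta_{e})\qty(\sigma^{2}+(\theta-\theta_{e})^{2}),
\end{align}
which is exactly Eq.~\eqref{eq:lem_lbnds}.

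\emph{Main obstacle.} The step needing the most care is the upper bound, specifically controlling \(g''\) uniformly over the unbounded range \(Z\in(-\infty,z_{0}]\). The resolution is the monotonicity of \(g''\) together with the one-sided bound \(Z\leq z_{0}\), which comes from \(X\geq 0\). A second point concerns the integrability implicit in splitting \(\cL\) into separate expectations. This is justified because \(\theta\) and \(\sigma^{2}\) are finite, and the argument of the logarithm is bounded below by \(1-(1-\omega)e^{z_{0}}>0\). Together these make every term in the decomposition finite.
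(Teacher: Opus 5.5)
Your proof is correct and is essentially the paper's argument. Your $g(Z)$ equals $\omega f(\beta X)$ for the paper's $f(z)=-\ln(1-ae^{-(z-z_{0})})$ with $a=1-\omega$ and $z_{0}=\beta\theta_{e}$, so your upper bound is the same second-order Taylor expansion about $X=\theta_{e}$, with $g''$ bounded by its value at the endpoint $X=0$. The only difference is the lower bound: the paper takes it from the Jensen-based inequality in Eq.~\eqref{eq:obtbnd2}, whereas you use the tangent-line inequality for the convex $g$ directly, which is equivalent and slightly more self-contained.
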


The function \(B(z_{0})\) is non-negative and satisfies
\(B(z_{0})=1+O(z_{0})\) for small \(z_{0}\). Here and below, \(O(.)\)
denotes the standard big-O notation.  Moreover, this function is
finite for \(z_{0}\in [0,\ln(1/(1-\omega)))\) and diverges as
\(z_{0}\) approaches \(\ln(1/(1-\omega))\). On the right-hand side of
Eq.~\eqref{eq:lem_lbnds}, the point of divergence corresponds to the
value of \(\theta_{e}\) that approaches its upper bound, for which
\(e^{\beta\theta_{e}}=1/(1-\omega)\).  Thus \(\cL(\beta,\theta_{e})\)
is finite for all \(\theta_{e}\in [0, \ln(1/(1-\omega))/\beta)\).
 
\begin{proof}
  The lower bound in the lemma follows from the inequality in the
  first line of Eq.~\eqref{eq:obtbnd2}, which is applied to the
  definition of \(\cL(\beta,\theta_{e})\) in terms of \(\cO(\beta,t)\).

  To prove the upper bound, we first obtain an upper bound on the
  quantity \(-\ln(1-(1-\omega) e^{-\beta (x-\theta_{e})})\).  For this
  purpose, consider the function \(f(z)=-\ln(1-ae^{-(z-z_{0})})\) where
  \(0< a < 1\), \(z,z_{0}\geq 0\) and \(1-ae^{z_{0}} > 0\). Direct
  calculation shows that \(f(0)=-\ln(1-ae^{z_{0}})\), \(f(\infty)=0\),
  $f'(z) = -a e^{-(z-z_{0})}/(1-a e^{-(z-z_{0})})$, and
  $f''(z) = a e^{-(z-z_{0})}/(1-a e^{-(z-z_{0})})^{2}$. Therefore, the
  first derivative $f'(z)$ is negative and monotonically
  increasing, and the second derivative $f''(z)$ is positive and
  monotonically decreasing. Thus the maximum of \(f''(z)\) on the domain
  is \(f''(0)\). By performing a Taylor expansion with a second-order
  remainder and further upper-bounding the remainder, we get
  \begin{align} f(z) &\leq f(z_{0}) + f'(z_{0})(z-z_{0}) + \max_{u\geq
                       0} f''(u) (z-z_{0})^{2}/2 \nonumber\\ &= -\ln(1-a) -
                                                               \frac{a}{1-a}(z-z_{0}) + \frac{a
                                                               e^{z_{0}}}{2(1-ae^{z_{0}})^{2}}(z-z_{0})^{2}.
                                                               \label{eq:fx_expansion0}
  \end{align} For the quantity of interest, \(a=(1-\omega)\),
  \(z_{0}=\beta\theta_{e}\), and \(z=\beta x\), where the constraints on \(a\) and \(z_{0}\) 
  are satisfied by the definition of \(\theta_{e}\).  Substituting these
  values, we obtain the upper bound
  \begin{align} -\ln(1-(1-\omega) e^{-\beta (x-\theta_{e})}) &\leq
                                                               -\ln(\omega) - \frac{\beta(1-\omega)}{\omega}(x-\theta_{e})
                                                               +\frac{\beta^{2}(1-\omega)}{2\omega^{2}}B(\beta\theta_{e})(x-\theta_{e})^{2}.
                                                               \label{eq:fx_expansion}
  \end{align} 
  Substituting this bound into the expression for
  \(\cL(\beta,\theta_{e})\) and then evaluating the expectation
  yields the upper bound in the lemma. 
\end{proof}

By substituting the upper bound from Lem.~\ref{lem:lbetatheta_bounds} into the expression
for \(\Exp_{\nu}(\Delta)\) in Eq.~\eqref{eq:expdelta_lbetatheta}, we obtain the following 
upper bound \(\bar\Delta\) on \(\Exp_{\nu}(\Delta)\):
\begin{align}
  \bar\Delta
  &= n\beta\frac{(1-\omega)}{2\omega}
    B(\beta\theta_{e})(\sigma^{2}+(\theta-\theta_{e})^{2})
    -\frac{\ln(\epsilon)}{\beta}.
    \label{eq:bardelta}
\end{align}
Rather than directly minimizing \(\Exp_{\nu}(\Delta)\) over \(\beta\)
and \(\theta_{e}\), one can instead minimize the upper bound
\(\bar\Delta\) for constructing estimation factors.  While this
approach may yield suboptimal estimation factors and, consequently,
suboptimal confidence bounds \(S_{\lb}\), it has the advantage that
only the mean and variance, rather than a full characterization, of
the target distribution of \(X\) are required. The analysis below
shows that the expectation of the so obtained \(S_{\lb}\) with respect
to the true distribution \(\nu\) approaches the best possible as the
number of trials \(n\) increases.

Anticipating that well-performing confidence bounds require 
\(\beta\) to be of order \(1/\sqrt{n}\), we neglect terms of order \(\beta^{2}\)
when minimizing \(\bar\Delta\).
Since \(B(\beta\theta_{e})=1+O(\beta)\) for small \(\beta\), we choose \(\theta_{e}\) and 
\(\beta\) to minimize
\begin{align}
  \bar\Delta_{1}
  &=  n\beta \frac{(1-\omega)}{2\omega} (\sigma^{2}+(\theta-\theta_{e})^{2})
    -\frac{\ln(\epsilon)}{\beta}.
\end{align}
Regardless of the choice of \(\beta\), this requires setting \(\theta_{e}=\theta\).
To choose \(\beta\), we use the fact that the minimum over \(x\) of \(a x+b/x\) for
positive \(a,b,x\) is achieved at \(x=\sqrt{b/a}\). Thus, we set \(\beta=\beta_1\),
where 
\begin{align} \label{eq:sub_opt_beta}
  \beta_1 &= \sqrt{\frac{2\omega \ln(1/\epsilon)}{\sigma^{2}{n(1-\omega)}}}.
\end{align}
The choices \(\theta_{e}=\theta\) and \(\beta=\beta_1\) are valid 
as long as \(t=e^{\beta \theta_e}\leq 1/(1-\omega)\). This condition is equivalent to  
requiring that $n\geq n_{\thresh}$, where $n_{\thresh} =2\omega \theta^2 \ln(1/\eps)/\qty((1-\omega)\sigma^2 \ln^2(1/(1-\omega)))$. 
Therefore, for large $n$, the above choices for \(\theta_{e}\) and \(\beta\) yield the bound
\begin{align}
  \Exp_{\nu}(\Delta)
  &\leq
    \sigma\sqrt{\frac{2n(1-\omega)\ln(1/\epsilon)}{\omega}}
    (1+O(1/\sqrt{n})).
    \label{eq:expgap}
\end{align}
This bound is proportional to \(\sigma\sqrt{n(1-\omega)/\omega}\),
whose square is \(n(1-\omega)\sigma^{2}/\omega\). The following
discussion suggests that this bound is asymptotically optimal for
constructing lower confidence bounds. Let
\(Q_{n\omega}=\sum_{i=1}^{n \omega} X_i\) and
\(Q_{n(1-\omega)}=\sum_{i=1}^{n (1-\omega)} X_i\) represent the sums
of \(n\omega\) and \(n(1-\omega)\) i.i.d. samples of \(X\),
respectively, where, for simplicity, we assume that $n \omega$ and
$n (1 - \omega)$ are integers.  In this context,
\(\qty((1-\omega)/\omega) Q_{n\omega}\) serves as an unbiased 
in-expectation estimate of \(Q_{n(1-\omega)}\) based on \(n \omega \)
i.i.d. samples of \(X\) accessible through spot-checking with
probability \(\omega\).  The quantity \(n(1-\omega)\sigma^{2}/\omega\)
can be interpreted as the variance of the difference between the sum
of interest, \(Q_{n(1-\omega)}\), and its unbiased estimate,
\(\qty((1-\omega)/\omega) Q_{n\omega}\).  To explain this
interpretation, observe that because the variance of \(X\) according
to the target distribution is \(\sigma^2\), we have that the variance
of \(Q_{n(1-\omega)}\) is \(n(1-\omega)\sigma^{2}\), while the
variance of \(\qty((1-\omega)/\omega) Q_{n\omega}\) is
\(n(1-\omega)^{2}\sigma^{2}/\omega\).  Therefore the sum of these
variances is \(n(1-\omega)\sigma^{2}/\omega\).  The factor
\(\sqrt{\ln(1/\epsilon)}\) in the bound of Eq.~\eqref{eq:expgap} can
be interpreted as being related to a Gaussian tail-probability bound.
Based on these observations, we do not expect that there exists a
statistical protocol for constructing lower confidence bounds that has
a significantly smaller expected gap between \(S_{U}\) and \(S_{\lb}\)
than what can be achieved using estimation factors.

Ideally, the reference distribution \(\nu\) should be chosen such
that its mean \(\theta\) and variance \(\sigma^{2}\) closely match
or are identical to those of the true distribution governing a
trial. However, in practice, the true distribution is often unknown
and may even drift over time. To address this challenge, we can
estimate the mean \(\theta\) and variance \(\sigma^{2}\) through
calibration or theoretical modeling conducted before an experiment.
Let the resulting estimates be \(\theta_{e}\) for the
mean \(\theta\) and \(\sigma_{e}^{2}\) for the variance
\(\sigma^{2}\). In view of the bounds obtained, a strategy is to set
\(\beta=\beta_{e}=\sqrt{{2\omega\ln(1/\epsilon)}/{\qty(\sigma_{e}^{2}{n(1-\omega)})}}\)
and \(t=t_{e}=e^{\beta_e \theta_{e}}\) in the expression for the
estimation factor \(T\) in Eq.~\eqref{eq:t_maximal}.  The resulting
expression \(T_{e}\) is a valid estimation factor provided that
\(t_{e} \leq 1/(1-\omega)\), which holds as along as
$n\geq \tilde{n}_{\thresh}$, where
$\tilde{n}_{\thresh} =2\omega\theta_{e}^2
\ln(1/\eps)/\qty((1-\omega)\sigma_{e}^2 \ln^2(1/(1-\omega)))$.  We
expect this estimation factor to perform well when the estimates
\(\theta_{e}\) and \(\sigma_{e}^{2}\) are close to \(\theta\) and
\(\sigma^{2}\). Nevertheless, regardless of accuracy, we can still
bound its performance according to the following proposition.

\begin{proposition}\label{prop:bndwithest}
  Let \(T_{e}\) be the estimation factor expressed in terms of
  \(\theta_{e}\) and \(\sigma_{e}^{2}\) as described in the previous
  paragraph, where we assume that \(n\geq \tilde{n}_{\thresh}\) to
  ensure the condition \(t_{e} \leq 1/(1-\omega)\) required by the
  construction.  Then, we have the bound
  \begin{align}\label{eq:deltae_bnd}
    \Exp_{\nu}(\Delta)
    &\leq
      \sigma\sqrt{\frac{n(1-\omega)\ln(1/\epsilon)}{2\omega}}
      \qty(
      B\qty(\frac{\theta_{e}}{\sigma_{e}}\sqrt{\frac{2\omega\ln(1/\epsilon)}{n(1-\omega)}})\qty(\frac{\sigma}{\sigma_{e}} +
      \frac{(\theta-\theta_{e})^{2}}{\sigma\sigma_{e}}) +
      \frac{\sigma_{e}}{\sigma}
      ).
  \end{align}
\end{proposition}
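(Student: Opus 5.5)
The plan is to evaluate the general upper bound \(\bar\Delta\) of Eq.~\eqref{eq:bardelta} at the particular parameters \(\beta=\beta_{e}\) and \(\theta_{e}\), and check that it rearranges into the right-hand side of Eq.~\eqref{eq:deltae_bnd}.

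\textbf{Step 1: the bound \(\bar\Delta\) applies to \(T_{e}\).} The identity in Eq.~\eqref{eq:expdelta_lbetatheta} and Lem.~\ref{lem:lbetatheta_bounds} hold for every \(\beta>0\) and every \(\theta_{e}\geq 0\) with \(e^{\beta\theta_{e}}<1/(1-\omega)\). Nothing in them requires \(\theta_{e}\) to equal the true mean \(\theta\), or \(\beta\) to be optimal. Since \(T_{e}\) is exactly the estimation factor of Eq.~\eqref{eq:t_maximal} with \(t=e^{\beta_{e}\theta_{e}}\), we have \(\Exp_{\nu}(\Delta)\leq\bar\Delta\) with \(\beta=\beta_{e}\). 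Here \(\bar\Delta\) is computed with the true \(\theta\) and \(\sigma^{2}\) of \(\nu\), and with the estimate \(\theta_{e}\) appearing both in \(B(\beta\theta_{e})\) and in \((\theta-\theta_{e})^{2}\).

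\textbf{Step 2: the boundary case.} The assumption \(n\geq\tilde n_{\thresh}\) only gives \(t_{e}\leq 1/(1-\omega)\), whereas the lemma was derived for \(t_{e}<1/(1-\omega)\). At equality, \(B\) evaluated at \(\ln(1/(1-\omega))\) is \(+\infty\), so the claimed bound is trivially true and there is nothing to prove. I will note this, and also assume \(\sigma>0\), which the statement needs in order to be meaningful.

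\textbf{Step 3: substitute and simplify.} Write \(K=\sqrt{n(1-\omega)\ln(1/\epsilon)/(2\omega)}\).

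For the first term of \(\bar\Delta\), insert \(\beta_{e}=\sqrt{2\omega\ln(1/\epsilon)/(\sigma_{e}^{2}n(1-\omega))}\):
\[
n\beta_{e}\frac{1-\omega}{2\omega}=\frac{K}{\sigma_{e}}.
\]
Multiplying by \(\sigma^{2}+(\theta-\theta_{e})^{2}\) and factoring out \(\sigma\) gives
\[
\sigma K\Big(\frac{\sigma}{\sigma_{e}}+\frac{(\theta-\theta_{e})^{2}}{\sigma\sigma_{e}}\Big)\,B(\beta_{e}\theta_{e}).
\]
The argument of \(B\) is
\[
\beta_{e}\theta_{e}=\frac{\theta_{e}}{\sigma_{e}}\sqrt{\frac{2\omega\ln(1/\epsilon)}{n(1-\omega)}},
\]
which matches the argument of \(B\) in Eq.~\eqref{eq:deltae_bnd}.

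For the second term of \(\bar\Delta\), we have \(-\ln(\epsilon)/\beta_{e}=\sigma_{e}K=\sigma K\cdot\sigma_{e}/\sigma\).

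Adding the two terms yields exactly Eq.~\eqref{eq:deltae_bnd}.

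\textbf{Main obstacle.} There is no real analytic obstacle. The only point needing care is Step 1: Lem.~\ref{lem:lbetatheta_bounds} must be invoked with the true moments of \(\nu\) but with the mismatched estimate \(\theta_{e}\) inside both \(t\) and the quadratic term. Its Taylor-remainder proof already handles arbitrary \(\theta_{e}\), so this goes through once the boundary case of Step 2 is dealt with.
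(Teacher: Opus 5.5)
Your proposal is correct and is the same argument as the paper's: substitute $\beta_e$ into the bound $\bar\Delta$ of Eq.~\eqref{eq:bardelta} and rearrange, and your algebra for both terms checks out. Your extra remark on the boundary case $t_e = 1/(1-\omega)$ is a bit of care the paper passes over. There $B$ diverges, so for $\sigma>0$ the right-hand side is $+\infty$ and the bound holds trivially.
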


\begin{proof}
  It suffices to substitute \(\beta_{e}\) for \(\beta\) in Eq.~\eqref{eq:bardelta} and
  rewrite the expression for the upper bound \(\bar \Delta\) on \(\Exp_{\nu}(\Delta)\).
\end{proof}

In the case of large \(n\), \(B(\beta_{e}\theta_{e})\) approaches \(1\),
so that the bound in Prop.~\ref{prop:bndwithest} is minimized when
\(\theta_{e}=\theta\) and \(\sigma_{e}^{2}=\sigma^{2}\).
Moreover, when \(\theta_{e}\) deviates from \(\theta\) 
or \(\sigma_{e}^{2}\) deviates from \(\sigma^{2}\), the upper bound 
on the right-hand size of Eq.~\eqref{eq:deltae_bnd} exceeds
its minimum value only by additive terms that scale quadratically with
the relative deviations \(|\sigma_{e} -\sigma|/\sigma\)
and \(|\theta-\theta_{e}|/\sigma\).  In this sense, the upper bound on 
\(\Exp_{\nu}(\Delta)\) given in Eq.~\eqref{eq:deltae_bnd} is quite insensitive 
to discrepancies between the estimated and true values of the mean and variance
of \(X\).
For good performance, it therefore suffices to determine 
\(\sigma^{2}\) with small relative error and \(\theta\) with error small compared to \(\sigma\). 
A moderate number of i.i.d. calibration samples of \(X\)  is sufficient to achieve this goal.
In the next section, we formalize this claim.

\section{Requirements for estimating reference mean and variance of \texorpdfstring{\(X\)}{X} from calibration trials}
\label{sect:calibratevar}

To compute estimation factors by means of the strategy of
Sect.~\ref{sect:analytical_opt}, it is necessary to estimate the mean
and variance of the reference distribution \(\nu\).  This can be done
with calibration trials prior to running the spot-checking experiment.
The estimated mean and variance are then RVs depending on these
calibration trials. When their statistics are taken into account, the
expected gap \(\Exp(\Delta)\) with \(\Delta=S_{U}-S_{\lb}\) increases
compared to the best possible in Eq.~\eqref{eq:deltae_bnd}.  In this
section, we obtain bounds on \(\Exp(\Delta)\) that take into account
the statistics of the calibration trials and show that relatively few
calibration trials are required for good performance.  For this
purpose, in addition to having finite mean \(\theta\) and finite
variance \(\sigma^{2}\), we assume that the reference distribution's
third and fourth central moments, denoted by \(m_{3}\) and \(m_{4}\),
are finite.  Note that the analysis below explicitly permits the case \(\sigma^{2}=0\).

Our calculations in this section can be used to estimate the number of
calibration trials needed to ensure good confidence bounds. 
Although these calculations are based on using estimates of the mean 
and variance to optimize estimation factors, we expect that when using 
direct numerical optimization according to Sect.~\ref{sect:numerical_opt}, 
similar numbers of calibration trials are sufficient.

Suppose that \(\theta_{e}\) and \(\sigma_{e}^{2}\) are estimates
obtained from calibration trials.  Then, these quantities are themselves
RVs with distributions that depend on both the calibration
trials and the method used to obtain the estimates.  Our goal is 
to establish an upper bound on the expectation of the bound in Eq.~\eqref{eq:deltae_bnd},
where the expectation is taken with respect to the distributions of the random 
estimates \(\theta_{e}\) and \(\sigma_{e}^{2}\). For this purpose, we assume 
that the calibration trials consist of i.i.d. samples of \(X\) drawn from the 
reference distribution \(\nu\). 

To simplify the analysis, we obtain \(\theta_{e}\) as the sample mean
 \(\theta_{e}=\sum_{i=1}^{n_{a}}X_{i}/n_{a}\) of \(n_{a}\) i.i.d. samples 
 of \(X\), and  we obtain \(\sigma_{e}^{2}\) from an additional set of \(2n_{v}\) 
 i.i.d. samples of \(X\) according to
\(\sigma_{e}^{2}= \sum_{i=1}^{n_{v}}(X_{2i}-X_{2i-1})^{2}/(2n_{v})\).
Thus, the total number of calibration trials used is \(n_{a}+ 2n_{v}\).
In practice, we would use the usual sample-efficient estimates of the
mean and variance from a single set of samples of \(X\). The less
efficient method used here ensures independence between the estimates
\(\theta_{e}\) and \(\sigma_{e}^{2}\) and serves as a proof of principle.
The bounds obtained in our analysis are not tight and can readily be
improved. We aim to obtain interpretable bounds that can be
used in practice to choose the number of calibration trials as well
as other parameters used for the spot-checking experiment, 
such as \(n\) and \(\omega\).

In view of Eq.~\eqref{eq:deltae_bnd}, two regularizations are required
to ensure that the bound is finite with probability \(1\). The bound
in Eq.~\eqref{eq:deltae_bnd} diverges to infinity at
\(\sigma^{2}_{e}=0\).  The first regularization is required to avoid
this divergence. We replace \(\sigma^{2}_{e}\) with
\(\sigma_{\tilde e}^{2}=\sigma_{e}^{2}+r_{e}^{2}\), where
\(r_{e}^{2}\) is a constant that is chosen independent of the
calibration trials to ensure good performance.  Based on the
expression for \(\beta_{e}\) given above Prop.~\ref{prop:bndwithest},
we correspondingly define
\(\beta_{\tilde e} = \sqrt{2\omega\ln(1/\epsilon)/(\sigma_{\tilde
    e}^{2}{n(1-\omega)})}\). For the remainder of this section,
estimation factors and all related parameters are expressed in terms
of \(\beta_{\tilde e}\) rather than \(\beta_{e}\).

Let \(z_{\infty}\) be the positive point at which the function
\(B(z_0)\), as defined in Lem.~\ref{lem:lbetatheta_bounds}, diverges
to infinity.  Accordingly, we have \(z_{\infty}=\ln(1/(1-\omega))\).
The second regularization is required to ensure that \(B(z_0)\) is
bounded from above and not much larger than \(1\).  In view of the
monotonicity established below, this requires \(z_0\) to be restricted
by an upper bound that is strictly less than \(z_{\infty}\). This
regularization not only avoids divergence but also simplifies bounding
the expectation of terms involving \(B(z_0)\).  Moreover, since
\(z_0< z_{\infty}\) implies \(t=e^{z_0}<1/(1-\omega)\), the positivity
of the resulting estimation factors is also ensured.  The function
\(B(z_0)\) is monotonically increasing and convex on
\([0,z_{\infty})\).  To check this monotonicity and convexity,
consider the function \(f(y)=y/(1-y)^{2}\). Then
\((1-\omega)B(z_0)/\omega^{2}=f(y)\) with \(y=(1-\omega)e^{z_0}\).  We
have \(f'(y)=(1+y)/(1-y)^{3}>0\) and \(f'(y)\) is monotonically
increasing on the range \(0\leq y < 1\).  We also have that
  \(\frac{d}{dz_{0}}y\) is positive and monotonically increasing. 
Thus, \(f(y)\) is
monotonically increasing and convex, and  in 
consideration of the chain rule, the same holds for the function
\(B(z_0)\).  It follows that for any \(z_{u}\in [0,z_{\infty})\) and
\(z_0 \in [0, z_{u}]\),
\begin{align}\label{eq:B_fun_ub}
B(z_0)& \leq 1+\qty(B(z_{u})-B(0))(z_0-0)/(z_{u}-0) \nonumber\\
             & =1+\qty(B(z_{u})-1)z_0/z_{u}.
\end{align}
Based on this bound, we regularize and remove the terms involving
\(B(z_0)\) by ensuring that \(z_{0}\leq z_{u}\), where \(z_{u}\)
is chosen to satisfy \(B(z_{u})\leq 2\).  In the relevant
  expressions, we have \(z_{0}=\beta_{\tilde e}\theta_{e}\), so to
  ensure \(z_{0}\leq z_{u}\), we set
  \(\theta_{u}=z_{u}/\beta_{\tilde e}\) and replace \(\theta_{e}\) by
  \(\theta_{\tilde e} \coloneqq  \min(\theta_{u},\theta_{e})\). One such
choice of \(z_u\) and the corresponding \(\theta_{u}\) can be obtained
as follows: Write \(z_{u}=\ln((1-v \omega)/(1-\omega))\), where \(v\)
is constrained to be in the interval \((0,1]\) to ensure that
\(z_{u}\in [0, z_{\infty})\). Then
\begin{align}
  B(z_{u}) &=
         \frac{\omega^{2}(1-v\omega)}{(1-\omega)v^{2}\omega^{2}}
         \nonumber\\
       & =\frac{1}{v^{2}}\times \frac{1-v\omega}{1-\omega}.
\end{align}
If we further express \(v\) as \(v=(1-\omega)v_{0}+\omega\) and constrain \(v_{0}\) 
to be in the interval \( (0,1)\), \(B(z_{u})\) can be bounded as follows:
\begin{align}
  B(z_{u}) &=
         \frac{1}{((1-\omega)v_{0}+\omega)^{2}} \times \frac{1-\omega^{2}- (1-\omega)\omega v_{0}}{1-\omega}
         \nonumber\\
       &=
         \frac{1+\omega(1-v_{0})}{(v_{0}+\omega(1-v_{0}))^{2}}
         \nonumber\\
       &=\frac{1}{v_{0}+\omega(1-v_{0})}
         +\frac{1-v_{0}}{(v_{0}+\omega(1-v_{0}))^{2}}
         \nonumber\\
           &\leq
             \frac{1}{v_{0}} + \frac{1-v_{0}}{v_{0}^{2}}
             \nonumber\\
       &= \frac{1}{v_{0}^{2}}.
\end{align}
We can thus set \(v_{0}=1/\sqrt{2}\) in the chain of expressions for
\(z_{u}\) to ensure that \(B(z_{u})\leq 2.\)  In view of the monotonicity 
of the function \(B(z_0)\), it  suffices for \(z_{u}\) to satisfy
\begin{align}
  z_{u} &\leq \ln((1-((1-\omega)/\sqrt{2}+\omega)\omega)/(1-\omega))\nonumber\\
        &= \ln(1+(1-1/\sqrt{2})\omega).
     \label{eq:barnsbnd0}
\end{align}
By the concavity of \(\ln(1+x)\) in \(x\), we have
\(\ln(1+ax)\geq x \ln(1+a)\) for \(a\geq 0\) and \(0\leq x\leq
1\). Applying this with \(a=(1-1/\sqrt{2})\) and \(x=\omega\), we
obtain the inequality
\(\ln(1+(1-1/\sqrt{2})\omega)\geq \omega \ln(2-1/\sqrt{2}) \geq
\omega/4\), where we used \(\ln(2-1/\sqrt{2})\approx 0.2569\geq
1/4\). For simplicity, we therefore use the upper bound \(\omega/4\)
for \(z_{u}\). Since \(z_u=\beta_{\tilde e} \theta_{u}\) and
\(\beta_{\tilde e}\leq
\beta_{r}=\sqrt{2\omega\ln(1/\epsilon)/(r_{e}^{2}{n(1-\omega)})}\), it
suffices to choose
\begin{align}
  \theta_{u}&=\omega/(4\beta_{r})\nonumber\\
            &=\frac{1}{4}\sqrt{\frac{{r_{e}^{2}n\omega(1-\omega)}}{2{\ln(1/\epsilon)}}}
              \label{eq:barnsbnd}
\end{align}
to guarantee that \(B(\beta_{\tilde e}\theta_{u})\leq 2\). The bound
\(\theta_{u}\) grows with \(n\), so for large enough \(n\),
\(\theta_{u}\) is unlikely to be exceeded by either \(\theta_{e}\) or
\(\theta\).  In the next section, we analyze the situation where
  the expected number of spot-checked trials \(\bar n_{s}=n\omega\) is
  held constant as \(n\) increases. In this case, \(\theta_{u}\) is
asymptotically proportional to \(\sqrt{\bar n_{s}}\) as
\(n \to \infty\).  For the bounds below, we require
\(\theta_{u}\geq\theta\), which is ensured if \(n \omega\) is large
enough.

To continue, we rewrite and relax the bound in Eq.~\eqref{eq:deltae_bnd}.
Since \(\Exp_{\nu}(\Delta)\) is proportional to \(\sqrt{n/\omega}\) for
large \(n\) and small \(\omega\), we consider the normalized quantity
\(\tilde\Delta=\sqrt{\omega/n}\,\Exp_{\nu}(\Delta)\).
With the above regularizations, we replace \(\theta_{e}\) with
\(\theta_{\tilde e}=\min(\theta_{u},\theta_{e})\), \(\sigma_{e}\) with 
\(\sigma_{\tilde e}\), and \(\beta_{e}\) with  
\(\beta_{\tilde e} =\sqrt{2\omega\ln(1/\epsilon)/(\sigma_{\tilde e}^{2}{n(1-\omega)})}\). Then, from Eq.~\eqref{eq:deltae_bnd} we have
\begin{align}
  \tilde\Delta & \leq \sqrt{(1-\omega)\ln(1/\epsilon)/2}
                 \qty(B(\beta_{\tilde e} \theta_{\tilde e})\qty(\frac{\sigma^{2}}{\sigma_{\tilde e}} + \frac{(\theta-\theta_{\tilde e})^{2}}{\sigma_{\tilde e}}) + \sigma_{\tilde e})  \nonumber \\    
                & \leq    \sqrt{(1-\omega)\ln(1/\epsilon)/2}
                 \qty(\qty(1+ \frac{B(\beta_{r} \theta_{u})-1}{\beta_{r} \theta_{u}}\beta_{\tilde e} \theta_{\tilde e})\qty(\frac{\sigma^{2}}{\sigma_{\tilde e}} + \frac{(\theta-\theta_{\tilde e})^{2}}{\sigma_{\tilde e}}) + \sigma_{\tilde e})  
                \nonumber \\    
                 & \leq \sqrt{(1-\omega)\ln(1/\epsilon)/2}
                 \qty(\qty(1+4\sqrt{\frac{2\ln(1/\epsilon)}{{n\omega (1-\omega)}}}\,
                 \frac{\theta_{\tilde e}}{\sigma_{\tilde e}})\qty(\frac{\sigma^{2}}{\sigma_{\tilde e}} + \frac{(\theta-\theta_{\tilde e})^{2}}{\sigma_{\tilde e}}) + \sigma_{\tilde e}).
                \label{eq:bndontildedelta}
\end{align}
To obtain the second line, we applied the inequality in
Eq.~\eqref{eq:B_fun_ub} with \(z_0=\beta_{\tilde e}\theta_{\tilde e}\)
and \(z_{u}=\beta_{r}\theta_{u}\) and used the constraint
\( \beta_{\tilde e} \theta_{\tilde e} \leq \beta_{r} \theta_{u} =
\omega/4\), and to obtain the last line, we used
\(B(\beta_{r}\theta_{u})\leq 2\).  Accounting for statistical
fluctuations in calibration trials and further manipulating the last
line of Eq.~\eqref{eq:bndontildedelta}, we obtain that the expected
gap \(\Exp(\tilde\Delta)\) is bounded as
\begin{align}
  \Exp(\tilde\Delta)     
                 & \leq \sqrt{(1-\omega)\ln(1/\epsilon)/2}
                 \qty(\Exp \qty(\frac{\sigma^{2}}{\sigma_{\tilde e}} + \frac{(\theta-\theta_{\tilde e})^{2}}{\sigma_{\tilde e}})\,                 
                 +4\sqrt{\frac{2\ln(1/\epsilon)}{{n\omega (1-\omega)}}} \Exp \qty(\frac{\theta_{\tilde e} \sigma^{2}}{\sigma^2_{\tilde e}} + \frac{\theta_{\tilde e}(\theta-\theta_{\tilde e})^{2}}{\sigma^2_{\tilde e}})\,
                 +\Exp(\sigma_{\tilde e})).                           
                  \label{eq:exp_bndontildedelta}
\end{align}
The random quantities in this expression are \(\theta_{\tilde e}\) and \(\sigma_{\tilde e}\), 
which are independent.  To proceed we need estimates of the following expectations:
\begin{align}
  &\Exp(\theta_{\tilde e}), \Exp((\theta-\theta_{\tilde e})^{2}),  \Exp(\theta_{\tilde e}(\theta-\theta_{\tilde e})^{2}),\\
  &\Exp(\sigma_{\tilde e}), \Exp(1/\sigma_{\tilde e}), \Exp(1/\sigma_{\tilde e}^{2}).
\end{align}
For \(\theta_{u}\geq\theta\), the first three expectations are each
bounded from above by the corresponding expectations for \(\theta_{e}\).
To see this, consider the function \(F(\theta_{e})=G(\theta_{e})-G(\theta_{\tilde e})\), 
where  \(G(x)\) is chosen as \(G(x)=x\), \(G(x)=(\theta-x)^{2}\), 
or \(G(x)=x(\theta-x)^{2}\).  Then, in the case where \(\theta_{e}\leq\theta_{u}\),  
\( \theta_{\tilde e} =\theta_{e} \) and so \(F(\theta_{e})=0\). Conversely, if 
\(\theta_{e}>\theta_{u}\),  then \( \theta_{\tilde e} =\theta_{u} \) and 
\(F(\theta_{e}) = G(\theta_{e})-G(\theta_{u})\).  Since \(G(x)\) is
monotonically increasing for \(x\geq \theta\), \(F(\theta_{e})\) is positive
when \(\theta_{e}>\theta_{u}\) and \(\theta_{u}\geq \theta\).

For the rest of this section, we assume that
\(\theta_{u}\geq\theta\). As discussed above, this assumption is
satisfied for sufficiently large values of \(n \omega \), and can be
verified beforehand by means of Eq.~\eqref{eq:barnsbnd} if an
upper bound on \(\theta\) is known.  Since the bound on
\(\Exp(\tilde\Delta)\) in Eq.~\eqref{eq:exp_bndontildedelta} is
monotonically increasing in the expectations involving
\(\theta_{\tilde e}\), and these expectations are bounded from above
by the corresponding expectations involving \(\theta_{e}\), we replace
the former with the latter. These terms, namely
\(\Exp(\theta_{e}), \Exp((\theta-\theta_{e})^{2})\), and \(\Exp(\theta_{
  e}(\theta-\theta_{ e})^{2})\), can be expressed or bounded as 
\begin{align}
  \Exp(\theta_{e}) = \theta,\\
  \Exp((\theta-\theta_{e})^{2}) &= \sigma^{2}/n_{a},\\
  \Exp(\theta_{e}(\theta-\theta_{e})^{2})
                                &=
                                  \Exp(\theta(\theta-\theta_{e})^{2})-\Exp((\theta-\theta_{e})^{3})
                                  \nonumber\\
                                &=
                                  \theta\sigma^{2}/n_{a}-m_{3}/n_{a}^{2}\leq
                                  \theta\sigma^{2}/n_{a}+|m_{3}|/n_{a}^{2}.
\end{align}
These identities are obtained by expanding \(\theta-\theta_{e}=\frac{1}{n_{a}}\sum_{i=1}^{n_{a}}(\theta-X_{i})\) 
and applying the assumption of i.i.d. calibration trials. In more detail, we have
\begin{align}
  \Exp((\theta-\theta_{e})^{2})
  &=\Exp\qty(\qty(\frac{1}{n_{a}}\sum_{i=1}^{n_{a}}(\theta-X_{i}))^{2})
    \nonumber\\
  &=
    \Exp\qty(\frac{1}{n_{a}^{2}}\sum_{i=1}^{n_{a}}(\theta-X_{i})^{2})
    \nonumber\\
  &= \sigma^{2}/n_{a},
\end{align}
and for the third identity
\begin{align}
  \Exp((\theta-\theta_{e})^{3})
  &=\Exp\qty(\qty(\frac{1}{n_{a}}\sum_{i=1}^{n_{a}}(\theta-X_{i}))^{3})
    \nonumber\\
  &=
    \Exp\qty(\frac{1}{n_{a}^{3}} \sum_{i=1}^{n_{a}}(\theta-X_{i})^{3})
    \nonumber\\
  &=m_{3}/n_{a}^{2}.
\end{align}

For the expectations involving \(\sigma_{\tilde e}\), we use the bound
\(\sigma_{\tilde e}^{2}\geq r_{e}^2\). 
To handle reciprocal powers of \(\sigma_{\tilde e}\),  we
upper-bound \(1/(1+\delta)\) for \(\delta \geq l-1\), where \(l\in(0,1)\).
 We claim that 
 \begin{align} \label{eq:quadracticbound}
  1/(1+\delta) \leq
  f(\delta) \coloneqq 1-\delta+\delta^{2}/l.
\end{align}
To verify this inequality, observe that \(f(\delta)-1/(1+\delta) = -\delta^{2}/(1+\delta)+\delta^{2}/l\),
which is non-negative whenever \(l\leq (1+\delta)\), that is, for
the values of \(\delta\) being considered. 
We also need the variance of the estimate \(\sigma_{e}^{2}\). For two
i.i.d. instances \(X_{1}\) and \(X_{2}\) of \(X\), we have
\begin{align} \label{eq:m2}
  \Exp\qty((X_{1}-X_{2})^{2})&= \Exp\qty(\qty((X_{1}-\theta)-(X_{2}-\theta))^{2})\nonumber\\
                         & = 2 \sigma^{2},
\end{align}\begin{align} \label{eq:m4}
  \Exp\qty((X_{1}-X_{2})^{4})&= \Exp\qty(\qty((X_{1}-\theta)- (X_{2}-\theta))^{4})\nonumber\\
                         & = 2 m_{4} + 6 \sigma^{4}.
\end{align}
The variance of \(\sigma_{e}^{2}\) is given by
\begin{align}
\Var(\sigma_{e}^{2})&=\Exp(\sigma_{e}^{4})-\qty(\Exp(\sigma_{e}^{2}))^2 \nonumber \\
                                &=\Exp\qty(\qty(\sum_{i=1}^{n_{v}}\frac{(X_{2i}-X_{2i-1})^{2}}{2n_{v}})^2)-\qty(\Exp\qty(\sum_{i=1}^{n_{v}}\frac{(X_{2i}-X_{2i-1})^{2}}{2n_{v}}))^2
\end{align}
Under the assumption of i.i.d. calibration trials and using Eqs.~\eqref{eq:m2} and \eqref{eq:m4}, 
the first term \( \Exp(\sigma_{e}^{4}) \) is evaluated to be \( (2 m_{4} + 6 \sigma^{4} + 4 (n_{v}-1) \sigma^4)/(4 n_{v}) \),
while the second term is \( \qty(\Exp(\sigma_{e}^{2}))^2 = \sigma^4\). Therefore,
the variance of \(\sigma_{e}^{2}\) is \((m_{4}+\sigma^{4})/(2n_{v})\).

We can now estimate the expectations involving \(\sigma_{\tilde e}\).
Since the square-root function \(f(x)=\sqrt{x}\) is concave, Jensen’s inequality gives 
\(\Exp( \sqrt{\sigma_{e}^{2}+r_{e}^2} ) \leq \sqrt{\Exp(\sigma_{e}^{2})+r_{e}^2} =\sqrt{\sigma^{2}+r_{e}^2} \)
and \(\Exp\qty(\sqrt{1/(\sigma_{e}^{2}+r_{e}^2)}) \leq \sqrt{\Exp\qty(1/(\sigma_{e}^{2}+r_{e}^{2}))}\). As
 \((\sigma_{e}^{2}-\sigma^{2})/(\sigma^{2}+r_{e}^2) \geq r_{e}^{2}/(\sigma^{2}+r_{e}^{2})-1\), 
 we can apply Eq.~\eqref{eq:quadracticbound}  with \( \delta=(\sigma_{e}^{2}-\sigma^{2})/(\sigma^{2}+r_{e}^2)\)
 and \(l= r_{e}^{2}/(\sigma^{2}+r_{e}^{2})\) to obtain the bound 
 \begin{align}
 \frac{1}{1+(\sigma_{ e}^{2}-\sigma^{2})/(\sigma^{2}+r_{e}^2)} & \leq  1
                            -\frac{\sigma_{e}^{2}-\sigma^{2}}{\sigma^{2}+r_{e}^{2}}+
                            \frac{
                            (\sigma_{ e}^{2}-\sigma^{2})^{2}
                            /(\sigma^{2}+r_{e}^2)^{2}
                            }{
                            r_{e}^{2}/(\sigma^{2}+r_{e}^{2})
                            }.
\end{align}
By combining the above inequalities we obtain
\begin{align}
  \Exp(\sigma_{\tilde e}) & = \Exp( \sqrt{\sigma_{e}^{2}+r_{e}^2} )  \nonumber \\
                                       &\leq \sqrt{\sigma^{2}+r_{e}^2} ,\\
  \Exp\qty(\frac{1}{\sigma_{\tilde e}^{2}})
                          &= \Exp\qty(\frac{1}{\sigma_{e}^{2}+r_{e}^2}) \nonumber \\
                          &=\frac{1}{\sigma^{2}+r_{e}^2}
                            \Exp\qty(
                            \frac{1}{1+
                            (\sigma_{ e}^{2}-\sigma^{2})/(\sigma^{2}+r_{e}^2)}
                            )\nonumber\\
                          &\leq\frac{1}{\sigma^{2}+r_{e}^2}
                            \Exp\qty(1
                            -\frac{\sigma_{e}^{2}-\sigma^{2}}{\sigma^{2}+r_{e}^{2}}+
                            \frac{
                            (\sigma_{ e}^{2}-\sigma^{2})^{2}
                            /(\sigma^{2}+r_{e}^2)^{2}
                            }{
                            r_{e}^{2}/(\sigma^{2}+r_{e}^{2})
                            }
                            )\nonumber\\
                          &=\frac{1}{\sigma^{2}+r_{e}^2}
                            \qty(
                            1
                                                       + \frac{m_{4}+\sigma^{4}
                            }{
                            2 n_{v}r_{e}^{2}(\sigma^{2}+r_{e}^{2})}
                            ), 
                                                      \nonumber \\
  \Exp\qty(\frac{1}{\sigma_{\tilde e}})
                          &=
                            \Exp\qty(\sqrt{\frac{1}{\sigma_{e}^{2}+r_{e}^2}})
                            \nonumber\\
                          &\leq \sqrt{\Exp\qty(\frac{1}{\sigma_{e}^{2}+r_{e}^{2}})}
                            \nonumber\\
                          &\leq \frac{1}{\sqrt{\sigma^{2}+r_{e}^{2}}}
                            \sqrt{1
                            +  \frac{m_{4}+\sigma^{4}
                            }{
                            2 n_{v}r_{e}^{2}(\sigma^{2}+r_{e}^{2})} 
                            } \nonumber\\
                          & \leq
                            \frac{1}{\sqrt{\sigma^{2}+r_{e}^{2}}}
                            \qty(
                            1+  \frac{m_{4}+\sigma^{4}
                            }{
                            4 n_{v}r_{e}^{2}(\sigma^{2}+r_{e}^{2})}                           
                            ).
\end{align}
Note that the inequality in the last line follows from the fact that
\(\sqrt{1+x}\leq 1+x/2\) for all \(x \geq -1\).  To simplify
expressions, we write
\begin{align}
  c(n_{v}) &= \frac{m_{4}+\sigma^{4}
                            }{
                            2 n_{v}r_{e}^{2}(\sigma^{2}+r_{e}^{2})},
\end{align}
which is of order \( 1/n_{v}\).

We can now bound each term involving an expectation in
the upper bound on \(\Exp(\tilde\Delta)\) of Eq.~\eqref{eq:exp_bndontildedelta}
as follows:
\begin{align}
  \Exp(\sigma_{\tilde e})
  &\leq \sqrt{\sigma^{2}+r_{e}^{2}},\label{eq:deltacal_sum1}\\
  \Exp\qty(
  \frac{1}{\sigma_{\tilde e}}(\sigma^{2}+(\theta-\theta_{e})^{2})
  )
  &=
    \Exp\qty(\sigma^{2}+(\theta-\theta_{e})^{2})\Exp\qty(\frac{1}{\sigma_{\tilde e}})
    \nonumber\\
  &\leq
    \frac{\sigma^{2}}{\sqrt{\sigma^{2}+r_{e}^{2}}}\qty(1+\frac{1}{n_{a}})
    \qty(1+
    c(n_{v})/2)\nonumber\\
  &=
    \sqrt{\sigma^{2}+r_{e}^{2}}\qty(1-\frac{r_{e}^{2}}{\sigma^{2}+r_{e}^{2}})
    \qty(1+\frac{1}{n_{a}})
    \qty(1+
    c(n_{v})/2)
    ,\label{eq:deltacal_sum2}\\
 \Exp\qty( \frac{1}{\sigma_{\tilde e}^{2}}(\theta_{e}\sigma^{2}+\theta_{e}(\theta-\theta_{e})^{2}))
  &=\Exp\qty(\theta_{e}\sigma^{2}+\theta_{e}(\theta-\theta_{e})^{2})
    \Exp\qty(\frac{1}{\sigma_{\tilde e}^{2}})\nonumber\\
  &= \qty(\Exp(\theta_{e}\sigma^{2})+\Exp(\theta_{e}(\theta-\theta_{e})^{2}))
    \Exp\qty(\frac{1}{\sigma_{\tilde e}^{2}})\nonumber\\
  &\leq
    \frac{1}{\sigma^{2}+r_{e}^{2}}
    \qty(\theta\sigma^{2}+\frac{1}{n_{a}}\qty(\theta\sigma^{2}+ \frac{|m_{3}|}{n_{a}}))
    \qty(1+ c(n_{v})).\label{eq:deltacal_sum3}
\end{align}
Putting everything together, we find that the leading-order
term \(L_{0}\) in the upper bound on \(\Exp(\tilde\Delta)\) in Eq.~\eqref{eq:exp_bndontildedelta}, with respect to powers of
\(1/n_{a}\), \(c(n_{v})\) and  \(1/\sqrt{n}\), is
\begin{align}
  L_{0}
  &= \sqrt{2(1-\omega)\ln(1/\epsilon)}\sqrt{\sigma^{2}+r_{e}^{2}}
    \qty(1-\frac{r_{e}^{2}}{2(\sigma^{2}+r_{e}^{2})}). \label{eq:deltacal_zeroth}
\end{align}
This leading-order term is minimized when \(r_{e}=0\). However, the
contributions to the upper bound from degree one or higher terms in
\(1/n_{a}\), \(c(n_{v})\) and \(1/\sqrt{n}\) may diverge as \(r_{e}\)
goes to zero; see the next paragraph for the explicit expressions for these terms. 
This issue can be mitigated by wisely choosing
  \(r_{e}\) to limit relative increase in the leading-order term
  \(L_{0}\) for the anticipated variances.  Usually,
it is sufficient to limit the relative increase to a moderately small
value such as \(1/64\). To achieve this, it suffices to set
\(r_{e}^{2}\) to around \(1/4\) of a lower bound on the variance, as shown
by the following rewriting of the expression for \(L_{0}\):
\begin{align}
  L_{0}
  &= \sqrt{2(1-\omega)\ln(1/\epsilon)}\sigma
    \sqrt{1+r_{e}^{2}/\sigma^{2}}\qty(1-\frac{r_{e}^{2}}{2(\sigma^{2}+r_{e}^{2})})
    \nonumber\\
  &\leq \sqrt{2(1-\omega)\ln(1/\epsilon)}\sigma
    \qty(1+\frac{r_{e}^{2}}{2\sigma^{2}})
    \qty(1-\frac{r_{e}^{2}}{2(\sigma^{2}+r_{e}^{2})})\nonumber\\
  &=
    \sqrt{2(1-\omega)\ln(1/\epsilon)}\sigma
    \qty(
    1+\frac{r_{e}^{4}}{4\sigma^{2}(\sigma^{2}+r_{e}^{2})}
    )\nonumber\\
  &=
    \sqrt{2(1-\omega)\ln(1/\epsilon)}\sigma
    \qty(
    1+\frac{r_{e}^{4}}{4\sigma^{4}(1+r_{e}^{2}/\sigma^{2})}
    ),
\end{align}
where the inequality in the second line uses \(\sqrt{1+x}\leq 1+x/2\) for all \(x \geq -1\).
The relative increase compared to \(r_{e}=0\) is bounded by
  \({r_{e}^{4}}/{4\sigma^{4}(1+r_{e}^{2}/\sigma^{2})}\), which is less
  than \(1/64\) for \(r_{e}^{2}\leq \sigma^{2}/4\). 

In one expression and including the terms involving powers of
  \(1/n_{a}\), \(c(n_{v})\) and \(1/\sqrt{n}\), the upper bound on
\(\Exp(\tilde\Delta)\) that has been obtained is
\begin{align}
  \Exp(\tilde\Delta) 
  &\leq
    \sqrt{2(1-\omega)\ln(1/\epsilon)}\sqrt{\sigma^{2}+r_{e}^{2}}
    \nonumber\\
  &\phantom{\leq\;}
    \times\Bigg(
    \qty(1-\frac{r_{e}^{2}}{2(\sigma^{2}+r_{e}^{2})})
    + \frac{1}{2}\qty(1-\frac{r_{e}^{2}}{\sigma^{2}+r_{e}^{2}})
    \qty(
    \frac{1}{n_{a}}
    + \frac{c(n_{v})}{2 n_{a}}
    + \frac{c(n_{v})}{2}
    )
    \nonumber\\
  &\phantom{\leq\;\times\;()}+
    2\sqrt{\frac{1}{\omega n}}\sqrt{\frac{2\ln(1/\epsilon)}{(1-\omega)}}
    \frac{1}{(\sigma^{2}+r_{e}^{2})^{3/2}}
    \qty(\theta\sigma^{2}+\frac{\theta\sigma^{2}}{n_{a}}+
     \frac{|m_{3}|}{n_{a}^2})
    \qty(1+ c(n_{v}))
    \Bigg).
    \label{eq:tildedelta_fullbound}
\end{align}
For spot-checking, a relevant limit is \(n \to \infty\) while keeping
\(n \omega = \bar n_{s}\) constant.
In this limit, \((1-\omega)\) approaches \(1\), and 
the bound becomes
\begin{align}
  \lim_{n\rightarrow\infty, n\omega=\bar n_{s}}\Exp(\tilde\Delta) 
  &\leq
    \sqrt{2\ln(1/\epsilon)}\sqrt{\sigma^{2}+r_{e}^{2}}
    \Bigg(
    \qty(1-\frac{r_{e}^{2}}{2(\sigma^{2}+r_{e}^{2})})
    + \frac{1}{2}\qty(1-\frac{r_{e}^{2}}{\sigma^{2}+r_{e}^{2}})
    \qty(
    \frac{1}{n_{a}}
    + \frac{c(n_{v})}{2 n_{a}}
    + \frac{c(n_{v})}{2}
    )
    \notag\\
  &\hphantom{\leq \sqrt{2\ln(1/\epsilon)}\sqrt{\sigma^{2}+r_{e}^{2}}
    \Bigg(}
     {}+ 2\sqrt{\frac{2\ln(1/\epsilon)}{\bar n_{s}}}\frac{1}{(\sigma^{2}+r_{e}^{2})^{3/2}}
    \qty(\theta\sigma^{2}+\frac{\theta\sigma^{2}}{n_{a}}+ \frac{|m_{3}|}{n_{a}^2})\qty(1+c(n_{v}))   
    \Bigg).
    \label{eq:tildedelta_asymbound} 
\end{align}
Moderately small values of \(n_{a}\) and \(n_{v}\) are sufficient to
ensure that the relative increase in the upper bound due to
statistical fluctuations in calibration is small.

\section{Constant expected number of spot-checked trials}
\label{sect:constant_omega_n}

At the end of the previous section, we obtained a bound on the
asymptotic expected gap between \(S_{U}\) and \(S_{\lb}\) in the limit
where \(n\to \infty\) while the expected number of spot-checked
trials, given by \(\bar n_{s}=n \omega\), remains constant. For this,
we took into consideration the statistics of the mean and variance
estimates from calibration.  In the first part of this section, we
evaluate the asymptotic behavior in the same limit, but for the case
where the mean and variance are known without requiring either
calibration or regularization.  As before, we consider an anticipated
distribution \(\nu\) governing each i.i.d.  spot-checking trial, under
which the variable \(X\) in each trial has a finite mean \(\theta\)
and finite variance \(\sigma^{2}\). For the analysis below, we assume
both \(\theta\) and \(\sigma^{2}\) are known, and we exclude the
trivial case \(\sigma^{2}=0\), which induces singular behavior in the
absence of regularization.  Consistent with the results of the
previous section, we find that \(\Exp_{\nu}(\Delta)/n\) is of order
\(1/\sqrt{\bar n_{s}}\), proportional to the standard deviation for
the estimate of the mean from \(\bar n_{s}\) spot-checked samples.  So
far we have considered only the expected gap between \(S_{U}\) and
\(S_{\lb}\).  Therefore, in the second part of this section, we
address the question of whether the actual confidence bound obtained
has high probability of being near the expected one
\(\Exp_{\nu}(S_{\lb})\) by estimating the variance of the gap between
\(S_{U}\) and \(S_{\lb}\). We find that this variance is
asymptotically smaller than the square of the expected gap by a factor
of \(2\ln(1/\epsilon)\).

From Prop.~\ref{prop:bndwithest} and by substituting \(\bar n_{s}/n\) for \(\omega\), 
we obtain the following bound under the anticipated distribution \(\nu\):
\begin{align}
  \frac{1}{n}\Exp_{\nu}(\Delta)
  &\leq
          \sigma\sqrt{\frac{(1-\bar n_{s}/n)\ln(1/\epsilon)}{2\bar n_{s}}}
      \qty(B(\beta\theta)+1),
    \label{eq:bndwithest_constant}
\end{align}
where \(\beta=\frac{1}{n\sigma}\sqrt{\frac{2\bar n_{s}\ln(1/\epsilon)}{(1-\bar n_{s}/n)}}\).
This bound is obtained using the estimation factor \(T_{e}\)
defined prior to Prop.~\ref{prop:bndwithest}, together with the 
estimates \(\theta_{e}=\theta\) and \(\sigma_{e}=\sigma\).  
To ensure the positivity of \(T_{e}\), we require that both \(n\) and
\(\bar n_{s}\) are large enough.  In particular, following the arguments 
around Eq.~\eqref{eq:B_fun_ub} of the previous section, it suffices to
ensure that \(B(\beta\theta)\leq 2\).  From those arguments, this is
satisfied if \(\beta\theta \leq \bar n_{s}/(4n)\). We can rewrite this inequality  
to obtain the condition 
\begin{align}
  \bar n_{s}
  &\geq 32 \qty(\frac{\theta}{\sigma})^{2}\ln(1/\epsilon)\frac{1}{1-\bar n_{s}/n}.
\end{align}
Define \(\bar n_{s,\lb} = 32 (\theta/\sigma)^{2}\ln(1/\epsilon)\).
Then the condition is satisfied if \(\bar n_{s}>\bar n_{s,\lb}\) and
\(n\geq \bar n_{s}^{2}/(\bar n_{s}- \bar n_{s,\lb}) = \bar
n_{s}/(1-\bar n_{s,\lb}/\bar n_{s})\). For the remainder of this
section, we assume that these two inequalities are satisfied.

We consider quantities to lowest order in \(1/n\) and estimate to
lowest order in \(1/n\).  Let
\(c=(\theta/\sigma)\sqrt{2\ln(1/\epsilon)}\).  Then, the term
\(B(\beta\theta)\) in the bound of
  Eq.~\eqref{eq:bndwithest_constant} is
\(B(c\sqrt{\bar n_{s}}/n + O(1/n^{2}))\).  Using the definition
of the function \(B(z_0)\) in Lem.~\ref{lem:lbetatheta_bounds}
with \(z_0=c\sqrt{\bar n_{s}}/n + O(1/n^{2})\), we have
\begin{align}
  B\qty(c\sqrt{\bar n_{s}}/n + O\qty(1/n^{2}))
  &= (\bar n_{s}^{2}/n^{2}) (1+O(1/n))\qty(1-(1-\bar n_{s}/n)\qty(1+c\sqrt{\bar n_{s}}/n + O\qty(1/n^{2})))^{-2}\notag\\
  &=
    (\bar n_{s}^{2}/n^{2}) (1+O(1/n))\qty(\bar n_{s}/n - c\sqrt{\bar n_{s}}/n + O(1/n^{2}))^{-2}
    \notag\\
  &=    (\bar n_{s}^{2}/n^{2})(1+O(1/n))
    \qty((\bar n_{s}/n) (1 - c/\sqrt{\bar n_{s}})(1 + O(1/n)))^{-2}
    \notag\\
  &= (1 + c/(\sqrt{\bar n_{s}}-c))^{2}(1 + O(1/n)).
\end{align}
In view of this and applying \(\sqrt{1-\bar n_{s}/n} = 1 + O(1/n)\), 
the bound on the expected gap in Eq.~\eqref{eq:bndwithest_constant}
 can therefore be written as 
\begin{align}
  \frac{1}{n}\Exp_{\nu}(\Delta)
  & \leq \sigma\sqrt{\frac{\ln(1/\epsilon)}{2\bar n_{s}}}\qty(1 + O(1/n))
        \qty(1+ (1 + c/(\sqrt{\bar n_{s}}-c))^{2}(1 + O(1/n))) \notag \\
  &=  \sigma\sqrt{\frac{\ln(1/\epsilon)}{2\bar n_{s}}}
          \qty(2+2\qty(c/(\sqrt{\bar n_{s}}-c))\qty(1+ c/(2(\sqrt{\bar n_{s}}-c)))) \qty(1 + O(1/n))  \notag  \\
  &=   \sigma\sqrt{\frac{2\ln(1/\epsilon)}{\bar n_{s}}}
          \qty(1+\qty(c/(\sqrt{\bar n_{s}}-c))(1+c/(2 (\sqrt{\bar n_{s}}-c)))) \qty(1 + O(1/n))  \notag  \\      
  &=
          \sigma\sqrt{\frac{2\ln(1/\epsilon)}{\bar n_{s}}}
      \qty(
    1+\frac{\theta}{\sigma (\sqrt{\bar n_{s}}-c)}\sqrt{2\ln(1/\epsilon)}\qty(1+
    \frac{\theta}{\sigma (\sqrt{\bar n_{s}}-c)}\sqrt{\frac{\ln(1/\epsilon)}{2}})
    ) (1+O(1/n)).
    \label{eq:avgapns}
\end{align}
This indicates that the leading order of \(\Exp_{\nu}(\Delta)/n\)
scales as \(1/\sqrt{\bar n_{s}}\), which is proportional to the
standard deviation of the sample mean estimated from \(\bar n_{s}\)
spot-checked trials.  We note that the leading order is the same as
that obtained from Eq.~\eqref{eq:tildedelta_asymbound} when neither
regularization nor calibration is used.

It is necessary to check that the actual gap has an acceptably narrow
distribution around the expected gap.  For this we need finite bounds
on \(m_{3,\rm{abs}}\) and \(m_{4}\), the third absolute and the fourth
central moments of \(X\) with respect to \(\nu\). \ The third absolute
central moment \(m_{3,\rm{abs}}\) is finite if the fourth central
moments \(m_{4}\) and the variance \(\sigma^2\) are finite. It can be
bounded by the Cauchy-Schwartz inequality as follows:
\begin{align}
  m_{3,\rm{abs}} &= \Exp_{\nu}(|X-\theta|^{2}|X-\theta|)
          \notag\\
        &\leq \qty(\Exp_{\nu}|X-\theta|^4)^{1/2} \qty(\Exp_{\nu}|X-\theta|^{2})^{1/2}
          \notag\\
         &= m_{4}^{1/2}\sigma.
\end{align}
Here we consider only the case \(U=\Exp(X|\Past)=\theta\).  The
statistics of the gap are then completely determined by the statistics
of the logarithm of the estimation factor \(T_{e}(X,Y)\) used. The
contribution of the estimation factor to \(\Delta/n\) is the sum of
\(n\) i.i.d samples of \(-\ln\qty(T_{e}(X,Y))/(\beta n)\).  The
variance of \(\Delta/n\) is thus given by
\begin{align}
  \Var(\Delta/n)
  &= n\Var\qty(\knuth{Y=1}\theta/n-\ln\qty(T_{e}(X,Y))/(\beta n))
    \notag\\
  &=\frac{1}{\beta^{2}n}\Var\qty(\knuth{Y=1}\beta\theta-\ln\qty(T_{e}(X,Y)))
    \notag\\
  &=\frac{\sigma^{2}n}{2\bar n_{s}\ln(1/\epsilon)}\Var\qty(\knuth{Y=1}\beta\theta-\ln\qty(T_{e}(X,Y)))(1-\bar n_{s}/n)
    \label{eq:vardeltan}
\end{align}
after substituting the expression for \(\beta\) noted above.
Below we show that the variance \(\Var(\Delta/n)\) is close to 
\(\sigma^{2}/\bar n_{s}\), which is the variance of the average 
of \(\bar n_{s}\) i.i.d. spot-checked samples of \(X\), as one might 
expect. Comparing the square root of this variance to the expected
gap in Eq.~\eqref{eq:avgapns}, we find that the former is smaller by a factor
of \(\sqrt{2\ln(1/\epsilon)}\), suggesting that the actual gap
is typically not much larger or smaller than the expected gap.  

\newcommand{\Osmaller}{O(\text{smaller})} 
For the calculation of the variance \(\Var\qty(\knuth{Y=1}\beta\theta-\ln\qty(T_{e}(X,Y)))\), 
we compute only the leading-order term in \(1/n\), and within that,
the leading-order term in \(1/\bar n_{s}\).  Terms of 
higher order in \(1/n\) or of the same order in \(1/n\) but higher order
in \(1/\bar n_{s}\) are neglected.  For simplicity, we use 
the notation ``\(+\Osmaller\)'' to indicate the
suppression of such higher-order terms without 
specifying them explicitly. The degree of the
explicitly stated term determines the meaning of ``\(+\Osmaller\)''.

According to the law of total variance, 
\begin{align}
  \Var\qty(\knuth{Y=1}\beta\theta-\ln\qty(T_{e}(X,Y)))
  &=
    (\bar n_{s}/n)\Var\qty(\ln(T_{e}(X,0))|Y=0)
    +(1-\bar n_{s}/n)\Var\qty(\ln(T_{e}(X,1))|Y=1)
    \notag\\
  &\hphantom{=(\bar n_{s}/n)\Var\qty(\ln\qty(T_{e}(X,0)|Y=0))}
          + \Var\qty(\Exp_{\nu}\qty(\knuth{Y=1}\beta\theta-\ln(T_{e}(X,Y))|Y)) \nonumber \\
          &= (\bar n_{s}/n)\Var\qty(\ln(T_{e}(X,0))|Y=0) + \Var\qty(\Exp_{\nu}\qty(\knuth{Y=1}\beta\theta-\ln(T_{e}(X,Y))|Y)).
          \label{eq:total_variance}
\end{align}
For the second line, we applied the
identity \(\Var\qty(\ln(T_{e}(X,1))|Y=1)=0\), which holds because
\(T_{e}(X,1)\) does not depend on \(X\).
\(\Exp_{\nu}(\knuth{Y=1}\beta\theta-\ln(T_{e}(X,Y))|Y)\) is a binary RV that takes the value
\(\Exp_{\nu}(-\ln(T_{e}(X,0)))\) with probability \( \bar n_{s}/n\) and
the value \(\Exp_{\nu}(\beta\theta-\ln(T_{e}(X,1)))=0\) with
probability \((1-\bar n_{s}/n)\), where the latter equality follows from \(T_{e}(X,1)=t_{e}=e^{\beta \theta}\). We have
\begin{align}
     \ln(T_{e}(X,0)) &= \ln(\frac{n}{\bar n_{s}}) +  \ln(1- (1-\bar n_{s}/n)e^{-\beta (X-\theta)}).
\end{align}
We effectively already estimated \(-\Exp_{\nu}(\ln(T_{e}(X,0)))\) in
Eq.~\eqref{eq:avgapns}. For the chosen parameters and
from Eqs.~\eqref{eq:def_lbetatheta} and~\eqref{eq:expdelta_lbetatheta},  we have
\begin{align}
  \Exp_{\nu}\qty(\ln(T_{e}(X,0)))
  &=
    -\qty(\beta \Exp_{\nu}(\Delta) + \ln(\epsilon))/(n\omega)
    \notag\\
  &=-\sqrt{\frac{2\ln(1/\epsilon)}{\sigma^{2}\bar n_{s}}}\frac{\Exp_{\nu}(\Delta)}{n}
    +\frac{\ln(1/\epsilon)}{ \bar n_{s}}+\Osmaller.
\end{align}
Because \(0\leq\Exp_{\nu}(\Delta)/n \leq \sigma\sqrt{2\ln(1/\epsilon)/\bar n_{s}}+ \Osmaller\),
\begin{align}
  |\Exp_{\nu}(\ln(T_{e}(X,0)))|
  &\leq \frac{\ln(1/\epsilon)}{\bar n_{s}}  + \Osmaller.
\end{align}
The variance of a binary RV,  with probability \(\omega\) of taking the lower value and
a difference \(d\) between the two values, is given by \(\omega(1-\omega)d^{2}\).
From the above calculations, 
for the binary RV \(\Exp_{\nu}(\knuth{Y=1}\beta\theta-\ln(T_{e}(X,Y))|Y)\),  we have 
\(d \leq \ln(1/\epsilon)/\bar n_{s}+ \Osmaller\). Therefore, we obtain
\begin{align}
  \Var\qty(\Exp_{\nu}\qty(\knuth{Y=1}\beta\theta-\ln(T_{e}(X,Y))|Y))
  &\leq \frac{\ln^{2}(1/\epsilon)}{\bar n_{s} n} + \Osmaller.
  \label{eq:total_variance_part1}
\end{align}

It remains to bound
\begin{align}
  \Var(\ln(T_{e}(X,0))|Y=0) &= \Var\qty(\ln(1- (1-\bar n_{s}/n)e^{-\beta (X-\theta)}))\notag\\
                        &= \Var(f(\beta X)),
\end{align}
where the function \(f(z)\) is defined in the proof of Lem.~\ref{lem:lbetatheta_bounds}, 
with parameters \(a=(1-\bar n_{s}/n)\) and \(z_{0}=\beta \theta\) in the definition
 there.  In that proof, we showed that the function \(f(z)\) is convex and that its positive second derivative 
 is maximized at \(z=0\).   For any RV \(Z\) and constant \(z_0\), 
 we have  \(\Exp\qty((Z-z_{0})^2)=\Var(Z)+\qty(\Exp(Z)-z_0)^2\).  Therefore, we can upper-bound 
 the variance  \(\Var(f(\beta X))\) by \(\Exp_{\nu}\qty((f(\beta X)-f(\beta \theta))^{2})\).
By the properties of \(f(z)\) and in view of Eq.~\eqref{eq:fx_expansion0}, 
\begin{align}
  f(\beta X)-f(\beta\theta))
  &\geq f'(\beta\theta)(\beta X-\beta \theta)
    \notag\\
  f(\beta X)-f(\beta\theta))
  &\leq f'(\beta\theta)(\beta X-\beta \theta)
    +(f''(0)/2) (\beta X-\beta \theta)^{2}. 
\end{align}
Hence, 
\begin{align}
  |f(\beta X)-f(\beta\theta)| &\leq \max\qty(|f'(\beta\theta)(\beta X-\beta \theta)|, |f'(\beta\theta)(\beta X-\beta \theta)
    +(f''(0)/2)(\beta X-\beta \theta)^{2} |)\nonumber \\
  &\leq
  \beta|f'(\beta\theta)(X-\theta)|
    +\beta^{2}(f''(0)/2)(X-\theta)^{2}.
\end{align}
We square and take expectations to get 
\begin{align}
\Exp_{\nu}\qty((f(\beta X)-f(\beta \theta))^{2}) & \leq \Exp_{\nu} \qty(\qty(\beta|f'(\beta\theta)(X-\theta)|
    +\beta^{2}(f''(0)/2)(X-\theta)^{2})^2)  \nonumber \\
    & \leq \beta^2 \qty(f'(\beta\theta))^2 \Exp_{\nu}\qty((X-\theta)^{2})+\beta^3  |f'(\beta\theta)| f''(0) \Exp_{\nu}\qty(|X-\theta|^{3})+ \beta^4 \qty(f''(0))^2 \Exp_{\nu}\qty((X-\theta)^{4})/4 \nonumber \\
    & =   \beta^2 \qty(f'(\beta\theta))^2 \sigma^2 +\beta^3  |f'(\beta\theta)| f''(0) m_{3, \rm{abs}}+ \beta^4 \qty(f''(0))^2 m_4/4.
     \end{align}
Therefore, we have 
\begin{align}
  \Var(\ln(T_{e}(X,0))|Y=0)
  &\leq
    \beta^{2}\qty(
    \qty(f'(\beta\theta))^{2}\sigma^{2}
    + \beta |f'(\beta\theta)|f''(0) m_{3, \rm{abs}}
     +\beta^{2}\qty(f''(0))^{2}m_{4}/4
    ).
    \label{eq:varbound1}
\end{align}
For the contributions from the derivatives of \(f\), a direct calculation based on the 
expressions for \(f'(z)\) and \(f''(z)\) in the proof of Lem.~\ref{lem:lbetatheta_bounds} 
 shows that \(|f'(\beta\theta)| =a/ (1-a)= n/\bar n_{s}-1=n/\bar n_{s}+ \Osmaller\)
and \(f''(0)= aB(\beta\theta)/(1-a)^2 =n^2/\bar n^2_{s}+ \Osmaller\).
 Furthermore,  since \(\beta =\sqrt{2\bar n_{s}\ln(1/\epsilon)/(n^{2}\sigma^{2}(1-\bar n_{s}/n))}\),
the first term in the right-hand side of Eq.~\eqref{eq:varbound1} dominates,  and 
 the terms involving \(m_{3,\rm{abs}}\) and \(m_{4}\) can be neglected under
the assumption that  \(m_{3,\rm{abs}}\) and \(m_{4}\) are finite. We now have
\begin{align}
  \Var(\ln(T_{e}(X,0))|Y=0)
  & \leq  \beta^{2}\qty(f'(\beta\theta))^{2}\sigma^{2} + \Osmaller \nonumber \\
  &= \frac{2\ln(1/\epsilon)}{\bar n_{s}} + \Osmaller. 
  \label{eq:total_variance_part2}
  \end{align}
In view of Eqs.~\eqref{eq:total_variance},~\eqref{eq:total_variance_part1}, and~\eqref{eq:total_variance_part2}, 
\begin{align}
\Var(\knuth{Y=1}\beta\theta-\ln(T_{e}(X,Y))) & \leq  \frac{\bar n_{s}}{n} \qty(\frac{2\ln(1/\epsilon)}{\bar n_{s}} + \Osmaller) + \qty(\frac{\ln^{2}(1/\epsilon)}{\bar n_{s} n} + \Osmaller)  \nonumber \\
&=   \frac{2\ln(1/\epsilon)}{n} + \Osmaller.
\end{align}
 Substituting this expression in Eq.~\eqref{eq:vardeltan} gives
\begin{align}
    \Var(\Delta/n)
  &\leq
  \frac{\sigma^{2}}{\bar n_{s}} + \Osmaller.
\end{align}
Up to the neglected terms, this confirms that \(\Var(\Delta/n)\)
is the same as the variance of the average of \(\bar n_{s}\)
i.i.d. spot-checked samples of \(X\).

\section{Asymptotic tightness of estimation-factor-based confidence bounds}
\label{sect:tightness}

We say that a confidence bound is asymptotically tight if the expected
gap between the bound and the estimated quantity is of the same
order as the standard deviation of the distribution of the
quantity.  Sect.~\ref{sect:analytical_opt} establishes the
asymptotic tightness  of estimation-factor-based confidence bounds in the
setting where the trials are i.i.d.  In this section, we extend the result to the setting where,
in each trial $i$, the RV $X_i$ is non-negative with constant
past-conditional mean and a variance that is upper-bounded conditional
on the past.  In this setting, the trials are not necessarily i.i.d.

A desirable property of lower confidence bounds is that their
expectations are bounded from above by the expectation of the
estimated quantity.  We previously noted that this holds for 
estimation-factor-based lower confidence bounds as a 
consequence of Eq.~\eqref{eq:obtbnd2}. We begin
by formalizing this observation.

\begin{proposition}\label{thm:soundness}
  Consider a sequence of \(n\) trials that are not necessarily i.i.d.,
  where at the \(i\)'th trial,  the spot-checking probability conditional 
  on the past, \(\Prob(Y_{i}=0|\past_{i})\), is known and denoted by \(\omega_i\). 
   The confidence bound \(S_{\lb}\) provided by an arbitrary
  sequence of $\EF$s designed at the \(i\)'th trial for \(\omega_{i}\)
   satisfies $\Exp(S_{\lb}) \leq \Exp(S_{U})$.
\end{proposition}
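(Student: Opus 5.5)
The plan is to reduce the statement to a per-trial inequality and then sum over trials. By Eq.~\eqref{eq:conf_lb}, $\beta S_{\lb}=\sum_{i=1}^n\ln T_i+\ln\epsilon$, and $\ln\epsilon<0$. It therefore suffices to show, for every $i$ and every value $\past_i$,
\begin{align}
  \Exp\qty(\ln T_i \,\big|\, \past_i) \leq \beta\,\Exp\qty(\knuth{Y_i=1}U_i \,\big|\, \past_i).
  \label{eq:plan_perTrial}
\end{align}
Taking expectations of Eq.~\eqref{eq:plan_perTrial} over the past (tower property), summing over $i$, and dividing by $\beta>0$ then gives $\Exp(S_{\lb})\leq \Exp(S_U)+\ln(\epsilon)/\beta\leq\Exp(S_U)$. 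This conclusion holds for both $U_i=X_i$ and $U_i=\Exp(X_i|\Past_i)$, because $\Exp(\knuth{Y_i=1}X_i|\past_i)=\Exp(\knuth{Y_i=1}\Exp(X_i|\Past_i)|\past_i)$ by free choice and the tower property.

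To prove Eq.~\eqref{eq:plan_perTrial}, I would use Jensen's inequality on the concave logarithm rather than the specific form of $T_i$:
\begin{align}
  \Exp\qty(\ln T_i - \beta\knuth{Y_i=1}U_i \,\big|\, \past_i)
  = \Exp\qty(\ln\qty(T_i e^{-\beta\knuth{Y_i=1}U_i}) \,\big|\, \past_i)
  \leq \ln \Exp\qty(T_i e^{-\beta\knuth{Y_i=1}U_i} \,\big|\, \past_i)
  \leq \ln 1 = 0 .
\end{align}
The last step is exactly the estimation-factor inequality Eq.~\eqref{eq:efineq}. That inequality holds for the true conditional distribution because $T_i$ was designed for the model with spot-checking probability $\omega_i$, and the true past-conditional distribution belongs to that model (free choice with $\Prob(Y_i=0|\past_i)=\omega_i$). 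This argument reproduces the first line of Eq.~\eqref{eq:obtbnd2} in general, without the i.i.d.\ reference assumption.

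The main obstacle is integrability. $\ln T_i$ may equal $-\infty$ with positive probability, for instance when $t=0$ or $T'(x)=0$, and $U_i$ may be unbounded above. I would handle this by working with extended-real expectations. Since $T_i e^{-\beta\knuth{Y_i=1}U_i}$ is nonnegative with conditional mean at most $1$, the positive part of its logarithm has finite expectation, so Jensen's inequality applies in the extended sense, with the left-hand side possibly $-\infty$. The rearrangement $\ln T_i=\ln(T_ie^{-\beta\knuth{Y_i=1}U_i})+\beta\knuth{Y_i=1}U_i$ involves no $\infty-\infty$, because $U_i\geq b$ is bounded below. If $\Exp(S_U)=+\infty$ the claim is trivial; otherwise all the sums are well defined and the inequality follows.
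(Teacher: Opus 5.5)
Your proof is correct, but it takes a different route from the paper's.

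\textbf{The paper's route.} The paper restricts to the extremal factors $F_{\beta,t_i}$. This is justified implicitly because, in the binary model, every estimation factor is dominated pointwise by one of them, so its logarithm can only be smaller. The paper then writes $\Exp(S_{\lb})$ out explicitly and checks the per-trial inequality through Eq.~\eqref{eq:obtbnd2}. There, Jensen's inequality is applied to the concave map $x\mapsto\ln(1-(1-\omega_i)t_ie^{-\beta x})$ to replace $X_i$ by $\Exp(X_i|\past_i)$. Next, $h(t;\Exp(X_i|\past_i))$ is maximized over $t$; the maximum is $\omega_i\ln\omega_i+(1-\omega_i)\beta\Exp(X_i|\past_i)$, attained at $t=e^{\beta\Exp(X_i|\past_i)}$.

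\textbf{Your route.} You apply Jensen to the logarithm of the whole product $T_ie^{-\beta\knuth{Y_i=1}U_i}$ and close with the estimation-factor inequality~\eqref{eq:efineq} itself. In other words, you use the general fact that a nonnegative factor with conditional mean at most $1$ has nonpositive conditional expected logarithm.

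\textbf{What each buys.} Your argument never uses the specific form of $T_i$ or of the trial model. So it covers non-extremal factors without the domination step. It also applies verbatim to the multi-valued spot-checking variables of Sect.~\ref{sect:general_spotcheck} and to the adversarially varying probabilities of Sect.~\ref{sect:EFs_with_bias}. You also address integrability issues ($\ln T_i=-\infty$, unbounded $U_i$), which the paper passes over. The paper's explicit computation gives a sharper intermediate bound, expressed through the objective $\cO(\beta,t)$ and the function $h(t;x)$ of Sect.~\ref{sect:numerical_opt}. That bound isolates the sources of slack: the mismatch between $t$ and $e^{\beta\Exp(X)}$, and the Jensen gap of $x\mapsto\ln F'_{\beta,t}(x)$. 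It therefore feeds directly into the later gap and tightness analyses.
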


\begin{proof}
  Since the confidence bound and the estimated quantity are both
    additive over trials, it suffices to consider the contributions
    for each trial separately.  We have
    \begin{align}
      \Exp(S_{U})
      &=\sum_{i=1}^{n}\Exp(Y_{i}X_{i})
        \notag\\
      &= \sum_{i=1}^{n}\Exp(\Exp(Y_{i}X_{i}|\Past_{i}))
        \notag\\
      &= \sum_{i=1}^{n}\Exp((1-\omega_{i})\Exp(X_{i}|\Past_{i}))\notag\\
      &=\Exp\qty(\sum_{i=1}^{n}(1-\omega_{i})\Exp(X_{i}|\Past_{i})),
    \end{align}
    where the outer expectation averages over the values of \(\Past_{i}\).   
    The contribution of trial \(i\) to the estimated quantity, conditional on
    a value \(\past_{i}\) of \(\Past_{i}\) is
    \((1-\omega_{i})\Exp(X_{i}|\past_{i})\). As shown in the main
    text, the confidence bounds are optimized for $\EF$s of the form
    in Eq.~\eqref{eq:t_maximal}, with \(X\), \(Y\), and \(t\) replaced
    by the trial-wise variables \(X_i\), \(Y_i\), and \(t_i\).  Then,
    the confidence bound \(S_{\lb}\) in Eq.~\eqref{eq:conf_lb} has
    expectation value
  \begin{equation}
    \label{eq:exp_conf_bound_true1}
    \Exp(S_{\lb})=\Exp\qty(\sum_{i=1}^{n}\qty(\omega_{i} \Exp \qty(\ln(F'_{\beta, t_i}(X_i))|\Past_{i}) +(1-\omega_{i})\ln(t_i) )/\beta+\ln(\eps)/\beta),
  \end{equation}
  where $F'_{\beta, t_i}(X_i)=\big(1-(1-\omega_{i})t_i e^{-\beta
      X_i}\big)/\omega_{i}$. Thus, it suffices to
  show that the following inequality holds for
    the contribution to \(\Exp(S_{\lb})\) from trial \(i\), conditional on \(\past_{i}\):
  \begin{equation} \label{eq:soundness_proof2}
    \Big(\omega_{i} \Exp\qty(\ln(F'_{\beta, t_i}(X_i))|\past_{i}) +(1-\omega_{i})\ln(t_i)+ \ln(\eps)/n \Big)/\beta  \leq (1-\omega_{i}) \Exp (X_i|\past_{i}).
  \end{equation}
  This inequality is established for a generic trial by Eq.~\eqref{eq:obtbnd2}.
\end{proof}

For asymptotic tightness, the main result is given by the next theorem.

\begin{theorem}\label{thm:tightness}
  Let \(\theta \geq 0\) and \(\sigma^2>0\) be finite.  Consider a
  sequence of \(n\) trials, where in each trial \(i\), the RV
  \(X_i \geq 0\) follows a distribution with unknown but constant
  past-conditional mean \(\Exp(X_i|\Past_i)=\theta\) and unknown
  past-conditional variance satisfying
  \(\Var(X_i |\Past_i) \leq \sigma^2\).  Let the spot-checking RV
  \(Y_i\) be independent of the past with constant 
  \(\Prob(Y_i=0)=\omega \in (0,1)\). 
  Suppose that an estimate \(\theta_e\geq 0\) of the unknown mean
  \(\theta\) is given before observing the sequence of trials.  Then
  there exists a sequence of identical \(\EF\)s that yields a
  confidence bound \(S_{\lb}\) satisfying
  \( \Exp(S_{\lb}) \geq \Exp(S_{U}) - \sqrt{n
    (1-\omega)\ln(1/\eps)/(2\omega)} \big((\theta-\theta_e)^2/\sigma +
  2\sigma \big) - O(1) \) for sufficiently large \(n\).
\end{theorem}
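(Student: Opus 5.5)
We propose to use the estimation factor \(T_{e}\) built from the given estimate \(\theta_{e}\) and the variance bound \(\sigma^{2}\) as the estimate \(\sigma_{e}^{2}\). Concretely, set \(\beta=\sqrt{2\omega\ln(1/\epsilon)/(\sigma^{2}n(1-\omega))}\) and \(t=e^{\beta\theta_{e}}\) in Eq.~\eqref{eq:t_maximal}, and use the same factor in every trial. For \(n\geq\tilde n_{\thresh}\), which is a condition fixed before seeing the data, we have \(t<1/(1-\omega)\). Hence this is a valid estimation factor for \(\cF_{\omega}\), and by Thm.~\ref{thm:ef_confbound} it gives a valid bound \(S_{\lb}\) regardless of the actual non-i.i.d. behavior. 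What remains is to lower-bound \(\Exp(S_{\lb})\).

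\textbf{Per-trial decomposition.} Since \(S_{\lb}\) and \(S_{U}\) are sums over trials, I will use the tower property as in the proof of Prop.~\ref{thm:soundness}:
\[
\Exp(S_{U})-\Exp(S_{\lb})=\sum_{i}\Exp\bigl(g_{i}(\Past_{i})\bigr)-\ln(\epsilon)/\beta,
\]
where \(g_{i}(\past_{i})=(1-\omega)\theta-\bigl(\omega\,\Exp(\ln F'_{\beta,t}(X_{i})|\past_{i})+(1-\omega)\ln t\bigr)/\beta\). Conditional on \(\past_{i}\), the variable \(X_{i}\) has some distribution on \([0,\infty)\) with mean \(\theta\) and variance \(\sigma_{i}^{2}\leq\sigma^{2}\). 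I will then apply Lem.~\ref{lem:lbetatheta_bounds} to this conditional distribution. That lemma only used the pointwise quadratic upper bound of Eq.~\eqref{eq:fx_expansion}, followed by taking an expectation, so it applies verbatim to each conditional distribution. This gives
\[
\beta g_{i}\leq \frac{\beta^{2}(1-\omega)}{2\omega}B(\beta\theta_{e})\bigl(\sigma_{i}^{2}+(\theta-\theta_{e})^{2}\bigr)\leq \frac{\beta^{2}(1-\omega)}{2\omega}B(\beta\theta_{e})\bigl(\sigma^{2}+(\theta-\theta_{e})^{2}\bigr).
\]
This is a deterministic bound, so summing over the \(n\) trials gives the analogue of Eq.~\eqref{eq:bardelta}:
\[
\Exp(S_{U})-\Exp(S_{\lb})\leq n\beta\frac{1-\omega}{2\omega}B(\beta\theta_{e})\bigl(\sigma^{2}+(\theta-\theta_{e})^{2}\bigr)+\frac{\ln(1/\epsilon)}{\beta}.
\]
Equivalently, this is Prop.~\ref{prop:bndwithest} with \(\sigma_{e}=\sigma\), justified through the remark in Sect.~\ref{sect:analytical_opt} that the bounds remain valid under trial-varying distributions with fixed mean and bounded variance.

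\textbf{Substituting \(\beta\).} Plugging in the chosen \(\beta\) gives
\[
\sqrt{\frac{n(1-\omega)\ln(1/\epsilon)}{2\omega}}\Bigl(B(\beta\theta_{e})\Bigl(\sigma+\frac{(\theta-\theta_{e})^{2}}{\sigma}\Bigr)+\sigma\Bigr).
\]
Since \(\beta=O(1/\sqrt{n})\) and \(B(z_{0})=1+O(z_{0})\), with \(B\) smooth on \([0,z_{\infty})\) as shown via its monotonicity and convexity, we have \(B(\beta\theta_{e})=1+O(1/\sqrt{n})\). The excess term, \((B-1)\) times a quantity of order \(\sqrt{n}\), is therefore \(O(1)\). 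The bound becomes
\[
\sqrt{\frac{n(1-\omega)\ln(1/\epsilon)}{2\omega}}\Bigl(\frac{(\theta-\theta_{e})^{2}}{\sigma}+2\sigma\Bigr)+O(1),
\]
which is exactly the claim.

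The main point needing care is the application of Lem.~\ref{lem:lbetatheta_bounds} conditionally. I have to check that its proof uses only the mean and the second moment about \(\theta_{e}\). Expanding \(\Exp\bigl((X-\theta_{e})^{2}\,|\,\past_{i}\bigr)=\sigma_{i}^{2}+(\theta-\theta_{e})^{2}\) confirms this, and the linear term \(-\beta(1-\omega)(x-\theta_{e})/\omega\) cancels against \(\beta(1-\omega)(\theta-\theta_{e})\) precisely because the conditional mean equals \(\theta\) exactly. This exact-mean condition is where the constant-mean assumption is essential. The other thing to make explicit is "sufficiently large \(n\)": it means \(n\geq\tilde n_{\thresh}\), plus enough room that \(\beta\theta_{e}\) lies in a compact subinterval of \([0,z_{\infty})\), so that the \(O(1)\) constant is uniform.
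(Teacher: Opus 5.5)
Your proposal is correct and follows essentially the same route as the paper's proof. You take the extremal factor with $t=e^{\beta\theta_e}$ and $\beta=\beta_1$ from Eq.~\eqref{eq:sub_opt_beta}, apply Lem.~\ref{lem:lbetatheta_bounds} conditionally on each past value using $\Var(X_i|\past_i)\le\sigma^2$ and the exact conditional mean $\theta$, sum to obtain Eq.~\eqref{eq:gap_bound}, and absorb $(B(\beta_1\theta_e)-1)\cdot O(\sqrt{n})=O(1)$. The only nit is that the lemma needs $\beta\theta_e<\ln(1/(1-\omega))$ strictly, so the threshold on $n$ must be strict rather than $\geq$; your ``sufficiently large $n$'' condition already covers this.
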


Three remarks are in order. First, the estimate \(\theta_{e}\) may be
chosen in an arbitrary way and need not be accurate. Second, if
\(\theta_{e}=\theta\), the theorem implies that the gap between
\(\Exp(S_{U})\) and \(\Exp(S_{\lb})\) is
\(\sigma \sqrt{2n (1-\omega)\ln(1/\eps)/\omega}+O(1)\). This gap is
essentially optimal, as it cannot be significantly reduced further in
view of the observations following Eq.~\eqref{eq:expgap}.  Third,  if 
the weak law of large numbers were applicable, based on this theorem we
expect that the confidence bound $S_{\lb}$ exceeds
$\Exp(S_{U})-O(\sqrt{n})$ with high probability for large \(n\).
Since the estimation factors depend on \(n\), the weak law of large
numbers is not immediately applicable, even for
i.i.d. trials. However, the variance calculation for the gap between \(S_{U}\) and
\(S_{\lb}\) given a constant expected number of spot-checked trials in the
second part of Sect.~\ref{sect:constant_omega_n} suggests that our
expectation is justified.

\begin{proof} The proof follows the same structure as the derivation
  of the upper bound obtained in Eq.~\eqref{eq:expgap}, except that here
  we consider non-i.i.d. trials.  We first
  note that $\Exp(S_U)=n(1-\omega)\theta$, regardless of whether
  \(U=\Exp(X|\Past)\) or \(U=X\) for a generic trial.

  Consider $\EF$s of the form given in Eq.~\eqref{eq:t_maximal} for a
  generic trial, which are parameterized by $\beta>0$ and $t\in [0,
  1/(1-\omega)]$.  Using a sequence of such identical $\EF$s, the
  expectation of the confidence bound $S_{\lb}$ is given in
    Eq.~\eqref{eq:exp_conf_bound_true1}, replacing \(\omega_{i}\)
    and \(t_i\) there with \(\omega\) and \(t\) here.    As in
  Sect.~\ref{sect:analytical_opt}, we set $t=e^{\beta \theta_e}$, where
  the estimate $\theta_e \geq 0$ is obtained before the trials 
  and satisfies the constraint $e^{\beta \theta_e} \leq 1/(1-\omega)$.
  Equivalently, we have $\beta \in (0, \beta_{\text{ub}}]$, where
  \begin{equation} \label{eq:beta_ub}
    \beta_{\text{ub}}=\ln(1/(1-\omega))/\theta_e.
  \end{equation} 
  The gap between $\Exp(S_U)$ and $\Exp(S_{\text{lb}})$ can then be written as
  \begin{align}\label{eq:gap_estimation}
    \Exp(S_{U})-\Exp(S_{\text{lb}})&= \sum_{i=1}^{n}\qty(\omega
                                     \ln(\omega)-\omega \Exp \qty(\ln(1-(1-\omega) e^{-\beta
                                     (X_i-\theta_e)}) )+\beta(1-\omega) (\theta-\theta_e)
                                     )/\beta+\ln(1/\eps)/\beta \notag \\ 
                                    &= \sum_{i=1}^{n} \Exp\qty(\Exp \qty(\omega
                                     \ln(\omega)-\omega \ln(1-(1-\omega) e^{-\beta
                                     (X_i-\theta_e)})+\beta(1-\omega) (\theta-\theta_e)
                                      | \Past_{i})) /\beta+\ln(1/\eps)/\beta \notag \\ 
                                     &= \sum_{i=1}^{n}
                                            \Exp\qty(\Exp\qty(\cL_{i}(\beta, \theta_e)| \Past_{i}))/\beta+\ln(1/\eps)/\beta,
  \end{align} 
  where the function $\cL(\beta, \theta_e)$ is defined for a
  generic trial in Eq.~\eqref{eq:def_lbetatheta}, and 
  $\cL_{i}(\beta, \theta_e)$ denotes its instantiation for trial $i$.
  Since the conditional distributions for trial \(i\)  given values \(\past_i\) of the past that occur with non-zero probability 
  are in the model \(\cF\) of the generic trial as defined in Sect.~\ref{sect:EF_method} and, by
  assumption, have mean \(\theta\) and variance \(\sigma_{i |\past_i}^{2} =\Var(X_i |\past_i)\),
  the bound of Lem.~\ref{lem:lbetatheta_bounds} applies conditional on the past
  with the given mean and variance.
  We therefore obtain for each trial \(i\) and past value \(\past_i\),
    \begin{align}
   \Exp(\cL_{i}(\beta,\theta_{e})|\past_{i}) & \leq \beta^2 \qty(\sigma_{i|\past_i}^{2}+(\theta-\theta_{e})^{2}) B(\beta\theta_{e})(1-\omega)/(2\omega) \notag \\
                                         & \leq \beta^2 \qty(\sigma^{2}+(\theta-\theta_{e})^{2}) B(\beta\theta_{e})(1-\omega)/(2\omega),
                                             \label{eq:gap_term_bound}
  \end{align}
  where the second line is a consequence of the constraint that $\sigma_{i |\past_i}^2\leq \sigma^2$ for all $i$ and \(\past_i\). 
  Taking the expectation over \(\Past_i\) and summing over \(i\) yields
  \begin{align}\label{eq:gap_bound} \Exp(S_{U})-\Exp(S_{\text{lb}}) &
                                                                      \leq \sum_{i=1}^{n} \beta \big( \sigma^2+(\theta-\theta_e)^2 \big)
                                                                      B(\beta\theta_e)(1-\omega)/(2\omega) + \ln(1/\eps)/\beta \notag \\ &
                                                                                                                                           = n\beta \big( \sigma^2+(\theta-\theta_e)^2 \big)
                                                                                                                                           B(\beta\theta_e)(1-\omega)/(2\omega) + \ln(1/\eps)/\beta.
  \end{align} 
  
  Next, we determine a sufficiently good choice for $\beta$.  For this,
  we set $\beta=\beta_1$, where $\beta_1$ is given in
  Eq.~\eqref{eq:sub_opt_beta}.  To be a feasible choice, $\beta_1$ must
  not exceed $\beta_{\text{ub}}$ introduced above.  Equivalently, we
  must have that $n\geq n_{\thresh}$, where $n_{\thresh}
  =2\omega\theta_e^2 \ln(1/\eps)/\qty((1-\omega)\sigma^2
  \ln^2(1/(1-\omega)))$.  Therefore, when $n$ is sufficiently large, we
  can set $\beta=\beta_1$ and then
  $B(\beta\theta_e)=1+O(1/\sqrt{n})$. Accordingly, the upper bound in
  Eq.~\eqref{eq:gap_bound} becomes
  \begin{align}\label{eq:estimation_gap2}
    \Exp(S_{n})-\Exp(S_{\text{lb}}) & \leq \sqrt{n(1-\omega)
                                      \ln(1/\eps)/(2 \omega \sigma^2)} \big( \sigma^2+(\theta-\theta_e)^2
                                      \big) \big(1+O(1/\sqrt{n})\big) + \sqrt{n
                                      (1-\omega)\ln(1/\eps)/(2\omega)}\sigma \notag \\ & = \sqrt{n
                                                                                         (1-\omega)\ln(1/\eps)/(2\omega)} \big((\theta-\theta_e)^2/\sigma +
                                                                                         2\sigma \big)+O(1),
  \end{align} which proves the theorem. Note that unlike the upper bound
  obtained in Eq.~\eqref{eq:expgap}, we do not set $\theta_e=\theta$ in
  this proof, as $\theta$ is treated as an unknown parameter;  furthermore, \(\sigma\)
  here is an upper bound on the past-conditional variance in each trial \(i\).
\end{proof}
  
The results above, together with those in
Sect.~\ref{sect:analytical_opt}, suggest a simple and practical
construction of estimation factors aimed at maximizing the expected
lower confidence bound with finite data. Specifically, to construct a
valid estimation factor according to Eq.~\eqref{eq:t_maximal}, we can
set $\beta=\min(\beta_1, \beta_{\text{ub}})$ and
$t=e^{\beta \theta_e}$, where \(\theta_{e}\) is an estimate of the
mean before observing the sequence of spot-checking trials, and
$\beta_1$ and $\beta_{\text{ub}}$ are given in
Eqs.~\eqref{eq:sub_opt_beta} and~\eqref{eq:beta_ub}. 
 The determination of \(\beta_{1}\) in
  this construction also requires a bound \(\sigma^{2}\) on the
  variance,  which need not be strict and may be derived from an
estimate of the variance before the trials.  This technique can work
well even when the assumptions on the means and variances in
Thm.~\ref{thm:tightness} are not strictly satisfied. 
More generally, good results can be obtained for a broad range of
choices for \(\theta_{e}\) and \(\sigma^{2}\).  For example, if in
each trial the RV $X_i$ is bounded as $0\leq X_i \leq u$, we can set
$\theta_e=u/2$. In this case, since $X_i$ is bounded, its variance
$\sigma_i^2$ satisfies $\sigma_i^2 \leq \sigma^2=u^2/4$.
Consequently, when $0\leq X_i \leq u$ for each trial $i$, we can set
\begin{equation}\label{eq:analytic_EF}
  \beta =\min(\beta_1, \beta_{\text{ub}})= \min\Big(\sqrt{\frac{8\omega  \ln(1/\eps)}{n  (1-\omega) u^2}}, \frac{2\ln(1/(1-\omega))}{u}\Big) \text{ and } t=e^{\beta u/2}
\end{equation} 
to construct a valid estimation factor that can be used to analyze a
general sequence of non-i.i.d. trials. Our
numerical study indicates that these estimation factors perform well,
if not asymptotically optimally, with finite data;  see
Fig.~\ref{main-fig:tight_bound} in the main text and 
Figs.~\ref{SI-fig:tight_bound} and~\ref{SI-fig:limit_bound} in Sect.~\ref{sect:comps} for 
illustrations.  Importantly, this construction requires no experimental
calibration or simulation.

\section{Finite-data efficiency of estimation factors}
\label{sect:efficiency}

In Sect.~\ref{sect:numerical_opt_nmin}, we determined the minimum
value of \(n\) that ensures the expected gap per unchecked trial does
not exceed a given threshold \(\delta_{\thresh}\), under the
assumption that the reference distribution is representative of the
true distribution. This minimum corresponds to the smallest \(n\) that
satisfies the inequality constraint in Eq.~\eqref{eq:gapbnd}. Here, we
simplify this inequality constraint and analyze the asymptotic
behavior of the minimum value as $\delta_{\thresh} \rightarrow 0$.

As in Sect.~\ref{sect:numerical_opt}, we consider the case of
i.i.d. trials, and  let \(\nu\) denote the joint reference distribution of \(Y\) and
\(X\) in a generic trial, where \(Y\) and \(X\) are independent,
\(\Prob(Y=0)=\omega\), and \(X\) is bounded from below by \(0\). 
 Suppose \(\Exp_{\nu}(X)=\theta\).  Then, we have
\(\Exp(S_{U})=n(1-\omega)\theta\), regardless of whether
\(U=\Exp(X|\Past)\) or \(U=X\). In addition,
\(\Exp(\sum_{i=1}^{n}Y_{i})=n(1-\omega)\).  With the $\EF$s used in
the proof of Thm.~\ref{thm:tightness}, the expectation
\(\Exp(S_{\lb})\) can be written as in
Eq.~\eqref{eq:exp_conf_bound_true1}, replacing \(\omega_{i}\)
    and \(t_i\) with \(\omega\) and \(t\).  Because the reference
distribution of \(X\) is known, its mean \(\theta\) is also
known. Thus, we can set \(t=e^{\beta \theta_e}\) with
\(\theta_{e}=\theta-\delta_{\thresh}\) in the expression for
\(\Exp(S_{\lb})\). Moreover, since the trials are i.i.d.,
we obtain the following simplified expression for \(\Exp(S_{\lb})\):
\begin{align}\label{eq:exp_conf_bound_true2-1}
\Exp(S_{\text{lb}}) &= n\omega \Exp_{\nu} \qty(\ln(F'_{\beta, t}(X)))/\beta +n(1-\omega)(\theta-\delta_{\thresh})+\ln(\eps)/\beta, 
\end{align}
where $F'_{\beta, t}(X)=\big(1-(1-\omega) e^{-\beta(X-\theta+\delta_{\thresh})}\big)/\omega$. 
The inequality constraint in Eq.~\eqref{eq:gapbnd} becomes
\begin{align}
  n\omega \Exp_{\nu} \qty(\ln(F'_{\beta, t}(X)))/\beta +\ln(\eps)/\beta \geq 0.
  \label{eq:gapbnd-2}
\end{align}
To ensure that \(t=e^{\beta \theta_e}\) is a valid choice, \(\beta\)
must not exceed \(\beta_{\text{ub}}\) given in Eq.~\eqref{eq:beta_ub}.
After optimizing the $\EF$ over \(\beta\), the corresponding minimum
value for \(n\) is given by
\begin{equation}\label{eq:min_exp_num}
  n_{\text{lb}, \min}=\inf_{0<\beta \leq  \beta_{\text{ub}} } \ln(1/\eps)\qty(\omega \Exp_{\nu} \Big(\ln(F'_{\beta, t}(X))\Big))^{-1}
  = \ln(1/\eps)\qty(\omega \sup_{0<\beta \leq  \beta_{\text{ub}} }  \Exp_{\nu} \Big(\ln(F'_{\beta, t}(X))\Big))^{-1},
\end{equation}
where in computing the infimum, we exclude cases where the argument
is negative.
For each value $x$ of the RV $X$, $F'_{\beta, t}(x)$ is a concave
function of $\beta$. Because \(\ln(z)\) is concave and monotonically
increasing in \(z\), $\ln\big(F'_{\beta, t}(x)\big)$ is also concave
in $\beta$.  Moreover, in view of the fact that a convex combination
of concave functions is itself concave, the expectation
$\Exp_{\nu} \Big(\ln\big(F'_{\beta, t}(X)\big)\Big)$ is a concave
function of $\beta$. In addition, the feasible region
$0<\beta \leq \beta_{\text{ub}}$ is convex.  Thus we can find the
minimum value $n_{\text{lb}, \min}$ by convex programming.  Numerical
evidence indicates that the above optimization problem yields results
indistinguishable from those obtained with the minimization problem
formulated in Sect.~\ref{sect:numerical_opt_nmin} for the parameters
we considered. The results shown in
Fig.~\ref{main-fig:finite_efficiency} of the main text and in
Fig.~\ref{SI-fig:finite_efficiency} of Sect.~\ref{sect:comps} were
obtained by solving the optimization problem in
Eq.~\eqref{eq:min_exp_num}.

We now analyze the asymptotic behavior of $n_{\text{lb}, \min}$ as $\delta_{\thresh} \rightarrow 0$,  
assuming that the variance $\Exp_{\nu} (X-\theta)^2$ is upper-bounded by $\sigma^2$. 
In view of the inequality established in Eq.~\eqref{eq:fx_expansion} and  
$F'_{\beta, t}(x)=\big(1-(1-\omega) e^{-\beta(x-\theta+\delta_{\thresh})}\big)/\omega$, we have that for all $x\geq 0$,
\begin{equation} \label{eq:taylor_expan3}
  \ln(F'_{\beta, t}(x)) \geq \beta(x-\theta+\delta_{\thresh})(1-\omega)/\omega - \beta^2(x-\theta+\delta_{\thresh})^2 B\big(\beta(\theta-\delta_{\thresh})\big)(1-\omega)/(2\omega^2),
\end{equation}
where the non-negative function \(B(z)\), defined in
Lem.~\ref{lem:lbetatheta_bounds}, satisfies $B(z)=1+O(z)$ as
$z \to 0$. Therefore,
\begin{align} \label{eq:exp_log_EF}
  \Exp_{\nu} \big(\ln(F'_{\beta, t}(X))\big) & \geq \beta(\Exp_{\nu}(X)-\theta+\delta_{\thresh})(1-\omega)/\omega - \beta^2 \Exp_{\nu}(X-\theta+\delta_{\thresh})^2 B\big(\beta(\theta-\delta_{\thresh})\big)(1-\omega)/(2\omega^2) \notag \\
                                             & \geq \beta(\theta-\theta+\delta_{\thresh})(1-\omega)/\omega - \beta^2 \big(\Exp_{\nu}(X-\theta)^2+\delta_{\thresh}^2 \big) B\big(\beta(\theta-\delta_{\thresh})\big)(1-\omega)/(2\omega^2) \notag \\
                                             & \geq \beta\delta_{\thresh}(1-\omega)/\omega - \beta^2(\sigma^2+\delta_{\thresh}^2) B\big(\beta(\theta-\delta_{\thresh})\big)(1-\omega)/(2\omega^2).
\end{align}
Here, the second line follows from the facts that $\Exp_{\nu}(X)=\theta$ and $\Exp_{\nu} (X-\theta+\delta_{\thresh})^2=\Exp_{\nu} (X-\theta)^2+\delta_{\thresh}^2$,
and the last line is a consequence of the assumption that $\Exp_{\nu} (X-\theta)^2 \leq \sigma^2$. 
Next, we need to determine a sufficiently good choice for $\beta$. 
For this, we observe that when $\beta$ is small enough,  the lower bound in Eq.~\eqref{eq:exp_log_EF} 
is well approximated by $\beta\delta_{\thresh}(1-\omega)/\omega - \beta^2(\sigma^2+\delta_{\thresh}^2) (1-\omega)/(2\omega^2)$.  
This approximate lower bound is maximized at 
$\beta=\beta_2$, where  $\beta_2=\omega \delta_{\thresh}/( \sigma^2+\delta_{\thresh}^2)$. 
To be a feasible choice, $\beta_2$ must not exceed $\beta_{\text{ub}}$ given in 
Eq.~\eqref{eq:beta_ub} with \(\theta_{e}=\theta-\delta_{\thresh}\). This is satisfied as long   
as $\delta_{\thresh} \leq \sigma^2 \ln\big(1/(1-\omega)\big)/(\omega \theta )$.   
Therefore, when $\delta_{\thresh}$ is small enough,  the following inequality holds:
\begin{align} \label{eq:sup_exp_log_EF}
  \sup_{0<\beta \leq  \beta_{\text{ub}} }  \Exp_{\nu} \big(\ln(F'_{\beta, t}(X))\big) & \geq  \beta_2 \delta_{\thresh}(1-\omega)/\omega - \beta_2^2(\sigma^2+\delta_{\thresh}^2)\big(1+O(\beta_2)\big)(1-\omega)/(2\omega^2)  \notag \\
                                                                                      &= (1-\omega) \delta_{\thresh}^2/( \sigma^2+\delta_{\thresh}^2)-(1-\omega) \delta_{\thresh}^2\big(1+O(\delta_{\thresh})\big)/\big(2( \sigma^2+\delta_{\thresh}^2)\big) \notag \\
                                                                                      & =(1-\omega) \delta_{\thresh}^2\big(1-O(\delta_{\thresh})\big)/\big(2( \sigma^2+\delta_{\thresh}^2)\big).
\end{align}  
In view of Eq.~\eqref{eq:min_exp_num}, we then have
\begin{equation}\label{eq:min_exp_num_bound2}
  n_{\text{lb}, \min}\leq 2\ln(1/\eps) ( \sigma^2+\delta_{\thresh}^2) \big(1+O(\delta_{\thresh})\big)/\big(\omega (1-\omega) \delta_{\thresh}^2\big). 
\end{equation}
Consequently, as $\delta_{\thresh} \rightarrow 0$,
\(n_{\lb,\min}=O(1/\delta_{\thresh}^{2})\), consistent with the
standard \(1/\sqrt{n}\) scaling of measurement precision for \(n\)
i.i.d. samples. It is in this
sense that the estimation-factor method is efficient with finite data.

We remark that the upper bound in Eq.~\eqref{eq:min_exp_num_bound2}
remains valid under the relaxed assumptions in
Thm.~\ref{thm:tightness}, as can be seen by incorporating the bound
from Eq.~\eqref{eq:gap_bound} into the above derivation of the upper
bound on \(n_{\lb,\min}\). Moreover, the above choices $\beta=\beta_2$
and $t=e^{\beta_2 \theta_{e}}$ provide a simple and practical way to
construct estimation factors aimed at minimizing the number of trials
required to satisfy the prescribed gap \(\delta_{\thresh}\).  This
  construction may be compared to the one described in the last
  paragraph of Sect.~\ref{sect:tightness}.

\section{Protocols}
\label{sect:protocols}

A basic spot-checking protocol consists of four phases: a preparation
phase, which may include calibration; a setup phase, where the
estimation-factor power \(\beta\) and an initial estimation factor are
determined; an experiment phase that consists of a sequence of
spot-checking trials, during which estimation factors may optionally
be updated based on earlier data; and an analysis phase, which
produces lower confidence bounds at specified error bounds from the
values of the estimation factors evaluated for each trial.  The
preparation and setup phases may use a series of calibration trials
that are acquired before the spot-checking trials.  When updates are
performed, they involve the same kind of optimization of estimation
factors as is performed during the setup phase, except that the power
\(\beta\) is fixed.  Although updates can be applied on a
trial-by-trial basis, it is usually more effective to perform them at
less frequent intervals. For simplicity, here we describe the
protocols where the same spot-checking probability and estimation
factor are used for each spot-checking trial. 

A variety of protocols can be derived from the general structure
outlined in the previous paragraph. We view each variant as a
combination of two subprotocols, where the first,
Protocol~\ref{prot:EF_construction} (Estimation-Factor Construction),
covers the preparation and setup phases, and the second,
Protocol~\ref{prot:SC_analysis} (Spot-Checking Experiment and
Analysis), covers the experiment and analysis phases.  The first
subprotocol has multiple options for constructing estimation factors,
depending on whether the goal is to optimize the confidence bound
given a fixed number of trials, or to minimize the number of trials
given a target gap per unchecked trial for the
confidence bound.  The construction further depends on whether
calibration data is used and, if so, which properties of the estimated
reference distribution are leveraged.


\pagebreak

\begin{protocol}[Estimation-Factor Construction]\label{prot:EF_construction} 
\begin{enuma}
\item \textbf{Input}:  
   \begin{itemize} 
    \item Lower bound \(b \in (-\infty,+\infty)\) and upper bound \(u \in (b,\infty]\) on the values of \(X\) (without loss of generality, \(b=0\));
    \item Spot-checking probability \(\omega \in (0,1)\);
    \item Error bound \(\epsilon\);
    \item Number of spot-checking trials \(n\). \textit{Note: This may be ignored depending on the choices below.}
    \item (Optional) number of calibration trials \(n_c\) with \(n_c \geq 2\).   
    \textit{Note: Sect.~\ref{sect:calibratevar} provides a conservative guideline for choosing \(n_c\) to meet protocol goals.}   
 \end{itemize}
 
\item \textbf{Preparation (optional calibration)}:  
  \begin{enuma}
      \item Acquire \(n_{c}\) instances \(\bm{x}=(x_{i})_{i=1}^{n_{c}}\) of \(X\).
      \item Compute the sample mean \(\theta_{e}=\sum_{i=1}^{n_{c}} x_{i}/n_{c}\)
        and unbiased sample variance \(\sigma_{e}^{2}=\sum_{i=1}^{n_{c}}(x_{i}-\theta_{e})^{2}/(n_{c}-1)\).
      \item Estimate a reference distribution \(\nu_{e}\) of \(X\).  
        The distribution can be estimated with the empirical frequencies of the calibration trials, or by fitting a model.
       \item Return \(\bm{x},\theta_{e},\sigma_{e}^{2}\), \(\nu_{e}\). If no calibration is used, proceed without these estimates. 
  \end{enuma}
       
\item \textbf{Setup} (choose \emph{one} option depending on the goal):
  \begin{enuma}
    \item \textbf{Maximizing the expected lower confidence bound.}  
      Given \(n\), choose power \(\beta\) and estimation factor \(T\) by one of the following three methods:
      \begin{enuma}
        \item \label{prot:numerical} Optimize numerically using \(\nu_e\) (Sect.~\ref{sect:numerical_opt_smax});  
        \item \label{prot:variance} Construct analytically using  \(\theta_e\) and \(\sigma_e^2\) (Sect.~\ref{sect:analytical_opt}; see also Sect.~\ref{sect:calibratevar});  
        \item \label{prot:fixed} If \(u<\infty\), compute \(\beta\) and \(t\) via 
        Eq.~\eqref{eq:analytic_EF}, and then determine \(T\) according to Eq.~\eqref{eq:t_maximal}.  \textit{Note: This option does not require calibration data.}    
      \end{enuma}

    \item \textbf{Minimizing the number of trials.}  
      Given a gap threshold \(\delta_{\thresh}\) and the reference distribution \(\nu_{e}\), determine the minimum number of trials \(n\) required and 
      the corresponding \(\beta\) and \(T\) (Sect.~\ref{sect:numerical_opt_nmin}). Alternatively,  given \(\delta_{\thresh}\), 
      \(\theta_e\), and \(\sigma_e^2\), where \(\delta_{\thresh} \leq \sigma_{e}^2 \ln\big(1/(1-\omega)\big)/(\omega \theta_e)\), 
      compute \(\beta\) and \(t\) according to the last paragraph of Sect.~\ref{sect:efficiency},  and then construct the corresponding \(T\) via Eq.~\eqref{eq:t_maximal}.  In this 
      case, the minimum \(n\) required can be estimated according to the bound in Eq.~\eqref{eq:min_exp_num_bound2}
      with the replacement of \(\sigma^2\) there by \(\sigma_{e}^2\) here.   
      \textit{Note: If no calibration data are available and \(u<\infty\), the estimates \(\theta_e\) and \(\sigma_e^2\) may be taken as
      \(\theta_e=u/2\) and \(\sigma_e^2=u^2/4\).}   
     \end{enuma}

\item \textbf{Output}:  
  Estimation factor \(T\), power \(\beta\), and, if the number of trials was minimized, the required minimum number of trials \(n\).
\end{enuma}
\end{protocol}

\begin{protocol}[Spot-Checking Experiment and Analysis]\label{prot:SC_analysis} 

\begin{enuma}
\item \textbf{Input}:
    \begin{itemize} 
    \item Spot-checking probability \(\omega \in (0,1)\);
    \item Error bound \(\epsilon\);
    \item Number of spot-checking trials \(n\) (from Protocol~\ref{prot:EF_construction}); 
    \item Estimation factor \(T\) (from  Protocol~\ref{prot:EF_construction});
    \item Power \(\beta\) (from Protocol~\ref{prot:EF_construction}). 
    \end{itemize}

\item \textbf{Experiment}:  
  Perform \(n\) spot-checking trials. For each trial \(i\), record the spot-checking outcome \(y_i\) 
  and the observed value \((1-y_i)x_i\).  
  Collect \(\bm{y}=(y_i)_{i=1}^n\) and \(\bm{x}_{\obs}=((1-y_i)x_i)_{i=1}^n\).  

\item \textbf{Analysis}:  
  Compute the confidence bound \(S_{\lb}\) according to Eq.~\eqref{eq:conf_lb}.  

\item \textbf{Output}:  
  Confidence bound \(S_{\lb}\) at confidence level \((1-\epsilon)\).
\end{enuma}
\end{protocol}

\section{Other spot-checking scenarios}

The spot-checking protocols developed above can be readily
generalized to other scenarios. In this section, we consider three
such scenarios.  First, we consider the scenario where the
spot-checking probability in each trial can vary adversarially
within specified bounds. Second, we consider the scenario where the
spot-checking RV \(Y\) takes more than two values.  Third, we
discuss early stopping used to ensure, with high probability, 
that the number of unchecked trials equals a predetermined value.

\subsection{Spot-checking with varying probabilities}
\label{sect:EFs_with_bias}

We consider the case where the spot-checking probability may vary from
trial to trial. We assume that the true probability $\tilde\omega$ for
a given trial is unknown but constrained to the interval
$[\omega', \omega]$, where $0 < \omega' \leq \omega < 1$. We define
the trial bias as $\delta = \omega - \tilde\omega$, which implies
$\delta \leq \delta_{\max} \coloneqq \omega - \omega'$. Note that
$\delta$ is a trial-dependent RV that may be chosen adversarially and
depend on past events; for the present purposes, we deviate from our
notational convention by denoting this random variable with a
lower-case letter.  This situation corresponds to the model
\(\cM_{\cF}\) of allowed distributions for a sequence of trials, where
the trial model \(\cF\) consists of all discrete distributions of
\(Y\) and \(X\) for which \(Y\) and \(X\) are independent, the
probability that \(Y=0\) is in the interval \([\omega',\omega]\), and
\(X\) is lower-bounded by \(b\). As usual, we set \(b=0\) without loss
of generality. For the definitions of the models \(\cF\) and
\(\cM_{\cF}\), see the paragraphs preceding
Lem.~\ref{lem:ucond_efineq}.  Since Lem.~\ref{lem:Uequiv} is
applicable to these models, the same estimation factor can be used
whether we estimate \(YX\) or \(Y \Exp(X|\Past)\).

The goal is to construct estimation factors for the model \(\cM_{\cF}\).
According to Lem.~\ref{lem:ucond_efineq}, it
suffices to consider a generic trial with distribution \(\nu\) of
\(X\) and \(Y\) constrained in the trial model \(\cF\) and to
  construct the corresponding estimation factors. We can generalize the
construction in the main text. As a function of \(X\) and \(Y\), we
can write the estimation factor in the form \(T=(1-Y)T'(X)+ Yt\).
By considering distributions in \(\cF\) where \(X\) is deterministic, 
the estimation-factor inequality in Eq.~\eqref{eq:uncond_efineq} 
implies that for all \(x\geq 0\)
\begin{align}
  1 &\geq
      \max_{\tilde\omega\in[\omega',\omega]}
      \qty(1-\tilde\omega) t e^{-\beta x} + \tilde\omega T'(x) \nonumber\\
    & = \max\qty((1-\omega')te^{-\beta x} + \omega' T'(x),
      (1-\omega) t e^{-\beta x} + \omega T'(x)).
\end{align}
This inequality requires that
\begin{align}
  T'(x) &\leq \min\qty(\qty(1-(1-\omega')te^{-\beta x})/\omega',
          \qty(1-(1-\omega)te^{-\beta x})/\omega).
\end{align}
Furthermore, the non-negativity of estimation factors imposes the 
condition  \(0\leq t\leq \min(1/(1-\omega),1/(1-\omega'))=1/(1-\omega')\).
As a consequence, we can generalize the definition of \(F'_{\beta,t}(x)\) 
in the main text by setting
\begin{align}
  F'_{\beta,t}(x)&=\min\qty(\qty(1-(1-\omega')te^{-\beta x})/\omega',
                   \qty(1-(1-\omega)te^{-\beta x})/\omega).
\end{align}
This is the minimum of two concave functions and therefore concave.
As in the main text, in view of the concavity of \(F'_{\beta,t}(x)\), 
we conclude that the non-negative function \(T=(1-Y)F'_{\beta,t}(X)+Y t\) is
an extremal estimation factor for estimating \(YU\) with \(U=X\) or \(U=\Exp(X|\Past)\).
This reduces the task of optimizing estimation factors in
the presence of trial bias to a two-parameter optimization problem.

Instead of direct optimization over extremal estimation factors, one
can modify an estimation factor \(T_{0}\) optimized for
\(\tilde\omega=\omega\), that is, for the trial model
\(\cF_{\omega}\).  To obtain an estimation factor for \(\cF\) from
\(T_{0}\), it suffices to let \(T=T_{0}/f_{\max}\), where \(f_{\max}\)
is an upper bound on the maximum expectation of \(T_{0}\) for
distributions in \(\cF\).  The estimation factor \(T_{0}\) is of the
form
\begin{align}
  T_{0}&=(1-Y)\qty(1-(1-\omega)te^{-\beta X})/\omega + Yt.
\end{align}
To determine a suitable value for \(f_{\max}\), consider a
distribution \(\nu'\) in \(\cF\) with
\(\Prob_{\nu'}(Y=0)=\omega-\delta\) and bound the corresponding
expectation of \(T_{0}\).  Let \(\nu\) be the distribution in
\(\cF_{\omega}\) with the same marginal on \(X\). Then,
\(\Exp_{\nu}(T_{0})\leq 1\) and
\begin{align}
  \Exp_{\nu'}(T_{0})
  &= (\omega-\delta)\qty(1-(1-\omega)t\Exp_{\nu'}\qty(e^{-\beta X}))/\omega
    +(1-\omega+\delta) t
    \notag \\
  &= (\omega-\delta)\qty(1-(1-\omega)t\Exp_{\nu}\qty(e^{-\beta X}))/\omega
    +(1-\omega+\delta) t
    \notag \\
  &=
    \Exp_{\nu}(T_{0})
    -\delta\qty(1-(1-\omega)t\Exp_{\nu}\qty(e^{-\beta X}))/\omega
    +\delta t
    \notag\\
  &\leq
    1-\delta\qty(1-(1-\omega)t\Exp_{\nu}\qty(e^{-\beta b}))/\omega
    +\delta t
    \notag\\
  &=
    1+\delta (t-(1-(1-\omega)t)/\omega)
    \notag\\
  &= 1+\delta(t-1)/\omega.
    \label{eq:ef_vio}
\end{align}
Since \(\delta \in [0, \delta_{\max}]\), we can set \(f_{\max}=\max(1+\delta_{\max}(t-1)/\omega,1)\) and
\(T=T_{0}/f_{\max}\) to obtain an estimation factor for \(\cF\). We expect this estimation factor to
perform well for small bias \(\delta_{\max}\).

\subsection{Spot-checking with more than two choices}
\label{sect:general_spotcheck}

In the entanglement-based BB84 quantum key distribution (QKD) protocol~\cite{BBM:1992}, during
each trial (or round), Alice and Bob independently and randomly
measure their own particles along either the key-generation
$\mathds{X}$-basis or the complementary $\mathds{Z}$-basis. If both
parties choose the $\mathds{X}$-basis, the trial is a key-generation
trial, and a raw key is shared between Alice and Bob. If both use the
$\mathds{Z}$-basis, the trial becomes a parameter-estimation trial,
assisting Alice and Bob in estimating the amount of private
information in their key-generation trials.  In the case that the
basis choices of Alice and Bob do not match in a trial, the standard
QKD implementation and security analysis discard the corresponding
trial, constituting the sifting step in a QKD protocol.  For more
details, see Box 1 in Ref.~\cite{Tomamichel2012}.  This can be 
interpreted as a spot-checking protocol in which the spot-checking 
RV \(Y\) has three possible values. 
Specifically, in the QKD protocol,  \(Y=0\) and the trial is spot-checked if
both parties measure in the \(\mathds{Z}\) basis, \(Y=1\) and the
trial is used for key generation if both parties measure 
in the \(\mathds{X}\) basis, and \(Y=2\) and the trial is useless
otherwise.  

For constructing $\EF$s, we examine a generic trial with RVs $X\geq b=0$ and $Y\in \{0,1,2\}$.  
Since we can observe the RV $X$ only when $Y=0$,  the $\EF$ $T(X,Y)$ 
must take the following form
\begin{equation}
T(X,Y)=
\begin{cases}
T'(X), &  \text{if } Y=0, \\
 t_1,  & \text{if } Y=1, \\
 t_2,  & \text{if } Y=2, 
 \label{eq:extended_ef_form}
\end{cases}
\end{equation}
where the function $T'(X)$, along with the parameters $t_1$ and $t_2$,
is required to be non-negative. Suppose that $Y$ takes values $0, 1$,
and $2$ with probabilities $\omega_0$, $\omega_1$, and $\omega_2$,
respectively, where $\omega_0>0$, $\omega_1>0$, and
$\omega_0+\omega_1+ \omega_2=1$.  We assume that $\omega_0$, 
$\omega_1$, and $\omega_2$ are fixed.  If they
  vary within given bounds, the techniques of
  Sect.~\ref{sect:EFs_with_bias} can be applied to construct estimation
  factors.  Under the free-choice assumption,
the estimation-factor inequality in Eq.~\eqref{eq:uncond_efineq}
for allowed distributions \(\nu\) of \(X\) and \(Y\) simplifies to
\begin{equation} \label{eq:extended_ef_simp}
\omega_1 t_1\Exp_{\nu}(e^{-\beta U}) +\omega_2 t_2+ \omega_0  \Exp_{\nu}\big( T'(X) \big) \leq 1.
\end{equation}
Here, \(U\) can be either \(X\) or \(\Exp(X|\Past)\).  The observations presented 
in the main text still apply, and the extremal $\EF$s must be of the form
\begin{equation}
T(X,Y) =
\begin{cases}
(1-\omega_1 t_1e^{-\beta X} - \omega_2 t_2)/\omega_0, &  \text{if } Y=0, \\ 
 t_1,  & \text{if } Y=1, \\
 t_2,  & \text{if } Y=2. 
 \label{eq:extended_opt_ef_form}
\end{cases}
\end{equation}
To ensure that the estimation factor is non-negative, the parameters $t_1$ and $t_2$ are constrained 
such that $t_1, t_2 \geq 0$ and $1-\omega_1t_1  - \omega_2 t_2\geq 0$. 
When \(Y=2\), the RV \(X\) is not observed nor is the associated $U$
  relevant. This motivates the specific choice \(t_{2}=1\).  With
this choice, the $\EF$s in Eq.~\eqref{eq:extended_opt_ef_form} 
are characterized by two parameters $\beta>0$ and
\(t_{1}\in [0, (1-\omega_{2})/\omega_{1}] = [0,1+\omega_{0}/\omega_{1}]\).
We can rewrite the upper bound  on \(t_{1}\) as
\(1+\omega_{0}/\omega_{1} =1/\qty(1-\omega_0/(\omega_{0}+\omega_{1}))\). 
If we define \(\omega=\omega_0/(\omega_{0}+\omega_{1})\) and \(t=t_{1}\), then
\(T(X,Y)\) restricted to the cases where \(Y=0\) or \(1\) has the same form as the
extremal estimation factors in Eq.~\eqref{eq:t_maximal}. Since the case
\(Y=2\) contributes \(\ln(t_{2})=0\) to the confidence bound,
these $\EF$s can be constructed using the techniques 
presented in Sects.~\ref{sect:numerical_opt} and~\ref{sect:analytical_opt}.

\subsection{Spot-checking with early stopping}
\label{sect:early_stop}

In the main text, we showed that it is possible to stop the
experiment early once a desired number \(m\) of unchecked trials
has been obtained.  Formally, this is achieved by setting the 
spot-checking probabilities and estimation factors for all future trials to \(1\),
and choosing the maximum number of trials \(n\) to be performed to be
sufficiently large. For this purpose, it is in principle possible
to choose \(n=\infty\), in which case the experiment stops with exactly 
\(m\) unchecked trials with probability \(1\).  In practice, however, there is always
an effective upper limit on the maximum number of trials \(n\) that can be
executed. In other words, at some point the experimenter gives up
and stops the experiment, declaring failure if fewer than \(m\) unchecked trials have been observed.
When the spot-checking probability in each executed trial is 
\(\omega\), the probability of failure can be made arbitrarily small
by choosing \(n\) sufficiently large compared to
\(m/(1-\omega)\).  For example, to ensure that the probability of
failure is at most \(e^{-\gamma}\) for some large \(\gamma\), we
can use the Hoeffding inequality~\cite{hoeffding1963} to show that
it suffices to choose \(n\) large enough such that
\(e^{-\gamma} \geq \exp(-2(n-n\omega-m)^2/n)\). Solving for \(n\),
for spot-checking with early stopping, we set 
\begin{equation} \label{eq:n_lb_earlystop}
  n\geq \frac{m}{(1-\omega)} \times \qty(1+\frac{\gamma+\sqrt{\gamma^2+8m(1-\omega)\gamma}}{4m(1-\omega)}).
\end{equation}   

In the presence of possible failures, the coverage property of
confidence bounds \(S_{\lb}\), as described in
Sect.~\ref{sect:EF_method}, needs to be modified.  Let
$C_n=\sum_{i=1}^{n} Y_i$ be the number of unchecked trials after
executing \(n\) trials. Then the early stopping protocol as
implemented above succeeds if \(C_{n}\geq m\).  By
Thm.~\ref{thm:ef_confbound} and the fact that
\(\Prob(C_{n}\geq m, S_{U}< S_{\lb})\leq \Prob(S_{U}< S_{\lb})\), we
have that the probability of succeeding but not covering satisfies
\(\Prob(C_{n}\geq m, S_{U}< S_{\lb})\leq \epsilon\).  In general, this
does not directly imply the coverage property conditional on
success. This is because even moderate probabilities of succeeding can
often be adversarially exploited to force worse conditional coverage.
However,  for spot-checking with early stopping, the coverage probability 
conditional on success satisfies
  \begin{align}
    \Prob(S_{U}\geq S_{\lb}|C_{n}\geq m)
    &=1-\Prob(S_{U}< S_{\lb}|C_{n}\geq m)
      \notag\\
    &=1-\Prob(S_{U}< S_{\lb},C_{n}\geq m)/\Prob(C_{n}\geq m)
      \notag\\
    &\geq 1-\Prob(S_{U}<S_{\lb},C_{n}\geq m)/(1-e^{-\gamma})
      \notag\\
    &\geq 1-\Prob(S_{U}<S_{\lb})/(1-e^{-\gamma})
      \notag\\
     &\geq 1-\epsilon/(1-e^{-\gamma}).
  \end{align}
  Spot-checking with early stopping therefore retains good coverage
  probability conditional on success for moderately large \(\gamma\),
  with little overhead in the required number of trials \(n\).

\section{Comparison to other protocols}
\label{sect:comparison}

In this section, we review two existing approaches to spot-checking
and compare them with the estimation-factor method. First, in
Sect.~\ref{sect:Gocanin2022}, we describe and clarify the
spot-checking approach introduced in Ref.~\cite{Gocanin2022}.  Next,
in Sect.~\ref{sect:serfling}, we discuss an alternative approach based
on the Serfling inequality. Finally, in Sect.~\ref{sect:comps}, we
provide the details of the comparison among  these methods,
as summarized in the two figures presented in the main text.
Additional comparison results are given in Sect.~\ref{sect:comps}. 

\subsection{Spot-checking according to Ref.~\cite{Gocanin2022}}
\label{sect:Gocanin2022}

In the scenario investigated in Ref.~\cite{Gocanin2022}, each trial
\(i\) involves a binary 0/1-valued RV \(X_{i}\) and a spot-checking RV
\(Y_{i}\) as considered here, satisfying the free-choice
assumption. Below we slightly generalize the scenario by allowing \(X_{i}\)
to take values in an arbitrary two-element set
\(\{x_{\lb},x_{\ub}\}\subset \rls \) with \(x_{\lb}<x_{\ub}\).  The
results of Ref.~\cite{Gocanin2022} assume that the trials are
independent and that the spot-checking probability is fixed to be
\(\omega\) for each trial.  Moreover, in each trial \(i\), the mean of
\(X_{i}\) is assumed to be a fixed but unknown parameter
\(\theta_{i}\). Slightly more generally, the \(\theta_{i}\) can be
random but must satisfy that they are determined by \(\Past_{1}\), the
past of the entire experiment.  We remark that Ref.~\cite{Gocanin2022}
claims more generality, but the independence requirement in the
paragraph containing Eqs. (24) and (25) therein implies the
independence stated above.

To describe the main result of Ref.~\cite{Gocanin2022}, we take the
means \(\theta_{i}\) as fixed but unknown.  The goal is to obtain a
lower confidence bound on \(S_{n}=\sum_{i=1}^{n}Y_{i}\theta_{i}\)
based on the observed values of \(\sum_{i=1}^{n}(1-Y_{i})X_{i}\) and
\(\sum_{i=1}^{n} Y_i\). For this purpose, Ref.~\cite{Gocanin2022}
constructs tests of the null hypotheses
\(Q_{n}=\sum_{i=1}^{n} \theta_{i} \leq
n\bar{\theta}_{\nll}\vphantom{\Big|}\) for given values of
  \(\bar\theta_{\nll}\).    Inverting these tests yields lower
confidence bounds for \(Q_{n}\), which are then converted to lower
confidence bounds for \(S_{n}\). Below, we detail each step of this
approach and justify the reasoning.  Unlike the confidence bounds
obtained with estimation factors, the confidence bounds on \(S_{n}\)
obtained this way do not hold for
\(S'_{n}=\sum_{i=1}^{n}Y_{i}X_{i}\), even under the assumption of
  independent trials.

Consider values \(\bar \theta_{\nll}\) and \( \bar{\theta}_{\obs}\)
satisfying
\(x_{\lb}\leq \bar\theta_{\nll}< \bar{\theta}_{\obs}\leq x_{\ub}\).
Define \(p_{\obs}=(\bar{\theta}_{\obs}-x_{\lb})/(x_{\ub}-x_{\lb})\)
and
\(p_{\nll}=(\bar{\theta}_{\nll}-x_{\lb})/(x_{\ub}-x_{\lb})\). These
are the probabilities of observing \(X_{i}=x_{\ub}\) if the mean of
\(X_{i}\) is \(\bar{\theta}_{\obs}\) and \(\bar{\theta}_{\nll}\),
respectively.  Under the hypothesis that
\(Q_{n} \leq n \bar{\theta}_{\nll}\), according to Eq.~(26) of
Ref.~\cite{Gocanin2022} we have
\begin{align} \Prob\qty(\sum_{i=1}^{n} (1-Y_i) X_i \geq
  \sum_{i=1}^{n} (1-Y_i) \bar{\theta}_{\obs}) \leq \qty(1-\omega+\omega
  e^{-D_{\KL}(p_{\obs}|p_{\nll})})^n, \label{eq:prxq_main}
\end{align} where 
\(D_{\KL}(p_{\obs}|p_{\nll})=p_{\obs}
\ln(p_{\obs}/p_{\nll})+(1-p_{\obs})\ln\big((1-p_{\obs})/(1-p_{\nll})\big)\)
is the Kullback–Leibler (KL) divergence. 
For \(\bar{\theta}_{\obs}>x_{\ub}\), the probability on the left-hand 
side of the above inequality  is \(0\).
The derivation of this probability bound in
Ref.~\cite{Gocanin2022} requires the assumption of independent
trials.

From Eq.~\eqref{eq:prxq_main}, given \(\bar{\theta}_{\nll} <x_{\ub}\),
we derive a hypothesis test \(H(\bar{\theta}_{\nll},\epsilon)\) for
the null hypothesis \(Q_{n}\leq n\bar{\theta}_{\nll}\) at significance
level \(\epsilon \in \qty(0,1)\).  We choose a threshold
\(\bar{\theta}_{\obs}\) depending on \(\bar{\theta}_{\nll}\) such that
the test rejects the null hypothesis if
\(\sum_{i=1}^{n}(1-Y_{i})X_{i} >
\sum_{i=1}^{n}(1-Y_{i})\bar{\theta}_{\obs}\).  For this, we aim to
solve the equation
\(\epsilon=P(\bar{\theta}_{\obs}) \coloneqq \qty(1-\omega+\omega
e^{-D_{\KL}(p_{\obs}|p_{\nll})})^n\) for \(\bar{\theta}_{\obs}\) with
\(\bar{\theta}_{\obs}>\bar\theta_{\nll}\), provided a solution exists.
We observe that \(D_{\KL}(p_{\obs}|p_{\nll})\) is a continuous and
monotonically increasing function of \(p_{\obs}\) when
\(p_{\obs}\geq p_{\nll}\).  Specifically, we have
\(D_{\KL}(p_{\nll}|p_{\nll})=0\), so that 
\(P(\bar{\theta}_{\nll})=1\), and
\(D_{\KL}(1|p_{\nll}) = -\ln(p_{\nll})\), so that 
  \(P(x_{\ub})=(1-\omega+\omega p_{\nll})^n\).  In addition, we have
that both \(p_{\obs}\) and \(p_{\nll}\) are linear with positive slope
in \(\bar{\theta}_{\obs}\) and \(\bar{\theta}_{\nll}\), respectively,
which implies that \(P(\bar{\theta}_{\obs})\) is monotonically
decreasing.  Thus, if \(\epsilon\geq P(x_{\ub})\), a unique solution
for \(\bar{\theta}_{\obs}\) exists.  Conversely, if
\(\epsilon < P(x_{\ub})\), we set \(\bar\theta_{\obs}=x_{\ub}\). In
this case, the null hypothesis is never rejected.  Moreover,
since \(D_{\KL}(p_{\obs}|p_{\nll})\) is monotonically decreasing 
and continuous  in \(p_{\nll}\in(0, p_{\obs}] \) for fixed \(p_{\obs}\), the
relationship between \(\bar{\theta}_{\nll}\) and the corresponding
threshold \(\bar{\theta}_{\obs}\) in this construction is monotonic
and continuous; that is, \(\bar{\theta}_{\obs}\) is a non-decreasing
and continuous function of \(\bar{\theta}_{\nll}\).

The hypothesis test described in the previous paragraph can be
inverted to obtain a lower confidence bound on \(Q_{n}\) according to
the standard hypothesis-test inversion procedure (see Sect.~7.1.2 of
the textbook~\cite{Shao2003}). Given the observed values of the
RVs and a significance level \(\epsilon\), the lower
confidence bound is given by the infimum \(n\theta_{\lb}\) over all
values \(n\bar{\theta}_{\nll}\) for which the hypothesis test
\(H(\bar{\theta}_{\nll},\epsilon)\) does not reject the null
hypothesis \(Q_{n}\leq n\bar{\theta}_{\nll}\). The confidence
bound \(n\theta_{\lb}\) is an RV through its dependence on the
instantiated RVs. Because of the monotonic and continuous relationship
mentioned at the end of the previous paragraph, \(\theta_{\lb}\) can
be obtained as follows: First, we set
\(\bar{\theta}_{\obs}= \sum_{i=1}^{n}(1-y_{i})
x_{i}/\sum_{i=1}^{n}(1-y_{i})\) if \(\sum_{i}(1-y_{i})\neq 0\), and
set \(\bar{\theta}_{\obs}=x_{\lb}\) otherwise.  We then set the
  value of \(\theta_{\lb}\in [x_{\lb},\bar{\theta}_{\obs}]\) as
  follows:  If there is a value of \(\bar\theta_{\nll}\) that
  solves the equation
  \(\epsilon=\qty(1-\omega+\omega
  e^{-D_{\KL}(p_{\obs}|p_{\nll})})^n\), we set \(\theta_{\lb}\) to
  this value.  A solution exists if \(\bar{\theta}_{\obs}>x_{\lb}\) and
\(\epsilon\geq(1-\omega)^{n}\).  Otherwise, we set 
  \(\theta_{\lb}=x_{\lb}\).  To solve the equation in the
  first case, we rewrite it as 
\begin{align}
  -\ln(1-\omega\qty(1-e^{-D_{\KL}(p_{\obs}|p_{\nll})})) &= \ln(1/\epsilon)/n.
                                                          \label{eq:goc_solve}
\end{align}
In this identity, \(p_{\obs}\) depends 
linearly on \(\bar\theta_{\obs}\), and we solve for \(p_{\nll}\), from
which the sought-for solution for \(\bar{\theta}_{\nll}\) can be
recovered via a linear transformation. With this
strategy, \(\theta_{\lb}\) satisfies 
\begin{equation} \label{eq:thetalb_sol}
  \theta_{\lb}=f(p_{\obs})(x_{\ub}-x_{\lb}) + x_{\lb},
\end{equation}
where \(f(p) \coloneqq \inf\qty{q | p \geq q \geq 0,  (1-\omega+\omega e^{-D_{\KL}(p|q)})^{n}\geq\epsilon}\).

The confidence bound \(n\theta_{\lb}\) on \(Q_{n}\) can be converted
to a lower confidence bound on \(S_{n}\).  To obtain a tighter lower
confidence bound on \(S_{n}\), we additionally consider  a known 
upper bound \(\theta_{\max}\) on all \(\theta_{i}\), where
\(x_{\lb}\leq \theta_{\max}\leq x_{\ub}\).  If
\(\theta_{\max}=x_{\ub}\), this upper bound adds no new 
information and becomes trivial.  A
non-trivial such upper bound is used in the comparison in
Sect.~\ref{sect:comps}. We can now apply the inequality
\begin{align}
  Q_{n}&=
         \sum_{i=1}^{n} \theta_i\notag\\
       &\leq \sum_{i=1}^{n} Y_i\theta_i + \sum_{i=1}^{n} (1-Y_i) \theta_{\max} \notag \\
       & =S_{n}+ (n-C_{n})\theta_{\max}, \label{eq:prxq_bound}
\end{align}
where \(C_n=\sum_{i=1}^{n} Y_i\).  This inequality corresponds to 
Eqs. (28) and (29) of Ref.~\cite{Gocanin2022}.  As a consequence,
\begin{align}
  S_{\lb,\rm{G}}&=n\theta_{\lb}- (n-C_{n})\theta_{\max}
                  \label{eq:goc_lb}
\end{align}
is a lower confidence bound on \(S_{n}\) with confidence level
\((1-\epsilon)\).  Since \(C_{n}\) is observed, the confidence
bound \(S_{\lb,\rm{G}}\) is fully determined by the experimental
observations.  The inequality in Eq.~\eqref{eq:prxq_bound}
suggests that the performance of this confidence bound depends on
the value of the spot-checking probability \(\omega\): a higher
\(\omega\) implies a typically larger gap between \(n\) and \(C_{n}\), leading
to a more conservative bound. Conversely, when \(\omega\) is close
to \(0\), \(C_{n}\) is typically close to \(n\), tightening the bound.

To evaluate the finite-data efficiency of the method in
Ref.~\cite{Gocanin2022}, it is necessary to estimate the expectation
of \(S_{\lb,\rm{G}}\), which is given by
\(\Exp(S_{\lb,\rm{G}}) = n\qty(\Exp(\theta_{\lb})-
\omega\theta_{\max})\).  We now treat variables such as
\(\theta_{\lb}\), \(\bar\theta_{\obs}\) and \(p_{\obs}\) defined in
the previous paragraphs as random variables depending on the
experimental observations.  For comparison purposes, we consider the
case where \(\theta_{i}=\theta\) for all \(i\). For binary RVs
\(X_i\), this implies that the trials are not just independent but
also identical.  Let
\(p_{\theta}=(\theta-x_{\lb})/(x_{\ub}-x_{\lb})\).  Computing
\(\Exp(\theta_{\lb})\) is complicated by the fact that
\(\theta_{\lb}\) is a non-linear function of the observed quantity
\(p_{\obs}\) (or equivalently, of \(\bar{\theta}_{\obs}\)),
parametrized by \(\ln(1/\epsilon)\) and \(n\). For the present
purposes, it suffices to determine \(\Exp(\theta_{\lb})\) to within
order \(O(1/n)\).  We claim that to this order, \(\Exp(\theta_{\lb})\)
can be obtained by replacing \(\bar{\theta}_{\obs}\) and \(p_{\obs}\)
with \(\theta\) and \(p_{\theta}\) in the computation of
\(\theta_{\lb}\).  Except for the bias resulting from the case where
no trials are spot-checked, \(\theta\) and \(p_{\theta}\) are the
expectations of \(\bar\theta_{\obs}\) and \(p_{\obs}\),
respectively.  We first
consider the boundary cases.  If \(\epsilon<(1-\omega)^{n}\), then
regardless of values of \(\bar{\theta}_{\obs}\) and \(p_{\obs}\), we
have \(\theta_{\lb}=x_{\lb}\), and the claim is satisfied.  Hence,
  we assume from now on that \(\epsilon\geq (1-\omega)^{n}\). If
  \(\theta=x_{\lb}\), then \(p_{\obs}=0\) and \(\bar\theta_{\obs}=x_{\lb}\) 
  with probability \(1\), and therefore
\(\Exp(\theta_{\lb})=x_{\lb}\), consistent with the claim.
If \(\theta=x_{\ub}\), then with probability
\(1-(1-\omega)^{n}=1-e^{-O(n)}\), at least one trial is
  spot-checked and so \(\bar\theta_{\obs}=x_{\ub}\), which makes
\(p_{\obs}= p_{\theta}=1\), again consistent with the
claim.  For the case where \(x_{\lb}<\theta<x_{\ub}\) and
\(\epsilon\geq (1-\omega)^{n}\), the probability that
\(\bar\theta_{\obs}=x_{\lb}\) is bounded by the sum of the probability
\( (1-\omega)^n\) that no trial is spot-checked and
the probability that the outcomes of all spot-checked trials are
\(x_{\lb}\). This probability is \(\sum_{i=0}^{n}\binom{n}{i} (1-p_{\theta})^{i}\omega^{i}(1-\omega)^{n-i}=(1-\omega p_{\theta})^{n}\).
Since in this case \(\omega p_{\theta}<1\), the probability that
\(\bar\theta_{\obs}=x_{\lb}\) is exponentially small in \(n\) and
contributes negligibly to the expectation. We therefore neglect this
contribution for the remainder of the argument when calculating
expectations and variances.   To complete the proof of the claim, in
view of Eq.~\eqref{eq:thetalb_sol}, it suffices to prove that
\begin{equation}\label{eq:fp_exp}
  \Exp(f(p_{\obs})) = f(p_{\theta}) + O(1/n)
\end{equation}
for \(p_{\theta}\in (0,1)\).  If \(p>0\), then \(f(p)\) is the unique
solution \(q\) to the equation
\(\epsilon=(1-\omega+\omega e^{-D_{\KL}(p|q)})^{n}\) in the interval
\(0\leq q\leq p\).  This means that the graph of \(f\) consisting of
the points \((p,f(p))\) is a subset of the level curve of
\(D_{\KL}(p|q)\) consisting of the points \((p,q)\) where
\(D_{\KL}(p|q)=d\) with
\(d=-\ln((\epsilon^{1/n}-(1-\omega))/\omega)\).  The function
\(D_{\KL}(p|q)\) is jointly convex in its two arguments. It follows
that the region \(\{(p,q) \mid D_{\KL}(p|q) \leq d \}\) is a convex
set in the \((p,q)\)-plane. Since \(D_{\KL}(p|q)=0\) iff \(q=p\), this
convex set contains the diagonal line \(q=p\). The graph of \(f\)
forms the lower boundary of this set, which is below the diagonal
line.  Geometrically, the lower boundary of a convex set defines a
convex function.  Thus \(f(p)\) is convex.  Moreover, since
\(D_{\KL}(p|q)\) is smooth for \(0<p,q<1\) and by the implicit
  function theorem, \(f(p)\) is continuously second differentiable
for \(0<p<1\) and continuous at the boundaries. We next exploit the approximation
\(D_{\KL}(p|p-\delta)=\frac{\delta^{2}}{2p(1-p)}\qty(1 + O(\delta))\),
where the suppressed constant depends on \(p\) and can be estimated by
means of Eq.~(85) of Ref.~\cite{wills2020performance}.  Let
\(\delta(p)=p-f(p)\). Then \(\delta(p)\) is concave.
Eq.~\eqref{eq:fp_exp} is equivalent to the claim that
\(\Exp(\delta(p_{\obs})) = \delta(p_{\theta}) + O(1/n)\).  From the
given approximation of the KL divergence and the identity
\begin{align}
  D_{\KL}(p|p-\delta(p))
  &=-\ln((\epsilon^{1/n}-(1-\omega))/\omega)
    \notag\\
  &=-\ln(1-(1-\epsilon^{1/n})/\omega)
    \notag\\
  &= -\ln(1-(1-e^{\ln(\epsilon)/n})/\omega)
    \notag\\
  &= -\ln(1+\ln(\epsilon)/(n\omega)(1+O(1/n)))
    \notag\\
  &=\ln(1/\epsilon)/(n\omega)(1 + O(1/n)),
\end{align}
we can deduce that
\(\delta(p)^2 =
\frac{2p(1-p)\ln(1/\epsilon)}{n\omega}(1+O(1/n^{1/2}))\). Consequently
\(\delta(p_{\theta})\) is proportional to \(1/\sqrt{n}\) to lowest
order. Finally, we show that the difference between
\(\Exp(\delta(p_{\obs}))\) and \(\delta(p_{\theta})\) is at the next
order, \(O(1/n)\). By concavity of \(\delta(p)\),
\(\Exp(\delta(p_{\obs})) \leq \delta(p_{\theta})\).  Since
\(\delta(p)\) is continuously second differentiable, concave, and
non-negative, there exists a positive constant \(c\) such that
\(\delta(p_{\theta}+u)\geq \delta(p_{\theta})+ \delta'(p_{\theta})u -c
u^{2}\) for \(p_{\theta}+u\in [0,1]\).  Substituting \(p_{\obs}-p_{\theta}\)
 for \(u\) and taking the expectation of both sides gives
\(\Exp(\delta(p_{\obs}))\geq \delta(p_{\theta}) - c\Var(p_{\obs})\).
We calculate \(\Var(p_{\obs})\) by applying the law of total variance with respect
to the number of spotchecked trials \(I=n-C_n\):
  \begin{align}
  \Var(p_{\obs})
  &= \sum_{i=0}^{n}\Prob(I=i)\Var(p_{\obs}|I=i)
         +\Var(\Exp(p_{\obs}|I)).
  \end{align}
  For \(i>0\), \(\Exp(p_{\obs}|I=i)=p_{\theta}\), and for \(i=0\),
  \(\Exp(p_{\obs}|I=0)=0\).  Because \(\Prob(I=0)\) is exponentially
  small, so is \(\Var(\Exp(p_{\obs}|I))\). We have
  \(\Var(p_{\obs}|I=0)=0\) and for \(i>0\),
  \(\Var(p_{\obs}|I=i) = p_{\theta}(1-p_{\theta})/i\), because
  \(p_{\obs}\) is the sample mean of \(i\) independent Bernoulli trials with
  mean \(p_{\theta}\) given each set of \(i\) spot-checked trials. Since
  \(\Prob(I=i)= \binom{n}{i}\omega^{i}(1-\omega)^{n-i}\), we obtain,
  up to an exponentially small correction
  \begin{align}
  \Var(p_{\obs})
  &= \sum_{i=1}^{n} \omega^{i} (1-\omega)^{n-i} \binom{n}{i}p_{\theta}(1-p_{\theta})/i
    \notag\\
  &\leq
    \frac{2}{\omega (n+1)}
    p_{\theta}(1-p_{\theta})\sum_{i=1}^{n+1}\omega^{i+1} (1-\omega)^{n-i} \binom{n+1}{i+1}
    \notag\\
  &\leq \frac{2}{\omega (n+1)}
    p_{\theta}(1-p_{\theta}).
\end{align}
The claim follows.  

As with estimation-factor-based confidence bounds, one can determine
the minimum number of i.i.d. trials required to achieve, on average, a
desired gap between \(S_n\) and \(S_{\lb,\rm{G}}\), given a target
distribution that governs each trial and a specific significance level
\(\epsilon\).  Here we describe, justify, and generalize the solution given in
 Ref.~\cite{Gocanin2022} by extending the arguments leading to Eq.~(27) 
 in that reference. Let \(\theta\) be the mean of \(X\) in a generic
trial according to the target distribution, and let
\(n(1-\omega)\delta_{\thresh}\) be the desired gap.  The expected gap
between \(S_n\) and \(S_{\lb,\rm{G}}\) can be computed to good
approximation as follows.  First, when solving for
  \(\theta_{\lb}\), we replace \(\bar\theta_{\obs}\) and \(p_{\obs}\)
  with \(\theta\) and \(p_{\theta}\), which are their expectations up
  to exponentially small neglected terms. This yields an
approximation of \(\Exp(\theta_{\lb})\) accurate to order \(O(1/n)\),
as justified above.  Specifically, according to
Eq.~\eqref{eq:thetalb_sol}, we set
\(\Exp(\theta_{\lb})=f(p_{\theta})(x_{\ub}-x_{\lb}) + x_{\lb}\).
Second, in view of Eq.~\eqref{eq:goc_lb} and the fact that
\(\Exp(C_n)=n(1-\omega)\), we get the expected lower confidence bound
\begin{equation}
  \Exp(S_{\lb,\rm{G}})=
  n\Exp(\theta_{\lb})-n\omega\theta_{\max}.
\end{equation}
Consequently, the expected gap is 
\begin{align}
  \Exp(S_n-S_{\lb,\rm{G}})&= \Exp(S_n)- \Exp(S_{\lb,\rm{G}})\notag \\
                          &= n(1-\omega)\theta-n\Exp(\theta_{\lb})+n\omega\theta_{\max}.
\end{align}
To ensure that the expected gap is less than or equal to the desired
gap, we require
\(\Exp(\theta_{\lb})\geq
(1-\omega)(\theta-\delta_{\thresh})+\omega\theta_{\max}\).
In terms of \(f(p_{\theta})\), the inequality becomes
  \begin{align}
    f(p_{\theta}) \geq ((1-\omega)(\theta-\delta_{\thresh})+\omega\theta_{\max} - x_{\lb})/(x_{\ub}-x_{\lb}).
  \end{align}
  The quantity \(f(p_{\theta})\) depends implicitly on \(n\).
  The minimum number of trials \(n\) required is determined by finding the
  minimum \(n\) for which \(f(p_{\theta})\) satisfies this inequality.

\subsection{The Serfling inequality}
\label{sect:serfling}

One approach to spot-checking is to randomly select, without
replacement, a \(k\)-element index subset
\(J\subseteq \{1,\ldots, n\}\) of the \(n\) trials for checking. The
average of the unobserved \(X_{i}\) for \(i\notin J\) can then be
estimated by means of the Serfling inequality~\cite{Serfling1974},
provided that the RVs \(X_{i}\) are bounded. Applying the inequality
requires assuming that the trials are as-if parallel. 
That is, it requires that the index subset \(J\) to be selected is conditionally
independent of the sequence of RVs \(\bm{X}=(X_{i})_{i=1}^{n}\) given
the pre-experiment past \(\Past_{1}\). Under this assumption, we show
below that one can obtain a lower confidence bound at significance level
\(\epsilon\) on \(S'_{n}=\sum_{i\notin J}X_{i}\).  However, with
random sampling without replacement, it is not possible to replace
\(X_{i}\) with \(\Exp(X_{i}|\Past_{i})\) and estimate the
corresponding sum.

To highlight the relationship to bounds previously obtained, we define
the spot-checking RVs as \(Y_{i}=\knuth{i\notin J}\) and set
\(\omega=k/n\). Suppose that for each trial \(i\) the RV \(X_i\) is
bounded with \(X_{i}\in[x_{\lb},x_{\ub}]\).  Then we have the bound
\begin{align}
  \Prob\qty(
  (1-\omega)\sum_{i=1}^{n} (1-Y_i) x_i \geq   \omega\sum_{i=1}^{n} Y_{i}x_{i}
  + \delta
  \Bigg|\bm{X}=\bm{x})
  &\leq e^{-2 \delta^{2}/\qty(n\omega (x_{\ub}-x_{\lb})^{2}(1-\omega+1/n))}.
    \label{eq:serfling1}
\end{align}
To prove the bound, we add \(\omega\sum_{i=1}^{n}(1-Y_{i})x_{i}\)
to both sides of the inequality in the event, thus rewriting the 
left-hand side as
\begin{align}
  \Prob\qty(
  \sum_{i=1}^{n} (1-Y_i) x_i \geq   \omega\sum_{i=1}^{n} x_{i}
  + \delta
  \Bigg|\bm{X}=\bm{x}).
\end{align}
The bound then follows from the Serfling
inequality~\cite{Serfling1974}, specifically Cor. 1.1 in the reference, 
with \(N\), \(n\), and \(t\) there replaced here by \(n\),  \(k=n\omega \), 
and \(\delta/k\), respectively. Since the bound on the right-hand side
  of Eq.~\eqref{eq:serfling1} does not depend on the specific conditioning 
  event \(\bm{X}=\bm{x}\), we can extend it to an arbitrary sequence of 
  RVs \(\bm{X}\), yielding
\begin{align}
  \Prob\qty(
  (1-\omega)\sum_{i=1}^{n} (1-Y_i) X_i \geq   \omega\sum_{i=1}^{n} Y_{i}X_{i}
  + \delta)
  &\leq e^{-2 \delta^{2}/\qty(n\omega (x_{\ub}-x_{\lb})^{2}(1-\omega+1/n))}.
    \label{eq:serfling2}
\end{align}  
This is similar to the bound obtained in Lem.~6 of
Ref.~\cite{Tomamichel2017}, but it applies without restricting the
\(X_i\) to be binary RVs.

Setting the right-hand side of Eq.~\eqref{eq:serfling2} equal to the
error bound \(\epsilon\), we solve for \(\delta\) to obtain
\begin{align}
  \delta&
          =\sqrt{\frac{n\omega(1-\omega+1/n)\ln(1/\epsilon)}{2}}(x_{\ub}-x_{\lb}).
\end{align}
As a result,  a lower confidence bound with confidence level \((1-\epsilon)\) for
\(\sum_{i}Y_{i}X_{i}\) is 
\begin{align}
  S_{\lb,\text{Serf}}
  & = \frac{1-\omega}{\omega}\sum_{i}(1-Y_{i}) X_{i}
    - \sqrt{\frac{n(1-\omega+1/n)\ln(1/\epsilon)}{2\omega }}(x_{\ub}-x_{\lb}).
    \label{eq:serflinglb}
\end{align}

For the case where the \(X_{i}\) are i.i.d., the expectation of
\(S_{\lb,\text{Serf}}\) satisfies
\begin{align}
  \Exp(S_{\lb,\text{Serf}})
  & = \frac{1-\omega}{\omega}\sum_{i}\Exp \qty((1-Y_{i}) X_{i})
    - \sqrt{\frac{n(1-\omega+1/n)\ln(1/\epsilon)}{2\omega }}(x_{\ub}-x_{\lb}) \notag \\
  &=  \frac{1-\omega}{\omega}\sum_{i}\Exp(1-Y_{i}) \Exp(X_{i})
    - \sqrt{\frac{n(1-\omega+1/n)\ln(1/\epsilon)}{2\omega }}(x_{\ub}-x_{\lb}) \notag \\
  &=  \Exp(S'_{n})
    - \sqrt{\frac{n(1-\omega+1/n)\ln(1/\epsilon)}{2\omega }}(x_{\ub}-x_{\lb}),     
    \label{eq:exp_serflinglb}
\end{align}  
where the second line follows from the independence between \(X_i\)
and \(Y_i\) given the past \(\Past_{1}\).  Therefore, the expected gap
between \(S'_{n}\) and \(S_{\lb,\text{Serf}}\) is precisely the
subtracted term on the right-hand side of
Eq.~\eqref{eq:exp_serflinglb}.  This gap may be compared to the one
given in Eq.~\eqref{eq:expgap} using estimation factors.  The latter
is smaller by a factor of \(2\sigma/(x_{\ub}-x_{\lb})\)
when \(n\) is large.  This factor is never greater than \(1\)
in view of the variance bound \(\sigma^2\leq (x_{\ub}-x_{\lb})^2/4\), and it achieves the
maximum value of \(1\) only when each \(X_i\) is binary and takes values
\(x_{\ub}\) and \(x_{\lb}\) with equal probability.  Moreover, to
ensure that the expected gap between \(S'_{n}\) and
\(S_{\lb,\text{Serf}}\) does not exceed a desired threshold
\(n(1-\omega)\delta_{\thresh}\), it suffices to require that the subtracted
term on the right-hand side of Eq.~\eqref{eq:exp_serflinglb} be at
most \(n(1-\omega)\delta_{\thresh}\). Accordingly, one can determine
the minimum number of i.i.d. trials required.

\subsection{ Explanations and calculations for the main-text figures and additional comparisons}
\label{sect:comps}

For the figures presented in the main text, we consider the situation
where two parties receive sequentially generated pairs of quantum
systems. The parties, \(\pA\) and \(\pB\), aim to estimate the
average, device-independent Bell-state extractability of the unchecked
pairs, where they randomly spot-check the
pairs by performing Bell tests on the pairs being
  spot-checked. The spot-checking RV is privately shared between the
  parties ahead of time.  Since Bell tests are
  destructive, performing these tests only on the spot-checked trials
  allows the parties to save the states from the remaining trials while
  still estimating their average extractability.  To this end, the parties
estimate the average violation of a Bell inequality in the unchecked
pairs, for a Bell inequality whose violation can be related to
extractability. For the figures, we specifically consider the Bell inequality
constructed by Clauser, Horne, Shimony and Holt in
1969~\cite{chsh:1969},  also known as the CHSH inequality.  To estimate
the violation of this inequality,  in every spot-checked trial, each
party randomly and independently selects a measurement setting
and applies it to their own system,
obtaining one of two possible outcomes.  Let \(\pC\) denote either
party.  For a generic trial, we use \(S_{\pC}\) to denote the setting
choice of party \(\pC\), where the choice is either \(S_{\pC}=1\) or
\(S_{\pC}=2\). For the figures, each setting is selected with equal
probability, that is, \(\Prob(S_{\pC}=1)=1/2\).  The party's outcome
is denoted by \(O_{\pC}\) and has values in \(\{-1,1\}\).  The
complete record of the trial is given by the RV
\(R=(O_{\pA},O_{\pB},S_{\pA},S_{\pB})\).  The correlation between
outcomes given the settings \(S_{\pA}=s_{\pA}\) and
\(S_{\pB}=s_{\pB}\) is
\(\text{Cor}(s_{\pA}, s_{\pB}) =
\Exp(O_{\pA}O_{\pB}|s_{\pA}s_{\pB})\).  The CHSH inequality can be
expressed in terms of the correlations as
\begin{equation} 
\text{Cor}(1, 1)+\text{Cor}(1, 2)+\text{Cor}(2, 1)-\text{Cor}(2, 2)\leq2. \label{eq:CHSH_ineq}
\end{equation}
 The inequality is satisfied by all local realistic distributions of \(R\).

The CHSH inequality as written above does not depend on the joint
distribution of measurement settings and holds for any distribution
with full support. However, for spot-checking and hypothesis 
testing, the specific distribution over settings matters. Given the 
trial record \(R=(O_{\pA},O_{\pB}, S_{\pA},S_{\pB})\), we define the function 
$I(R)=4(1-2\knuth{S_\text{A}=2} \knuth{S_\text{B}=2}) O_\text{A} O_\text{B}$.
As a function of the RV $R$ and for the uniform distribution of measurement settings, 
the CHSH inequality in Eq.~\eqref{eq:CHSH_ineq} can be written as 
\begin{equation}
\hat I \leq 2, \label{eq:proper_CHSH_ineq}
\end{equation}
where we defined \(\hat I=\Exp\qty(I(R))\). We refer to \(\hat I\) as
the CHSH value, where \(\hat{I}>2\) implies a violation of the CHSH inequality. 
The function $I(R)$ is a RV that takes only two possible values, $-4$ or $+4$, in each trial~\cite{Zhang2011}.

The value of \(\hat I\) for a given trial can be related to
extractability of an ideal Bell state from the pair of systems
shared in the trial. Bell-state extractability \(\Xi\) is defined as
the maximum fidelity with respect to a Bell state that can be
obtained by local isometries applied to each
system~\cite{supic:2020,kaniewski:2016}.  According to
Eq. (10) of Ref.~\cite{kaniewski:2016}, a lower bound on \(\Xi\)
based on \(\hat I\) is given by \(\check\Xi\) defined as
\begin{equation}
  \check \Xi  =
  \frac{1}{2}+\frac{\max(\hat{I}-I_{\thresh},0)}{2(2\sqrt{2}-I_{\thresh})} \label{eq:extract_bound}
\end{equation}
with $I_{\thresh}=(16+14\sqrt{2})/17\approx 2.1058$.  Here we
have extended Eq. (10) of Ref.~\cite{kaniewski:2016} to allow for every
value of \(\hat{I}\) permitted by quantum mechanics according to the
Tsirelson bounds \( -2\sqrt{2}\leq \hat{I}\leq 2\sqrt{2}\),
taking advantage of the fact that the extractability is always at
least \(1/2\).  The lower bound $\check \Xi $ is plotted as the
solid curve in Fig.~\ref{main-fig:tight_bound} of the main text.
To learn \(\check\Xi\) by spot-checking,
we use the binary RV  \(X\) defined as
$X=
\frac{1}{2}+\frac{I(R)-I_{\thresh}}{2(2\sqrt{2}-I_{\thresh})}$,
with values
$x_{\lb}=\frac{1}{2}-\frac{4+I_{\thresh}}{2(2\sqrt{2}-I_{\thresh})}$
or
$x_{\ub}=\frac{1}{2}+\frac{4-I_{\thresh}}{2(2\sqrt{2}-I_{\thresh})}$.
In terms of \(X\), \(\check\Xi\) can be expressed as
\begin{equation}
  \check\Xi=\max(\Exp(X),1/2).  \label{eq:extract_bound2}
\end{equation}

The above applies to a generic trial.  We now introduce trial indices
and account for potential dependence of a trial on its past. Let
\(\Xi_{i}\) be the past-conditional  extractability
of the pair shared in trial \(i\), and let
\(\Theta_i=\Exp(X_{i}|\Past_i)\).  According to
Eq.~\eqref{eq:extract_bound2}, we have
\(\Xi_{i}\geq \check\Xi_{i}=\max(\Theta_{i},1/2)\).  This relationship
makes it possible to convert a lower confidence bound
\(S_{\lb}\) on the sum \(\sum_{i}Y_{i}\Theta_{i}\) over unchecked
trials into confidence bounds on the sum of the past-conditional
extractabilities \(\sum_{i}Y_{i}\Xi_{i}\) over unchecked
trials.  As in the previous section, let \(C_{n}=\sum_{i}Y_{i}\).  Since
  \(\Prob\qty(\sum_{i}Y_{i}\Theta_{i}\geq S_{\lb})\geq
  (1-\epsilon)\), with probability at least \((1-\epsilon)\) we have
\begin{align}
  \max(S_{\lb}, C_{n}/2)
  &\leq \max(\sum_{i}Y_{i}\Theta_{i}, \sum_{i}Y_{i}/2)\notag\\
  &\leq \sum_{i}Y_{i}\max(\Theta_{i}, 1/2)\notag\\
  &\leq  \sum_{i}Y_{i}\Xi_{i}.
\end{align}
It follows that \(\max(S_{\lb}, C_{n}/2)\) can be interpreted as a lower confidence
bound on the sum \(\sum_{i}Y_{i}\Xi_{i}\) with confidence level \((1-\epsilon)\).  
The confidence bound \(S_{\lb}\) can be obtained
according to Eq.~\eqref{eq:conf_lb} for spot-checking with estimation
factors.

Since \(X_{i}\) is a binary RV, we can also apply the method of
Ref.~\cite{Gocanin2022} described in Sect.~\ref{sect:Gocanin2022}.
This method requires independent trials and the replacement of the
past-conditional quantities \(\Theta_i\) with
\(\theta_{i}=\Exp(X_{i})\). To compute a confidence bound on
\(\sum_{i}Y_{i}\theta_{i}\) according to Eq.~\eqref{eq:goc_lb}, we set
\(\theta_{\max}=\frac{1}{2}+\frac{I_{\max}-I_{\thresh}}{2(2\sqrt{2}-I_{\thresh})}\)
with \(I_{\max}=2\sqrt{2}\) in view of the Tsirelson
bound. Consequently, \(\theta_{\max}=1\).

Fig.~\ref{main-fig:tight_bound} of the main text compares
lower confidence bounds on the average extractability obtained by estimation-factor
spot-checking protocols with those obtained by the protocol of
Ref.~\cite{Gocanin2022} described in Sect.~\ref{sect:Gocanin2022} and
by the Serfling inequality given in Sect.~\ref{sect:serfling} for the
case where the actual trials are i.i.d.  This
corresponds to the typical situation used to establish the
completeness of the protocols.  In contrast to the latter two methods,
estimation-factor protocols are sound without any assumptions on trial
independence.  The soundness and applicability of the other two protocols are
discussed in their respective sections.  For the plot, we set the
total number of trials to \(n=10^{5}\), the confidence level to
\((1-\eps) = 99\SI{\%}\), and the spot-checking probability to
\(\omega=0.1\). We varied the CHSH value \(\hat I\) across the
range \([2,2\sqrt{2}]\). Because the RV \(X\) defined just above
Eq.~\eqref{eq:extract_bound2} is binary with possible values
\(x_{\lb}\) or \(x_{\ub}\), the CHSH value \(\hat I\) completely
determines its distribution.  For the estimation-factor protocols, to
account for the fact that \(x_{\lb}\) is not \(0\), we modified both
estimation-factor constructions and confidence-bound estimates
according to the standardization described right after the proof of
Lem.~\ref{lem:ucond_efineq}, as done explicitly in the main text.  We
performed \(N_{\text{rep}}=1000\) independent simulations of \(n\)
trials according to each chosen \(\hat I \in [2,2\sqrt{2}]\).
From each simulation, we obtained an instance \((\bm{x},\bm{y})\) of
\((\bm{X},\bm{Y})\).  For each instance, we determined four different
lower confidence bounds on \(\sum_{i}Y_{i}\Exp(X_{i}|\Past_{i})\) as
follows:
\begin{itemize}
\item \textbf{Calibrated estimation factors.}  The estimation factor
  \(T\) was constructed according to
  Protocol~\ref{prot:EF_construction} using the setup-phase
  alternative~\ref{prot:numerical}, and the confidence bound
  \(S_{\lb,1}\) was computed according to
  Protocol~\ref{prot:SC_analysis}.  The estimation factor was
  optimized numerically using \(n_c = 100\) calibration trials, where
  in each calibration trial we simulated and observed an
  instantiation of the RV \(X\) prior to the spot-checking phase.
  This added \(n_{c}\) simulated trials to each of the
  \(N_{\text{rep}}\) simulation runs.  We note that we also
  implemented the setup-phase alternative~\ref{prot:variance},
  where we analytically constructed an estimation factor using
  the mean and variance estimated from calibration trials.  The
  resulting confidence bound was found to be indistinguishable from
  the one above for the parameters we considered.  This suggests that
  in practice, analytical construction based on calibration trials
  can yield nearly optimal performance.
\item \textbf{Fixed estimation factors.}  
  The estimation factor was determined via Protocol~\ref{prot:EF_construction} 
  with the setup-phase alternative~\ref{prot:fixed}, and the corresponding 
  confidence bound \(S_{\lb,2}\) was obtained using Protocol~\ref{prot:SC_analysis}. 
  No calibration trials were required for this approach. 
\item \textbf{Method of Ref.~\cite{Gocanin2022}.}
  The confidence bound was computed as 
  \(S_{\lb,3} = n\theta_{\lb} - (n - C_n)\theta_{\max}\) (see Eq.~\eqref{eq:goc_lb}), 
  where \(\theta_{\max}=1\)  
   and \(\theta_{\lb}\) was derived via hypothesis-test inversion 
  as detailed in Sect.~\ref{sect:Gocanin2022} (see Eq.~\eqref{eq:thetalb_sol}).
\item \textbf{Serfling inequality.}  The confidence bound
  \(S_{\lb,4}\) corresponds to \(S_{\lb,\text{Serf}}\) as defined in
  Eq.~\eqref{eq:serflinglb}.  We emphasize that this bound is included
  for comparison only, as \(S_{\lb,\text{Serf}}\) provides a
  confidence bound on \(\sum_{i}Y_{i}X_{i}\), rather than on
  \(\sum_{i}Y_{i}\Exp(X_{i}|\Past_{i})\).
\end{itemize}
For each confidence bound \(S_{\lb,k}\), we computed the corresponding 
confidence bound \(\bar{\Xi}_{\lb,k}\) on the average extractability \(\bar \Xi\)
according to \(\bar \Xi_{\lb,k} = \max(S_{\lb,k}/C_{n}, 1/2)\) if
  \(C_{n}>0\) and \(\bar \Xi_{\lb,k}=1/2\) otherwise. In the latter
  case, there are no unchecked states, and we set the average of the
  empty set to the minimum extractability.
  Write \(\bar \Xi_{\lb,k}^{(i)}\) for the confidence bound obtained in the
\(i\)'th simulation run for \(i=1,\ldots, N_{\text{rep}}\). 
For each simulated CHSH value \(\hat I\), the
plotted point is the mean value \(\sum_{i}\bar \Xi_{\lb,k}^{(i)}/N_{\text{rep}}\).
For \(k=1\) (calibrated estimation factors) and \(k=3\) (according
to Ref.~\cite{Gocanin2022}) and at selected CHSH values, 
we also show violin plots to illustrate the histogram of the 
differences \(\bar \Xi_{\lb,1}^{(i)}-\bar \Xi_{\lb,3}^{(i)}\) with 
\(i=1,\ldots, N_{\text{rep}}\).

Fig.~\ref{main-fig:finite_efficiency} of the main text shows the
minimum number of trials required for the expected lower confidence
bound to exceed a specified threshold. For this, we assume a true CHSH
value of \(\hat{I}=2.7\), a value anticipated to be achievable in
the near future given the detection-loophole-free Bell tests reported
in Refs.~\cite{nadlinger:2022, zhang:2022}.  According to
Eqs.~\eqref{eq:extract_bound} and~\eqref{eq:extract_bound2}, the
corresponding true mean \(\Exp(X)\) is \(0.9111\). To ensure that the
expected lower confidence bound on \(\Exp(X)\) exceeds a desired
threshold, a minimum number of trials is required. For this analysis,
the spot-checking probability is again set to \(\omega=0.1\), and the
confidence level to \((1-\eps)=99\SI{\%}\). For estimation factors, a
general strategy to determine the minimum number of trials is
described in Sect.~\ref{sect:numerical_opt_nmin}.
Specifically, the minimum number of trials here was 
determined according to the optimization problem in
Eq.~\eqref{eq:min_exp_num} as formulated in
Sect.~\ref{sect:efficiency}, which yields results indistinguishable
from the general strategy of Sect.~\ref{sect:numerical_opt_nmin} for this study.
To determine the minimum number of trials required by the method of
Ref.~\cite{Gocanin2022} and the Serfling inequality, we used the
strategies described in the concluding paragraphs of
Sect.~\ref{sect:Gocanin2022} and Sect.~\ref{sect:serfling},
respectively.

Figs.~\ref{main-fig:tight_bound}  and~\ref{main-fig:finite_efficiency} of the main text 
demonstrate the advantage of estimation factors over the method of Ref.~\cite{Gocanin2022}. 
This advantage becomes more pronounced as the spot-checking probability \(\omega\) 
increases. To illustrate this trend, we show the performance of each 
method for the case  \(\omega=0.5\) in Figs.~\ref{SI-fig:tight_bound} and~\ref{SI-fig:finite_efficiency}.

\begin{figure}[htb!]
  \begin{center}
     \includegraphics[scale=0.45]{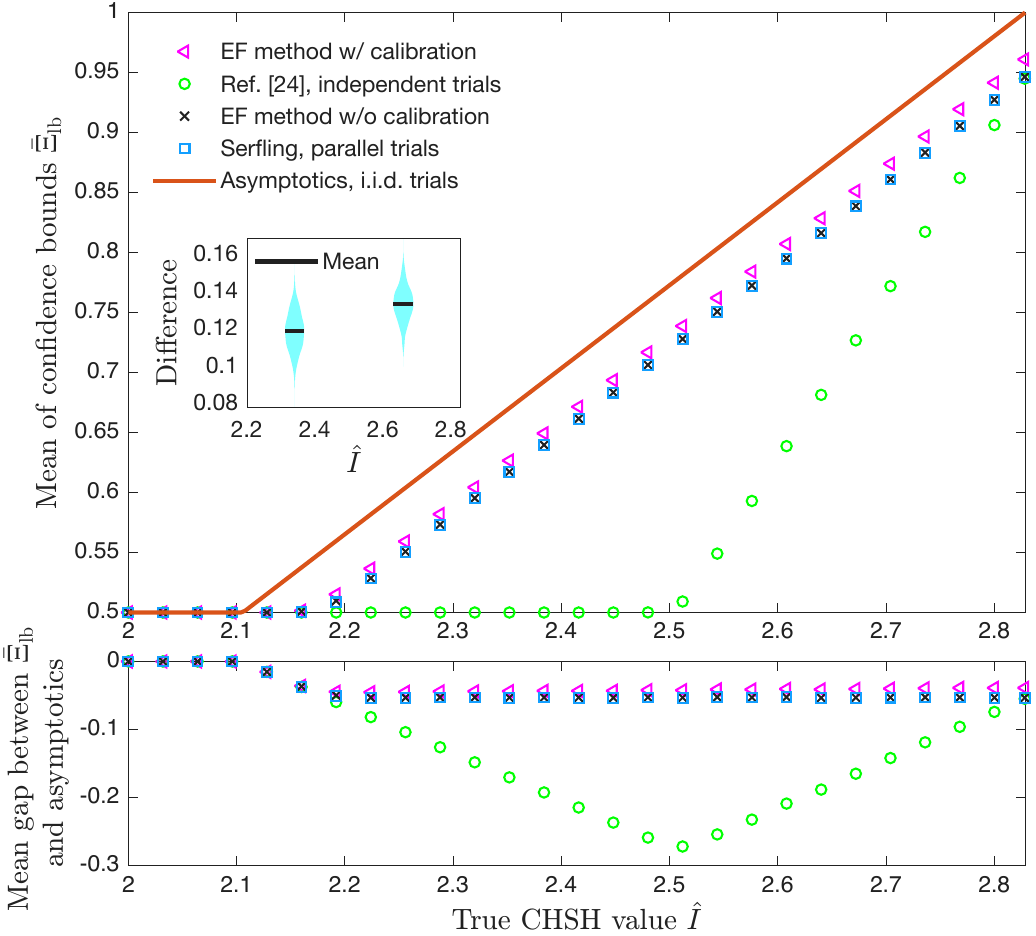}
  \end{center}
  \vspace{-0.5cm}
  \caption{Lower confidence bounds $\bar \Xi_{\lb}$ on the average
    extractability $\bar \Xi$ of the Bell state from the
    past-conditional states generated in the unchecked trials as a
    function of the true CHSH value $\hat I$.  The parameters used for
    the underlying simulations are the same as those specified for
    Fig.~\ref{main-fig:tight_bound} of the main text, except that the
    spot-checking probability is set to \(\omega=0.5\). The confidence
    bounds were obtained at the confidence level
    \((1-\epsilon)=99\SI{\%}\) following the procedure described in
    this section. }
  \label{SI-fig:tight_bound}
\end{figure}

\begin{figure}[htb!]
  \begin{center}
   \includegraphics[scale=0.45]{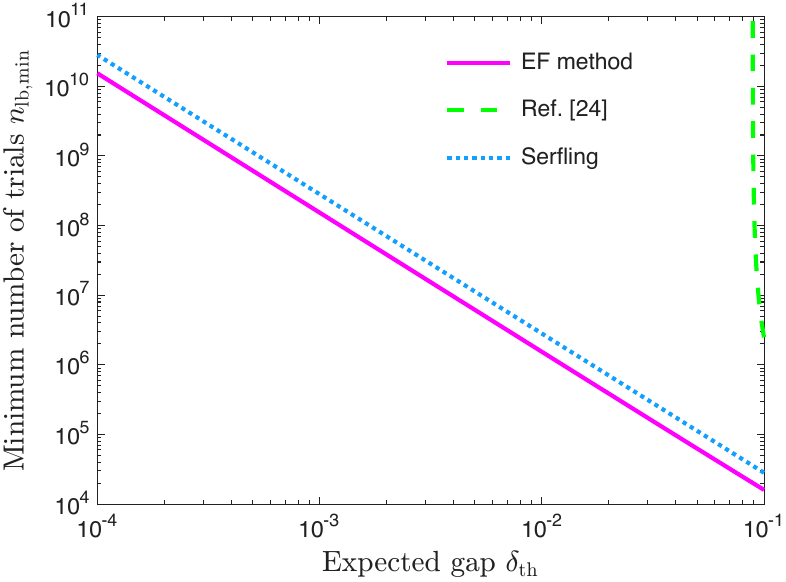}
  \end{center}
  \vspace{-0.5cm}
    \caption{Minimum number of trials $n_{\text{lb}, \min}$ required to 
  ensure that the expected lower confidence bound on $\bar \Xi$ exceeds
  \(\check \Xi-\delta_{\text{th}}\), where \(\check \Xi=0.9111\) is the 
  asymptotic lower bound on $\bar \Xi$ corresponding to a true 
  CHSH value \(\hat{I}=2.7\) 
  and \(\delta_{\text{th}}\) is the varying expected gap. We set the
      confidence level to \((1-\epsilon)=99\SI{\%}\) and assumed i.i.d. trials
      with spot-checking probability $\omega=0.5$.}
  \label{SI-fig:finite_efficiency}
\end{figure}

Finally, we illustrate the performance of each method in the limit
where \(n \to \infty\) while the expected number of spot-checked
trials, given by \(\bar{n}_s=n\omega\), remains constant. In this
limit, the actual number of spot-checked trials follows a Poisson
distribution with mean \(\bar{n}_s\). For consistency with
Fig.~\ref{main-fig:tight_bound} in the main text, we set
\(\bar{n}_s=10^4\).  For numerical computation of confidence bounds
using a sequence of trials, we need to set the total number of trials
\(n\) in the sequence to be sufficiently large but finite. In our
simulation, we set \(n=10^9\) after observing that further increasing
the value of \(n\) did not significantly affect the computed
confidence bounds.  We performed \(N_{\text{rep}}=1000\) independent
simulations of \(n\) trials according to each chosen
\(\hat I \in [2,2\sqrt{2}]\).  Following the procedure described in
this section, from each simulation we computed a confidence bound on
the average extractability using each protocol considered. We then
computed the mean of these confidence bounds for each protocol. The
results are presented in Fig.~\ref{SI-fig:limit_bound}. These
numerical results indicate that in the limit where \(n \to \infty\)
while \(\bar{n}_s=n\omega\) remains constant, the
estimation-factor-based confidence bounds still perform well, yielding
results comparable to those illustrated in
Fig.~\ref{main-fig:tight_bound} of the main text.  This observation is
consistent with the asymptotic behavior analyzed in
Sect.~\ref{sect:constant_omega_n}.  Fig.~\ref{SI-fig:limit_bound} also
shows that in this limit, the method of Ref.~\cite{Gocanin2022}
performs comparably to the estimation-factor method with calibration.
However, we remark that for spot-checking non-binary RVs, the bounds
according to Ref.~\cite{Gocanin2022}, if proven applicable, would
perform significantly worse in general than those obtained via
calibrated estimation factors.

\begin{figure}[htb!]
  \begin{center}
     \includegraphics[scale=0.45]{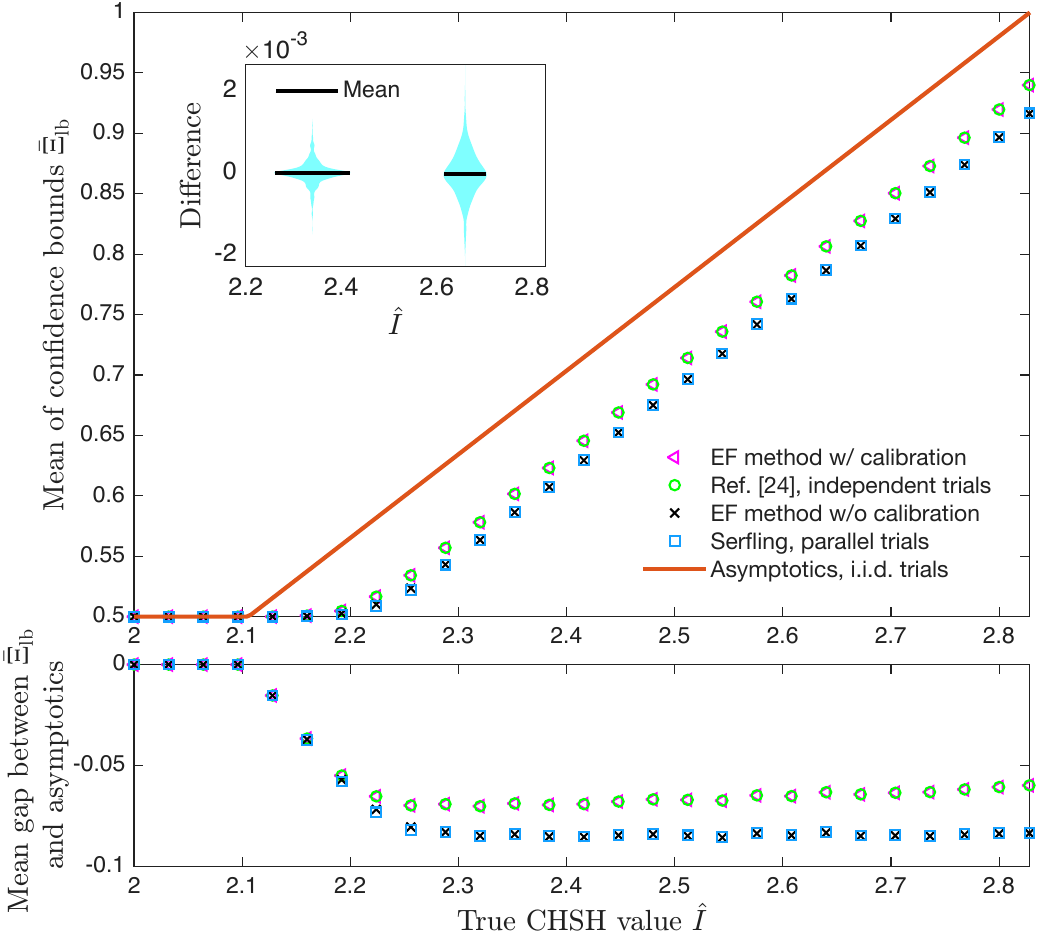}
  \end{center}
  \vspace{-0.5cm}
  \caption{Lower confidence bounds $\bar \Xi_{\lb}$ on the average
    extractability $\bar \Xi$ of the Bell state from the
    past-conditional states generated in the unchecked trials as a
    function of the true CHSH value $\hat I$.  The parameters used for
    the underlying simulations are the same as those specified for
    Fig.~\ref{main-fig:tight_bound} of the main text, except that we
    analyze the limit situation \(n \to \infty\) and \(\omega \to 0\)
    while the expected number of spot-checked trials is kept constant
    at \(\bar{n}_s=n\omega=10^{4}\).  For the plot, we set \(n=10^{9}\)
    after verifying that increasing \(n\) further did not significantly 
    affect the results. The confidence bounds were
    obtained at the confidence level \((1-\epsilon)=99\SI{\%}\)
    following the procedure described in this section.}
  \label{SI-fig:limit_bound}
\end{figure}

\section{Application to QKD}
\label{sect:appls}

Spot-checking strategies are abundant in protocols such as QKD, and
the Serfling inequality is frequently used for this purpose.  As an
example, we consider the QKD protocols analyzed in
Refs.~\cite{Tomamichel2012,Tomamichel2017}. The protocols are
specified in Box 1 of Ref.~\cite{Tomamichel2012} and Table 2 of
Ref.~\cite{Tomamichel2017} . In the latter protocol, the two parties
\(A\) and \(B\) agree on a subset \(J\) of \(k\) trials randomly
selected from a total of \(n\) trials to perform ``parameter
estimation'' by comparing their 0/1-valued measurement outcomes.  Let
\(O_{A,i}\) and \(O_{B,i}\) be the measurement outcomes of \(A\) and
\(B\) in trial \(i\), assuming both parties used the basis
complementary to the key-generation basis.  In the implementations of
the QKD protocols described in
Refs.~\cite{Tomamichel2012,Tomamichel2017}, this complementary basis
is used by both parties for all trials in the randomly selected subset
\(J\).  The error for trial \(i\) is defined as
\(\bar X_{i}=O_{A,i}\oplus O_{B,i}\), which takes the value \(1\) when 
\(O_{A,i}\neq O_{B,i}\) and \(0\) otherwise. Parameter estimation involves
determining whether the average error rate
\(\frac{1}{n-k}\sum_{i\notin J} \bar X_{i}\) is below a threshold
\(\delta\). The protocol aborts if it is not. Let
\(X_i=1-\bar X_{i}\), which takes the value of \(1\) when there is no
error.  The proof of security in
Refs.~\cite{Tomamichel2012,Tomamichel2017} is based on the Serfling
inequality. It amounts to applying the Serfling inequality to
establish a lower confidence bound of at least \((n-k)(1-\delta)\) on
\(\sum_{i\notin J} X_{i}\) with error bound \(\epsilon\), where \(\epsilon\)
is a small security parameter. Instead of applying the Serfling
inequality, one can use the estimation-factor method to establish the
confidence bound.  This has two advantages. First, since estimation-factor-based
bounds are generally tighter than those obtained via the Serfling
inequality, it improves the success probability of the protocol or
allows a higher error-tolerance threshold \(\delta\).  This
improvement is apparent from the comparison of
Eq.~\eqref{eq:serflinglb} with Eq.~\eqref{eq:expgap}, particularly in
the case of interest here, where \(X_{i}\) is highly biased toward
\(1\).  Second, the application of the Serfling inequality requires
the assumption that the trials are parallel or ``as-if'' parallel. See
the input specification in Table 2 of Ref.~\cite{Tomamichel2017} and
the ``commuting measurements'' assumption given before the table. This
assumption can be significantly weakened when using the
estimation-factor method, making it more suitable for practical QKD
implementations in which the trials are generated sequentially.

\ifSubfilesClassLoaded{%
    \bibliographystyle{unsrt} 
    \bibliography{./spotcheck_est} 
}{}

\end{document}
